\newcommand{\tikzxmark}{%
\tikz[scale=0.23] {
 \draw[line width=0.7,line cap=round] (0,0) to [bend left=6] (1,1);
 \draw[line width=0.7,line cap=round] (0.2,0.95) to [bend right=3] (0.8,0.05);
}}
\definecolor{mygreen}{rgb}{0.01, 0.31, 0.59}
\definecolor{myblue}{rgb}{0.01, 0.31, 0.59}
\definecolor{bluecyan}{rgb}{0.27, 0.66, 0.88}
\def\>{\rangle}
\def\<{\langle}
\renewcommand{\v}[1]{\ensuremath{\boldsymbol #1}}
\definecolor{ppblue}{RGB}{46,117,182}
\definecolor{ppred}{RGB}{197, 90, 17}
\newcommand{\bl}[1]{{\color{ppblue}#1}}
\newcommand{\vga}[1]{ \textcolor{blue}{({\tt VGA}: #1)}}
\newcommand{\dg}[1]{ \textcolor{orange}{({\tt DG}: #1)}}
\newcommand{\jcz}[1]{{\color{green!66!black}({\tt JCz}: #1)}}
\theoremstyle{plain}
\newtheorem{thm}{Theorem}
\newtheorem{lem}[thm]{Lemma}
\newtheorem{prop}[thm]{Proposition}
\newtheorem{cor}[thm]{Corollary}
\newtheorem{obs}[thm]{Observation}
\newtheorem{conj}[thm]{Conjecture}
\theoremstyle{definition}
\newtheorem{defn}{Definition}
\newtheorem{const}{Construction}
\begin{document}

\title{Cyclic measurements and simplified quantum state\newline tomography}



\author{Victor Gonz{\'a}lez Avella}
\email{victor.gonzalez.avella@ua.cl}
\affiliation{Departamento de F{\'i}sica, Facultad de Ciencias B{\'a}sicas,
Universidad de Antofagasta, Casilla 170, Antofagasta, Chile}
\orcid{0000-0003-2633-6146}

\author{Jakub Czartowski}
\email{jakub.czartowski@ntu.edu.sg}
\affiliation{Doctoral School of Exact and Natural Sciences, Jagiellonian University, ul. Lojasiewicza 11, 30-348 Kraków, Poland}
\affiliation{Faculty of Physics, Astronomy and Applied Computer Science, Jagiellonian University, 30-348 Kraków, Poland}
\affiliation{School of Physical and Mathematical Sciences, Nanyang Technological University,
21 Nanyang Link, 637371 Singapore, Republic of Singapore}
\orcid{0000-0003-4062-833X}

\author{Dardo Goyeneche}
\email{dardo.goyeneche@uc.cl}
\affiliation{Departamento de F{\'i}sica, Facultad de Ciencias B{\'a}sicas,
Universidad de Antofagasta, Casilla 170, Antofagasta, Chile}
\affiliation{Instituto de F\'isica, Pontificia Universidad Cat\'olica de Chile, Casilla 306, Santiago, Chile}
\orcid{0000-0002-9865-4226}

\author{Karol Życzkowski}
\email{karol.zyczkowski@uj.edu.pl}
\affiliation{Faculty of Physics, Astronomy and Applied Computer Science, Jagiellonian University, 30-348 Kraków, Poland}
\affiliation{Center for Theoretical Physics, Polish Academy of Sciences, ul Lotników 32/46, 02-668 Warszawa, Poland}
\orcid{0000-0002-0653-3639}

\date{May 26, 2025}


\maketitle

\begin{abstract}
Tomographic reconstruction of quantum states plays a fundamental role in benchmarking quantum systems and accessing information encoded in quantum-mechanical systems. 
Among the informationally complete sets of quantum measurements, the tight ones provide a linear reconstruction formula and minimize the propagation of statistical errors. However, implementing tight measurements in the lab is challenging due to the high number of required measurement projections, involving a series of experimental setup preparations. In this work, we introduce the notion of cyclic tight measurements, which allow us to perform full quantum state tomography while considering only repeated application of a single unitary-based quantum device during the measurement stage. This type of measurement significantly simplifies the complexity of the experimental setup required to retrieve the quantum state of a physical system. 
Additionally, we design a feasible setup preparation procedure that produces well-approximated cyclic tight measurements in every finite dimension.

\end{abstract}

\section{Introduction}
Informationally complete quantum measurements play a crucial role in quantum information theory. They provide a physically admissible way to acquire full information concerning a state of a quantum system~\cite{nielsen2002quantum}. In particular, tight informationally complete quantum measurements~\cite{scott2006tight} provide a linear formula to reconstruct any quantum state. However, a common problem in tight measurements is that they cannot be efficiently implemented in the laboratory, in the sense that the amount of physical resources required to realize them grows exponentially with the number of parties. In quantum computing, the number of circuits required to implement a tight measurement for $n$-qubit systems typically grows at least as $2^{2n}$. This number arises from the fact that a $d$-dimensional Hilbert space, denoted as $\mathcal{H}_d$, requires at least $d^2$ rank-one projectors~\cite{scott2006tight}, associated to a tight quantum measurement.

Nonetheless, for a certain class of tight measurements, the experimental setup at the measurement stage is much simpler. For instance, the so-called \emph{cyclic mutually unbiased bases}  
\cite{Chau_2005_unconditionally,gow2007generation,kern2010complete,seyfarth2011construction,seyfarth2014structure,appleby2009properties}, are maximal sets of mutually unbiased bases (MUB) generated through iterations of a single unitary transformation. This means that repeated application of a single quantum circuit is enough to reconstruct the memory state of a quantum computer. 
Moreover, any eigenvector of such unitary transformation is distinguished by the fact that it has the same probability distribution with respect to the set of MUB bases. Such states, so-called \emph{MUB-balanced}~\cite{amburg2014states}, define minimum uncertainty states~\cite{sussman2007discrete,sussman2007minimum} and they are closely related to symmetric informationally complete (SIC) quantum measurements~\cite{appleby2014symmetric} and random-access codes~\cite{casaccino2008extrema}.

Cyclic MUB are known to exist in dimension 2 and in all even prime-power dimension~\cite{kern2010complete}. The usefulness of this remarkable kind of measurements {has been shown for quantum key distribution~\cite{Chau_2005_unconditionally,seyfarth2019cyclic}.} However, beyond $N$ qubit systems, cyclic MUB remain elusive. In particular, as we will show later, cyclic MUB do not exist for a qutrit system, and they cannot be found after extensive numerical searches in dimension~5.
%
This lack of solutions is the main motivation to introduce an extension of cyclic MUB, called \textit{cyclic $t$-designs}, given by complex projective $t$-designs composed by a set of orthonormal bases generated through a repeated iteration of a single unitary transformation. 
While the concept of cyclic $t$-designs is mathematically well-defined, their exact implementation may not always be feasible due to experimental constraints. For this reason, we also introduce a method that allows us to create an approximate cyclic $t$-design by using random Hamiltonians. The most notable advantage of cyclic $t$-designs is that full quantum state tomography can be implemented with a minimal amount of experimental resources, namely repeated use of a single unitary transformation and a measurement apparatus; see Figure~\ref{fig:shelves}.
\begin{figure}[t]
 \centering
 \includegraphics[width=.9\columnwidth]{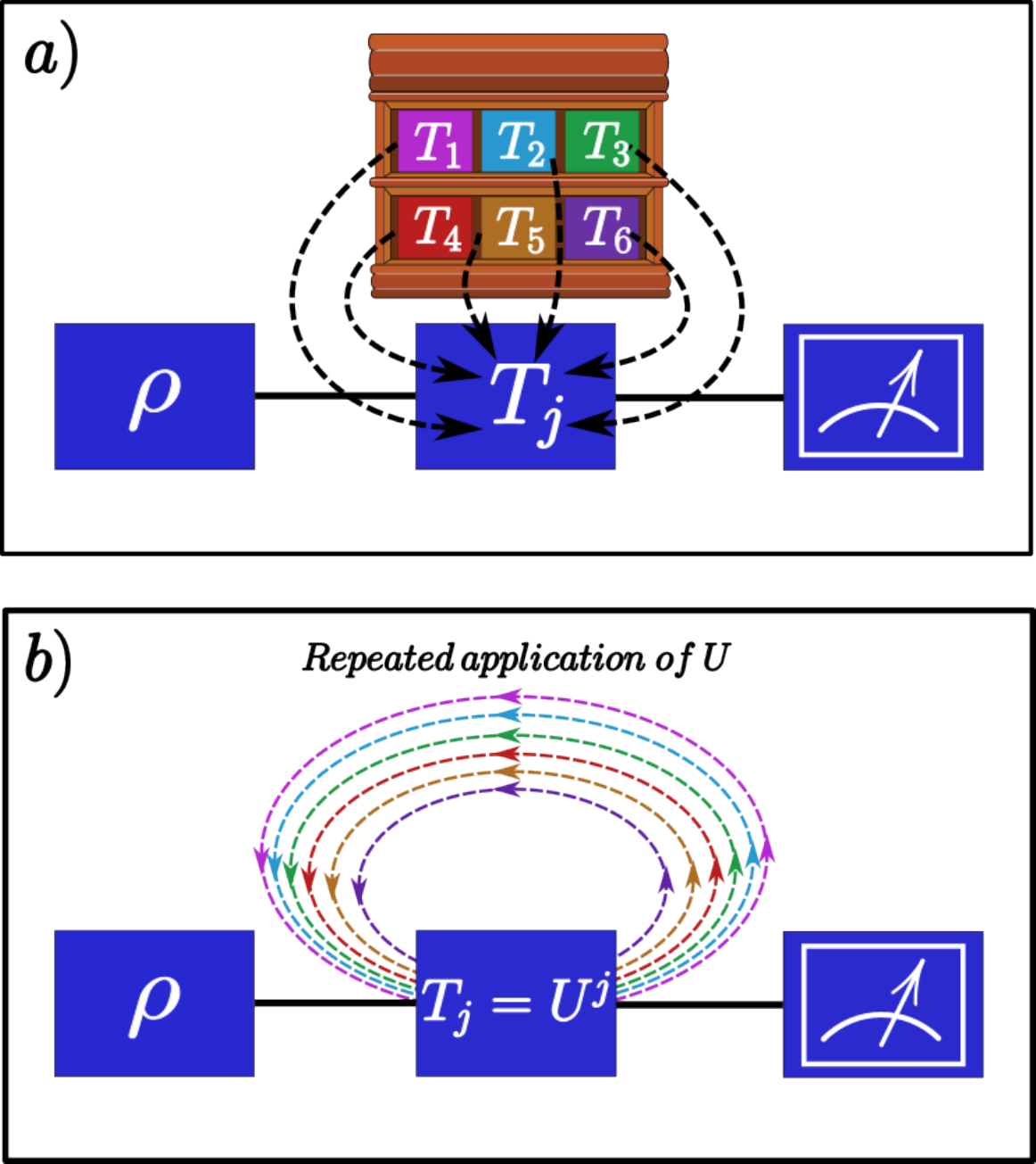}
 \caption{(Color online) Schematic representation of an experimental setup for quantum state tomography, that shows an unknown quantum system $\rho$, a black box operation $T_j$ and a final measurement with respect to the canonical basis. In part \emph{a)}, an experimental setup is required for the application of each $T_j$, whereas in \emph{b)} a single unitary transformation $U$ is applied to the system $j$ times, represented by extra dashed lines, thus producing $T_j = U^j$. {The main practical advantage of cyclic $t$-designs lies in their simplified experimental setup.}
 }
 \label{fig:shelves}
\end{figure} \medskip

This work is organized as follows. In Section~\ref{sec:setting}, we introduce the notions of cyclic MUB, cyclic $t$-designs and all the mathematical ingredients required to understand the work. In Section~\ref{sec:res}, we derive the main results of our paper, including the basic properties and construction of cyclic designs together with numerical investigation and the notion of approximate cyclic designs. We conclude the work with discussion in Section~\ref{sec:discussion}. Proof of the more complex results can be found in  Appendices \ref{app:proof_d2} to \ref{app:num}.






\section{Setting the Scene}\label{sec:setting}

The most general kind of measurements in quantum mechanics are given by positive operator valued measures (POVM), given by sets of positive semidefinite operators that sum up to the identity. Within these sets, the so-called \emph{informationally complete}, i.e., spannings the entire Hilbert space, are suitable to univocally reconstruct any quantum state. There are two essential properties we aim for when designing an experimental tomographic scheme: \textit{minimal propagation of statistical errors} and \textit{ease of experimental setup}. Minimizing error propagation leads us to the concept of tight informationally complete quantum measurements~\cite{scott2006tight}, while a simple experimental setup often suggests that only a limited set of natural measurements can be practically implemented.   


Along this line, compressed sensing techniques provide a way to reconstruct rank-$r$ quantum states with high probability from $O(rd\log^2 d)$ measurement settings that are locally applied~\cite{gross2010quantum}. Furthermore, any nearly pure quantum state can be reconstructed from the statistics of five measurement bases in any dimension~\cite{goyeneche2015five}. Also, $n$-qubit pure states can be reconstructed from $mn+1$ fully separable measurement bases, for any $m\geq2$, where $m$ can be increased to improve the fidelity of the reconstruction~\cite{pereira2022scalable}. 

On the other hand, the minimization of statistical errors propagation is satisfied by tight informationally complete quantum measurements, equivalent to the mathematical notion of \emph{complex projective $2$-designs}~\cite{hoggar1982t}. Interestingly, there is a lower bound for the average of entanglement in states that define tight quantum measurements, established for bipartite~\cite{wiesniak2011entanglement} and multipartite~\cite{czartowski2018entanglement} systems, implying that quantum entanglement is a fundamental resource for generating tight measurements. 

The aim of the present work consists in introducing a special class of tight quantum measurements, composed by sets of orthonormal bases, that are simple to implement in a laboratory, in the sense that all the measurement bases can be generated through powers of a single unitary transformation. From an experimental point of view, this implies that a single quantum device, iteratively applied before reaching a measurement apparatus, is sufficient to prepare the measurement stage, with the additional advantage of minimizing the propagation of statistical errors.\medskip

From now on, unless explicitly stated otherwise, we will consider greek indices going from $1$ to $d$, corresponding to the dimensionality of the underlying Hilbert space, and latin indices numbering the objects in questions, i.e. vectors or basis, thus going either from $1$ to $N$ or from $0$ to $k$.

 Let us start by recalling some basic definitions.

\begin{defn}[Mutually Unbiased Bases~\cite{ivonovic1981geometrical}]
Two orthonormal bases $\{\ket{\varphi_\alpha}\}_{\alpha=1}^{d}$ and $\{\ket{\psi_\beta}\}_{\beta=1}^{d}$ defined on a $d$-dimensional Hilbert space $\mathcal{H}_d$ are unbiased if $|\bra{\varphi_\alpha}\ket{\psi_\beta}|^2=\frac{1}{d}$, for all $\alpha,\beta=1,\dots,d$. A set of $m$ orthonormal bases are mutually unbiased (MUB) if they are pairwise unbiased.
\end{defn}
It is known that at most $d+1$ MUB exist in dimension $d$~\cite{ivonovic1981geometrical}. This upper bound is achieved in every prime~\cite{ivonovic1981geometrical} and prime power~\cite{wootters1989optimal} dimension $d$, whereas the question remains unknown in any other composite dimension, starting from $d=6$~\cite{brierley2009constructing}. There are several inequivalent constructions of maximal sets of MUB in prime power dimensions~\cite{wootters1989optimal, Bandyopadhyay2002, Chaturvedi2002a, Chaturvedi2002b, Klappenecker2003, Archer2005, Planat2005, Kibler2006} and a few constructions of small sets of MUB in other cases~\cite{grassl2004sic,bengtsson2007mutually,goyeneche2015mutually}. Further details about existence and construction of MUB can be found in a review published by \textit{Durt et al.}~\cite{durt2010mutually}. For our purposes, it is enough to restrict our attention to a cyclic procedure to generate maximal sets of MUB, defined as follows~\cite{seyfarth2011construction}.

\begin{figure}[t]
 \centering
 \includegraphics[width=.8\columnwidth]{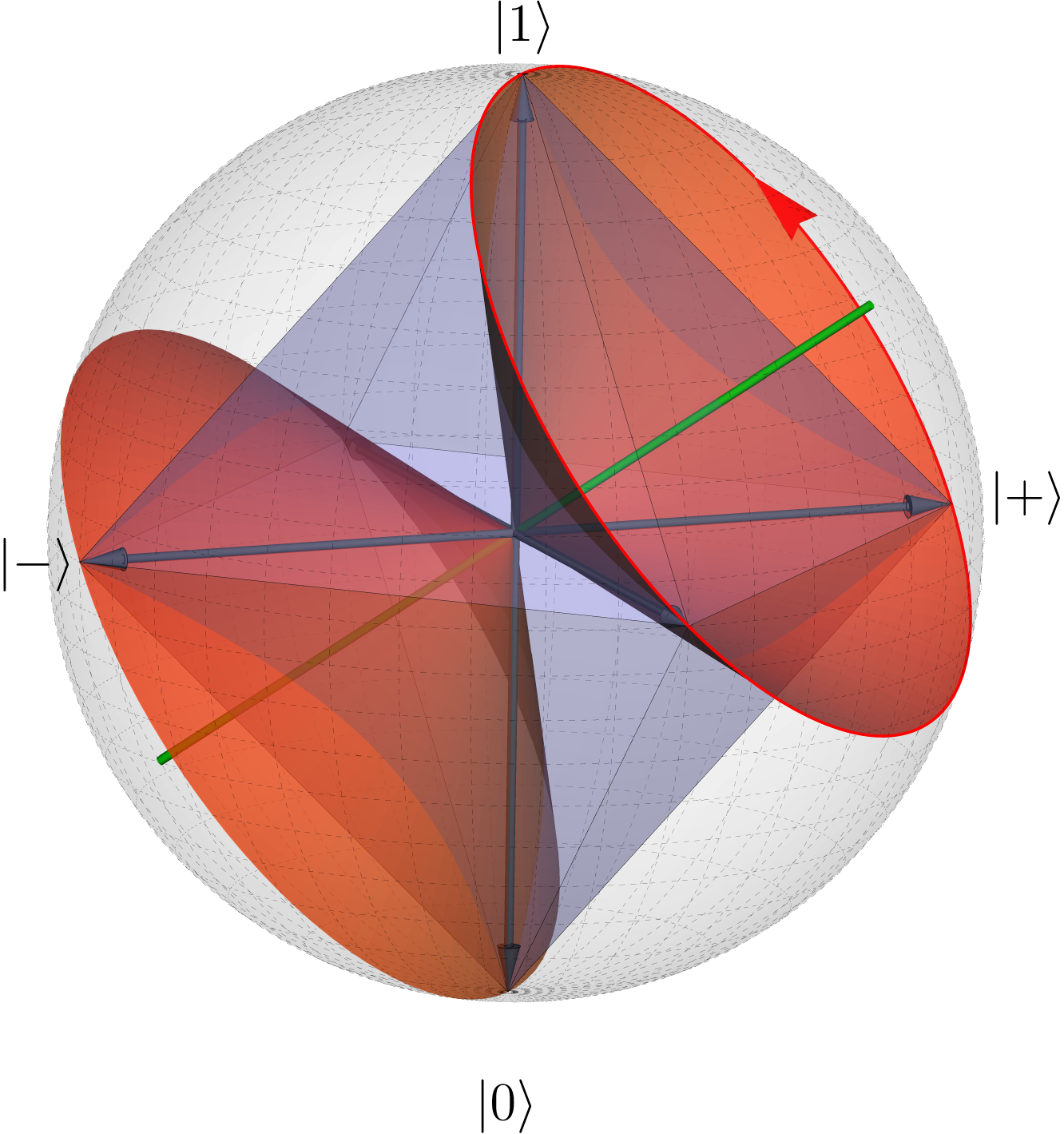}
 \caption{\emph{The cyclic MUB for a single qubit from Bloch ball geometry}: standard MUB consist of three bases, $\qty{\ket{0},\ket{1}}$, $\qty{\ket{-},\ket{+}}$ and $\qty{\ket{\odot},\ket{\otimes}}$, corresponding to pairs of vectors along $Z$, $X$ and $Y$ axes in the Bloch ball, respectively. The figure generated by the two orange cones looks similar a diabolo juggling prop. As all the states lie on a common cone (orange \textit{diabolo} shape), any of them can be transformed into any other by rotating around axis in the $(1,1,1)$ direction (green line) by angle equal to a multiple of $2\pi/3$. This corresponds to a unitary $U_1$ defined in~\eqref{U1}, with powers generating the MUB.
 }
 \label{fig:rot_MUB}
\end{figure}

\begin{defn}[Cyclic MUB]\label{cyclicMUB}
A unitary matrix $U$ of order $d$ generates MUB cyclically if the columns of the matrices \mbox{$U^0,U,U^2,\dots,U^d$} form a maximal set of $d+1$ MUB, where the upper index denotes matrix power.
\end{defn}
Known cyclic MUB define closed cycles, in the sense that $U^{d+1}=U^{0}=\mathbb{I}$.
The existence of cyclic MUB for a single qubit system can be easily visualized in the Bloch sphere. To this end, it is enough to restrict our attention to one Bloch vector of each basis, denoted as $\vec{r}_1$, $\vec{r}_2$ and $\vec{r}_3$. Cyclic MUB requires the existence of a single rotation $R$ in the real three-dimensional space such that $R\vec{r}_1=\vec{r}_2$, $R\vec{r}_2=\vec{r}_3$ and $R\vec{r}_3=\vec{r}_1$. This is simple to achieve when the rotation axis $\eta$ is chosen as $\vec{\eta}=\frac{1}{\sqrt{3}}(\vec{r}_1+\vec{r}_2+\vec{r}_3)$, see Figure~\ref{fig:rot_MUB}. In such case, the related unitary transformation in the complex Hilbert space is given by an enphased Hadamard matrix~\cite{seyfarth2011construction}:
\begin{equation}\label{U1}
U_1=\frac{1}{\sqrt{2}}\left(\begin{array}{rr}
1&-i\\
1&i
\end{array}\right).
\end{equation}

Remarkably, this solution provides a fundamental ingredient to construct cyclic MUB for $n'$ qubits when $n'=2^n$ with $n\in \mathbb{N}$, by following a simple recipe~\cite{seyfarth2019cyclic}:
\begin{const}\label{const_cMUB}
A maximal set of cyclic MUB for $n'=2^n$ qubits is generated by powers of
$U_{2^n}=2^{\frac{n-1}{2}}\,\operatorname{diag}[U_{2^{n-1}}]\qty(U_{2^{n-1}}\otimes U_{2^{n-1}})$, for any $n\in\mathbb{N}$, where $2^{\frac{n-1}{2}}\mathrm{diag}[U_{2^{n-1}}]$ is the diagonal unitary matrix of order $2^{n-1}$, whose main diagonal entries are defined by the concatenation all the rows of $U_{2^{n-1}}$, where $U_1$ is defined in (\ref{U1}). 
\end{const} 
The advantages of Construction~\ref{const_cMUB} are immediately recognized: \textit{(i)} It does not require the consideration of Galois fields or any other refined technique to generate a maximal set of MUB \textit{(ii)}~It suggests a simple experimental setup, as a single unitary is required to construct the full set of measurement bases. However, a construction method of such bases for any prime power dimension remains elusive, thus restricting its possible applications in quantum information theory. 

Following the example set down above, let us introduce a key notion for this work:

\begin{defn}[Complex projective $t$-design~\cite{hoggar1982t}]
Let $f_t(\ket{\psi})$ be a balanced polynomial function of order at most $t$ in both components of the state $\qty(\ket{\psi})_i$ from the space $\mathcal{H}_d$, and its conjugate $\qty(\bra{\psi})_i$. A set of pure states $\qty{\ket{\psi_i}\in\mathcal{H}_d}_{i=1}^m$ is called a \emph{complex projective $t$-design} if the average of any polynomial $f_t$ of degree at most $t$ over the set of states equals its average over the entire space,
 \begin{equation}
 \sum_{i=1}^m w_if_t(\ket{\psi_i}) = \int_{\mathcal{H}_d} f_t(\ket{\psi}) \dd{\psi},
 \end{equation}
 where the integral is taken over unitarily invariant
 measure $d\mu$ induced by the Haar measure on $\mathcal{U}(d)$ and all weights are set to $w_i = 1/m$. For any other set of weights $\qty{w_i}$, we call it a \emph{weighted complex projective $t$-design}.
\end{defn}
Equivalently, one can define a $t$-design in terms of averages of $t$-copy states,
\begin{equation}
 \frac{1}{N}\sum_{i=1}^N \op{\psi_i}^{\otimes t}= \int_{\mathcal{H}_d} \op{\psi}^{\otimes t} \dd{\psi},
\end{equation}
which is natural once we consider that the $t$-copy state $\op{\psi}^{\otimes t}$ contains all possible homogeneous monomials in the components of the state $\op{\psi}$ of degree $t$.

Furthermore, any set of $m$ vectors $\qty{\ket{\psi_i}}$ in dimension $d$ satisfies a family of inequalities, known as \textit{Welch bounds}~\cite{welch1974lower}:

\begin{equation}\label{WB}
 \frac{1}{N^2}\sum_{i,j=1}^N \abs{\ip{\psi_i}{\psi_j}}^{2t} \geq \frac{1}{\binom{d+t-1}{t}},
\end{equation}
for any $t\in\mathbb{N}$. The left hand side of \eqref{WB} is known as the \emph{frame potential}. Also, note that inequality \eqref{WB} is saturated if and only if the set $\qty{\ket{\psi_i}}$ defines a complex projective $t$-design~\cite{scott2006tight}. 

Complex projective $t$-designs for $t=2$ find applications in quantum state tomography~\cite{renes2004symmetric,lima2011experimental,bent2015experimental}, entanglement detection~\cite{chen2015general} and device-independent tests of quantum measurements~\cite{dall2017device}. Furthermore, its extension to unitary matrices, called unitary $t$-designs~\cite{dankert2009exact}, is a natural tool to implement quantum process tomography~\cite{scott2008optimizing}. The aforementioned full sets of $d+1$ MUB in dimension $d$ provide a canonic example of such measurements -- a fact that is easily proved using the Welch bound for $t=2$.

Similarly to complex projective $t$-designs, one can define $t$-designs for $d$-dimensional probability distributions comprising a $(d-1)$-dimensional simplex $\Delta_d$ with $d$ extreme points. 
\begin{defn}[Simplex $t$-design \cite{baladram2018explicit}]
A set of points $\vb{P} = \qty{\vb{p}_i\in\Delta_d}$ is called a $t$-design in the simplex if the average of any polynomial of order at most $t$ over $\vb{P}$ equals the average over the whole simplex when considering the flat Lebesgue measure
\begin{equation*}
 \ev{f}_{\Delta_d} \equiv \int_{\Delta_d} f(\vb{p}) \dd{p} = \frac{1}{\abs{\vb{P}}}\sum_{\vb{p}\in\vb{P}} f(\vb{p}) \equiv \ev{f}_{\vb{P}},
\end{equation*}
for every polynomial $f$ of degree at most $t$ in the entries of the probability distribution.
\end{defn}

The above two concepts are particular examples of a general notion of design, originally introduced as \emph{averaging sets} in~\cite{SEYMOUR1984213}.
%
This idea goes back all the way to Gaussian quadratures -- a notion used for numerical integration of continuous functions using a finite set of points~\cite{gauss1815methodus}.

In particular, in order to verify whether a given set $\vb{P}$ is a simplex design, it is enough to evaluate it over the monomial basis, e.g., for $2$-designs:
\begin{align*}
 \forall_{\alpha,\beta\in\qty{1,\hdots,d}} && \ev{p_\alpha}_{\Delta_d} & = \ev{p_\alpha}_{\vb{P}}, \\
 && \ev{p_\alpha^2}_{\Delta_d} & = \ev{p_\alpha^2}_{\vb{P}}, \\
 && \ev{p_\alpha p_\beta}_{\Delta_d} & = \ev{p_\alpha p_\beta}_{\vb{P}}.
\end{align*}
Indeed, from considering linear combinations, one can construct any polynomial of order $2$. A general formula for any monomial with arbitrary coefficients $\kappa_\alpha\in\mathbb{N}$ can be given in terms of generalised Beta function $\text{B}(x)$ \cite{baladram2018explicit, czartowski2025comment},
\begin{equation}\label{eq:general_simplex_av}
    \ev{\prod_{\alpha=1}^d p_\alpha^{\kappa_\alpha}}_{\Delta_d} \!\!\!\!= (d-1)!\,\text{B}\qty(\kappa_1+1,\hdots,\kappa_{d}+1).
\end{equation}
We are now in position to introduce a central notion of our work.
\begin{defn}[Cyclic measurements]
A \textit{cyclic measurement} is a collection of $k$ orthonormal bases generated through powers of a single unitary matrix, $U\in \mathcal{U}(d)$, and complemented by the computational basis. The aforementioned constellation of $(k+1) d$ vectors forming the cyclic measurements is given by the columns of the following matrices: $\mathbb{I},U,U^2,\dots,U^k$. 
\end{defn}

In particular, a cyclic measurement composed of $(k+1)d$ vectors that form a complex projective $t$-design is called a cyclic $t$-design. Here, it is simple to show that a necessary condition for the existence of a cyclic $2$-design is $k\geq d$, where the saturation of the inequality occurs for \emph{cyclic MUB}, introduced in Definition \ref{cyclicMUB}. Cyclic $t$-designs are useful in practice because their implementation requires the ability to prepare a single unitary transformation in the lab, whereas many unitary transformations are typically required in general. Furthermore, these measurements are informationally complete, in the sense that any quantum state can be reconstructed from the resulting statistical data. For these reasons, from now on we restrict our attention to cyclic $t$-designs. Here, we emphasize that constructions for $2$-designs composed by orthonormal bases, beyond MUB, already exist~\cite{mcconnell2007efficient,roy2007weighted,bodmann2016achieving,li2019efficient}. However, it is hard to check whether one of such designs admits a cyclic structure.\\

Let us start by noting an important consequence of cyclic designs. For a given set of quantum measurements, one can define uncertainty principles. The states minimizing uncertainty principles are called minimum-uncertainty~\cite{sussman2007discrete} or maximally certain states~\cite{oppenheim2010uncertainty}. These states play a well-known relevant role for the harmonic oscillator but also in Bell inequalities~\cite{oppenheim2010uncertainty}, SIC-POVM~\cite{appleby2014symmetric}, among others. In general, finding a state of minimal uncertainty is a difficult task.

The most general solution of cyclic $t$-designs for a qubit system, including cyclic MUB, is presented in Appendix~\ref{app:proof_d2}, and can be visualized as an inscribed regular prism or antiprism for even and odd values of $k+1$, respectively. Additionally, a general solution of cyclic $2$-designs in dimension 3 can be found in Appendix~\ref{app:cyclic_d3}. 
In general, the problem of constructing cyclic $t$-designs becomes apparently intractable, rendering it unlikely to be solved in its full generality.

As we show later, the following definition helps to link the notions of  complex projective $t$-designs with $t$-designs in the probability simplex. 
\begin{defn}[Decoherence of a quantum state]
 Consider a quantum pure state $\ket{\psi}\in\mathcal{H}_d$ and an orthonormal basis $W = \qty{\ket{\varphi_\alpha}}_{\alpha=1}^d$. One defines \emph{a decoherence of a state $\ket{\psi}$ with respect to the basis $W$} as a $d$-point classical probability distribution $
 \vb{p} = \qty{p_\alpha = \abs{\ip{\psi}{\varphi_\alpha}}^2}_{\alpha=1}^d$ \cite{czartowski2018entanglement}.
\end{defn}
For simplicity, throughout this work we consider decoherence with respect to the computational basis for all the cases. Note that this convention does not imply any restriction on the set of decohered states. On the other hand, note that decoherence of a pure state $\ket{\psi}$ is in fact equivalent to taking the main diagonal of a rotated state, $\vb{p} \equiv \operatorname{diag}\qty(W\op{\psi}W^\dagger)$. 

In what follows,
we will mostly restrict our attention to a special class of cyclic $2$-designs, which are closely related to two well-known mathematical tools: \textit{difference sets} and \textit{unistochastic matrices}. Let us start with the former~\cite{singer1938theorem}.

\begin{defn}[Difference set~\cite{hall1956survey}]\label{def:diff_set}
 Consider a $K$-element set $D$ of integers modulo $\nu$, $D\subset\mathbb{Z}_\nu$. 
 Such a set is called a $(\nu,\,K,\,\lambda)$-difference set if each element of a multiset  
 \begin{equation}
 \mathcal{D}_D = \qty{x - x'\mod{\nu}\mid x,x'\in D,x\neq x' },
 \end{equation}
 has multiplicity $\lambda$.
\end{defn}
%
In this work, due to their relevance for the topic of cyclic measurements, we focus on difference sets with $\lambda=1$. Some properties of this kind of difference sets can be found in~\cite{mann1952some}. Note that in such a case one finds the order of the modulo to be bounded from below, $\nu\geq K(K-1)+1$. In particular, sets saturating this bound are called \emph{perfect circular Golomb rulers} and the set of differences they generate is equal to $\mathcal{D}_D = \mathbb{Z}_{K(K-1)+1}\setminus\qty{0}$. For instance, the set $D=\{1,2,4\}$ is a $(7,3,1)$-difference set composed by $K=3$ elements, such that its differences, modulo $\nu=K(K-1)+1 = 7$, produce the subset of modulo subgroup of integers $\mathbb{Z}_7$, each of them occurring exactly $\lambda=1$ times, see Table~\ref{table1}.
\begin{table}[t]
\begin{center}
\begin{tabular}{ccr|c}
\multicolumn{3}{c|}{Differences}&$\mathrm{mod}\,7$\\
\hline
$1-2$&=&$-1$&$6$\\
$2-1$&=&$1$&${\color{red}\underline{\mathbf{1}}}$\\
$1-4$&=&$-3$&${\color{red}\underline{\mathbf{4}}}$\\
$4-1$&=&$3$&$3$\\
$2-4$&=&$-2$&$5$\\
$4-2$&=&$2$&${\color{red}\underline{\mathbf{2}}}$
\end{tabular}
\end{center}
\caption{Sets of $k$ integer numbers such that its differences modulo $v$ produce all integers in a set $D$, each of them occurring $\lambda$ times are called difference sets. Here, we show the difference set $D=\{1,2,4\}$, characterized by parameters $(v,k,\lambda)=(7,3,1)$. The suitable differences are highlighted in red (underlined in black copy).}
\label{table1}
\end{table}
Constructions of such sets for prime $K$ can be found in~\cite{singer1938theorem,ImreZ1993}. Furthermore, difference sets for any $K$ can be found based on Mian-Chowla sequence~\cite{Mian_Chowla_1944, Sloane_Plouffe_1995}, which is a self-generating set $\qty(a_1,a_2,\hdots)$ that is generated by a greedy algorithm that sets $a_n$ to the smallest integer such that all the differences $a_i - a_j$ are different for $i<j\leq n$. The first few terms of the sequence are given by
\begin{equation}
 (1,2,4,8,13,21,31,45,\hdots). \label{eq:mian_chowla}
\end{equation} 
An upper bound for each element of the sequence is given by $a_n \leq n^3/6 + O(n^2)$, where the approximation $a_n\approx n^3/\log^2(n)$ is conjectured. This sequence provides a difference set for $K = n$ and $\nu \geq 2a_n + 1$ for every $n$.
In general, difference sets are closely related to further combinatorial notions like Hadamard matrices, orthogonal arrays and linear codes, see the book of Hedayat \textit{et al.} for further details~\cite{hedayat1999orthogonal}. 

We will now proceed to recall two further notions relevant to the cyclic $t$-designs -- bi and unis\-tochastic matrices~\cite{rajchel2018robust}.
\begin{defn}[Bistochastic matrix]
  A matrix $B\in\mathbb{R}^{d\times d}$ is called \textit{bistochastic} (doubly stochastic) if
 \begin{equation}
 B_{\alpha\beta}\geq 0,\quad\sum_{\alpha=1}^d B_{\alpha\beta} = \sum_{\beta=1}^d B_{\alpha\beta} = 1.
 \end{equation}
\end{defn}
\begin{defn}[Unistochastic matrix]
 A bistochastic matrix $B\in\mathbb{R}^{d\times d}$ is called \textit{unistochastic} if there exists a unitary matrix $U$ of size $d$ such that
 \begin{equation}
 B_{\alpha\beta} = \abs{U_{\alpha\beta}}^2.
 \end{equation}
\end{defn}
For instance, the flat matrix $B_{\alpha\beta}=1/d$ is unistochastic for any $d\in\mathbb{N}$ due to the existence of the Fourier matrix $F_{\alpha\beta}=\frac{1}{\sqrt{d}}e^{2\pi i (\alpha-1)(\beta-1)/d}$. Unistochastic matrices play an important role in particle physics. For example, to find the unitary \textit{Cabibbo-Kobayashi–Maskawa} matrix 
\cite{cabibbo1963unitary,kobayashi1973cp} from its related unistochastic matrix is a challenging problem. Amplitudes of this matrix contain the complete information about weak decays that change the flavour, determined by the \textit{weak universality}~\cite{cabibbo1961electron}. Unistochastic matrices are also important for studying different mixtures of density matrices~\cite{bengtsson2003mix}, and for constructing equiangular tight frames~\cite{goyeneche2017equiangular}. It is known that any bistochastic matrix of order 2 is also unistochastic, whereas necessary and sufficient conditions for a bistochastic matrix of order 3 to be unistochastic are known~\cite{jarlskog1988unitarity}. For any higher order some necessary conditions are known; see e.g.~\cite{dictua2006separation}, but the full characterization remains open.

A simplification of this problem arises when considering circulant bistochastic matrices, i.e. when every row of the matrix is given by shifting to the right the previous row. However, even in such case the full problem is challenging.  A partial solution to this problem was recently found, which solves the circulant case when the bistochastic matrix has only two different entries~\cite{rajchel2018robust}.  Before showing this result, let us introduce some definitions.

A complex Hadamard matrix is a square matrix $H$ of order $d$ satisfying $HH^{\dag}=d\,\mathbb{I}$ and $\abs{H_{\alpha\beta}}=1$, where $H^{\dag}$ denotes the adjoint of $H$. 
 A matrix 
$H$ is called {\sl robust}
\cite{rajchel2018robust},
if $\Pi_2 H\Pi_2$ is a 2-dimensional complex Hadamard matrix,
 where $\Pi_2$ is any projection onto a 2-dimensional space spanned by two vectors of the computational basis, i.e. $\Pi^{\alpha\beta}_2=\op{\alpha}+\op{\beta}$, $\alpha\neq \beta$. This is equivalent to say that 
\begin{equation}
 \mqty(H_{\alpha\alpha} & H_{\alpha_\beta} \\ H_{\beta\alpha} & H_{\beta\beta}),
\end{equation}
is a complex Hadamard matrix, for any $\alpha\neq \beta$.

We are now in a position to state the following result.

\begin{lem} [\cite{rajchel2018robust}]
\label{lem:rob_had}
Let $B$ be a bistochastic circulant matrix of order $d$, defined by $B_{\alpha\beta}=a(1-\delta_{\alpha\beta})+b\,\delta_{\alpha\beta}$, where $\delta_{\alpha\beta}$ is the Kronecker $\delta$ function. If there exists a robust Hadamard matrix of order $d$ then $B$ is also unistochastic, for any $a,b\geq0$ such that $a^2+b^2=1$.
\end{lem}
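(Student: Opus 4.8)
The plan is to exhibit the witnessing unitary $U$ explicitly, by rescaling the diagonal and the off-diagonal of the robust Hadamard matrix $H$ independently. First I would set
\[
U_{\alpha\beta} = \left[\sqrt{b}\,\delta_{\alpha\beta} + \sqrt{a}\,(1-\delta_{\alpha\beta})\right] H_{\alpha\beta},
\]
equivalently $U = \sqrt{a}\,H + (\sqrt{b}-\sqrt{a})\,D_H$ with $D_H = \diag{H_{11},\dots,H_{dd}}$ the diagonal part of $H$. Because $\abs{H_{\alpha\beta}}=1$, this gives $\abs{U_{\alpha\alpha}}^2 = b$ and $\abs{U_{\alpha\beta}}^2 = a$ for $\alpha\neq\beta$, so that $\abs{U_{\alpha\beta}}^2 = B_{\alpha\beta}$ holds term by term. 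The whole burden then shifts to proving that $U$ is unitary, since by definition this is exactly what certifies $B$ as unistochastic.

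Next I would check that the rows of $U$ form an orthonormal system. The squared norm of row $\alpha$ is $\sum_\beta \abs{U_{\alpha\beta}}^2 = b + (d-1)a$, which equals $1$ precisely by the row-sum (bistochasticity) constraint tying $a$ and $b$. For two distinct rows $\alpha\neq\gamma$ I would expand $\sum_\beta U_{\alpha\beta}\overline{U_{\gamma\beta}}$ into its four pieces: the pure Hadamard contribution $a\,(HH^\dagger)_{\alpha\gamma}$ vanishes because $HH^\dagger = d\,\I$; the piece quadratic in the diagonal correction is supported only on $\alpha=\beta=\gamma$ and so drops out; and the two remaining cross pieces combine into
\[
\sqrt{a}\,(\sqrt{b}-\sqrt{a})\left[H_{\alpha\gamma}\overline{H_{\gamma\gamma}} + H_{\alpha\alpha}\overline{H_{\gamma\alpha}}\right].
\]
Thus orthogonality of distinct rows reduces to the vanishing of this single bracket.

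The decisive step is that this bracket is exactly the inner product of the two rows of the $2\times2$ submatrix $\mqty(H_{\alpha\alpha} & H_{\alpha\gamma} \\ H_{\gamma\alpha} & H_{\gamma\gamma})$, which robustness guarantees to be a complex Hadamard matrix, hence with orthogonal rows; so the bracket is zero for every $\alpha\neq\gamma$ and $U$ is unitary. I expect the only genuine subtlety to be the phase bookkeeping in this cross term: choosing both prefactors $\sqrt{a}$ and $\sqrt{b}-\sqrt{a}$ real is what makes the two halves assemble into precisely the robustness relation $H_{\alpha\alpha}\overline{H_{\gamma\alpha}} + H_{\alpha\gamma}\overline{H_{\gamma\gamma}}=0$, rather than some twisted combination (a complex rescaling would instead demand that the imaginary part of the product of the prefactors vanish). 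Notably, the hypothesis on $H$ enters solely through these $2\times2$ orthogonality relations, so nothing about the robust Hadamard matrix beyond \emph{robustness} itself is required.
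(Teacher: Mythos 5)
Your proof is correct and is essentially the paper's own argument: the paper constructs the same witnessing unitary by rescaling the diagonal and off-diagonal parts of the robust Hadamard matrix $H$ separately and invokes robustness for unitarity, merely dismissing the verification as ``simple to show,'' which you carry out in full (the $2\times 2$ row-orthogonality relations from robustness killing the cross terms, $HH^\dagger = d\,\mathbb{I}$ killing the bulk term). Your write-up is in fact more careful than the paper's: the paper's formula $U=\sqrt{a}\,\mathfrak{D}+\sqrt{b}\,(H-\mathfrak{D})$ has $a$ and $b$ swapped relative to the stated $B$ (whose diagonal is $b$ and off-diagonal is $a$), and the normalization that actually makes the rows unit vectors is the bistochasticity constraint $b+(d-1)a=1$ that you use, not the condition $a^2+b^2=1$ printed in the lemma.
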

\begin{proof}
It is simple to show that if $H$ is a robust complex Hadamard matrix of order $d$ then
$U=\sqrt{a}\mathfrak{D}+\sqrt{b}(H-\mathfrak{D})$ is unitary, where $\mathfrak{D}$ is a diagonal matrix such that $\mathfrak{D}_{\alpha\alpha}=H_{\alpha\alpha}$, provided that $a,b\geq0$ and $a^2+b^2=1$. Thus, $B_{\alpha\beta}=|U_{\alpha\beta}|^2$, and $B$ is unistochastic.
\end{proof}
An important geometrical interpretation of this result is that robust complex Hadamard matrices define rays composed entirely out of unistochastic matrices, within the larger space of bistochastic matrices, the so-called \textit{Birkhoff polytope}.
Here, note that matrix $B$ is circulant, whereas the underlying matrix $U$, defined in the proof of Lemma \ref{lem:rob_had}, is not necessarily circulant. Robust Hadamard matrices exist in infinitely many dimensions, and they are related to well-known classes of matrices such as symmetric conference matrices. There is a further relevant class of matrices, closely related to robust Hadamards. A Hadamard matrix $H$ is called \textit{skew} if $H+H^T=2\mathbb{I}$, where $T$ denotes transposition. It is known that any skew Hadamard matrix is robust, see Lemma 2.6 in \cite{rajchel2018robust}. Furthermore, any robust Hadamard matrix is sign equivalent to a skew Hadamard matrix, meaning that these matrices differ at most in sign changes applied either to rows or columns. A survey about the existence of skew Hadamard matrices can be found here~\cite{koukouvinos2008skew}.

 In Section \ref{sec:res}, we will use Lemma \ref{lem:rob_had} to show important results related to the existence of cyclic $t$-designs.

\section{Results}\label{sec:res}

In the following, we present our results on cyclic $t$-designs. Analytical findings are outlined in Section ~\ref{res1}, where we present some general properties and constructions based on simplex designs and difference sets. In Section~\ref{res2}, we show a method to approximate cyclic $t$-designs through the use of random Hamiltonians. This approach may be useful when dealing with experimental limitations. Here, we discuss how this method allows us to estimate a quantum state by a reconstruction formula. Our numerical findings are shown in Section~\ref{res3}, where we show a simple procedure to find cyclic designs in any dimension $d$ by using numerical optimization.  Some examples of numerical solutions in dimension $d=4$ are shown in Appendix \ref{app:num}.


\subsection{Basic properties of cyclic designs}\label{res1}

Let us start with a simple observation.
\begin{obs} \label{obs:nec_cond}
 Let $W = \qty{\ket{\psi_\alpha}\in\mathcal{H}_d}$ be a complex projective $t$-design. Then, for any orthonormal basis $B = \qty{\ket{b_\alpha}:\braket{b_\alpha}{b_\beta} = \delta_{\alpha\beta}}$, there is a $t$-design in the probability simplex $\Delta_d$, given by its decoherence with respect to the basis $B$, that is,$$\vb{P}_B = \qty{\vb{p}_\beta = \qty{\abs{\braket{\psi_\alpha}{b_\beta}}^2}_{\beta=1}^{d}}_{\alpha}.$$
\end{obs}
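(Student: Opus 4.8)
The plan is to reduce the claim about the simplex design to the defining property of the complex projective $t$-design, using the decoherence map as the bridge. By the remark preceding the statement, it suffices to verify the simplex-design condition on the monomial basis: I must show that for every multi-index $(\kappa_1,\dots,\kappa_d)$ with $s:=\sum_\gamma\kappa_\gamma\le t$ one has $\ev{\prod_\gamma p_\gamma^{\kappa_\gamma}}_{\vb{P}_B}=\ev{\prod_\gamma p_\gamma^{\kappa_\gamma}}_{\Delta_d}$, where $p_\gamma$ denotes the $\gamma$-th simplex coordinate. Linear combinations then reconstruct every polynomial of degree at most $t$.

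First I would write out the empirical average explicitly. Since the point of $\vb{P}_B$ attached to $\ket{\psi_\alpha}$ has coordinates $p_\gamma=\abs{\braket{\psi_\alpha}{b_\gamma}}^2$, the average over $\vb{P}_B$ equals $\frac1N\sum_\alpha\prod_\gamma\abs{\braket{\psi_\alpha}{b_\gamma}}^{2\kappa_\gamma}$. The crucial observation is that the function $g(\ket{\psi})=\prod_\gamma\abs{\braket{\psi}{b_\gamma}}^{2\kappa_\gamma}$ is a balanced polynomial in the amplitudes of $\ket{\psi}$ relative to $B$: it is homogeneous of bidegree $(s,s)$, i.e.\ degree $s\le t$ in the components and degree $s$ in their conjugates. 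To bring it into the exact scope of the $t$-design equation I would homogenize it to bidegree $(t,t)$ by multiplying with $\braket{\psi}{\psi}^{t-s}=1$, which is legitimate because both the design states $\ket{\psi_\alpha}$ and the Haar-distributed states are normalized; this places $g$ among the functions governed by $\frac1N\sum_\alpha\op{\psi_\alpha}^{\otimes t}=\int_{\mathcal H_d}\op{\psi}^{\otimes t}\dd{\psi}$. Hence the empirical average equals the Haar average, $\frac1N\sum_\alpha g(\ket{\psi_\alpha})=\int_{\mathcal H_d}g(\ket{\psi})\dd{\psi}$.

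It then remains to identify this Haar average with the flat simplex average of the monomial. By unitary invariance of the Haar measure I may rotate $B$ to the computational basis, so that $g$ becomes $\prod_\gamma\abs{\psi_\gamma}^{2\kappa_\gamma}$, and the required identity is the standard moment formula $\int_{\mathcal H_d}\prod_\gamma\abs{\psi_\gamma}^{2\kappa_\gamma}\dd{\psi}=(d-1)!\,\prod_\gamma\kappa_\gamma!\,/\,(d-1+s)!$. A short $\Gamma$-function check shows this coincides with $(d-1)!\,\mathrm{B}(\kappa_1+1,\dots,\kappa_d+1)=\ev{\prod_\gamma p_\gamma^{\kappa_\gamma}}_{\Delta_d}$ from \eqref{eq:general_simplex_av}. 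Conceptually this is the statement that the push-forward of the Fubini--Study measure under decoherence $\ket{\psi}\mapsto(\abs{\psi_1}^2,\dots,\abs{\psi_d}^2)$ is exactly the flat Lebesgue measure on $\Delta_d$, since the squared amplitudes of a Haar-random pure state follow a flat Dirichlet distribution. Chaining the two equalities yields the claim for every monomial.

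The main obstacle, and the only step demanding genuine computation, is this last moment matching between the Haar average of the $\abs{\psi_\gamma}^2$ and the flat simplex moments; I would carry it out by the explicit $\Gamma$-function calculation above so that the constant $(d-1)!\,\mathrm{B}(\cdot)$ appears directly, rather than merely invoking the Dirichlet fact. A secondary point to handle carefully is the passage from ``degree exactly $t$'', in which the design property is naturally phrased via $t$-copy states, to ``degree at most $t$'': this is precisely what the homogenization identity $\braket{\psi}{\psi}^{t-s}=1$ resolves, using that a complex projective $t$-design is automatically an $s$-design for all $s\le t$.
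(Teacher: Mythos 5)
Your proof is correct. The paper states this Observation without proof --- it is presented as a ``simple observation,'' with citations to related results in other works --- so there is no argument to compare against line by line; what you give is precisely the standard argument the authors evidently intend: reduce to monomials, apply the design property to the balanced polynomial $\prod_\gamma\abs{\braket{\psi}{b_\gamma}}^{2\kappa_\gamma}$ homogenized by $\braket{\psi}{\psi}^{t-s}=1$, and match the Haar moments of the squared amplitudes to the flat simplex moments of Eq.~\eqref{eq:general_simplex_av} via the $\Gamma$-function identity, which correctly handles the only two delicate points (degrees $s<t$, and the identification of the push-forward of the Fubini--Study measure under decoherence with the flat measure on $\Delta_d$).
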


The above observation, in line with similar results in~\cite{czartowski2019isoentangled,iosue2024projective}, leads us to the following no-go property:
\begin{cor} \label{cor:nogoCPdes}
 Suppose that there exists a basis $\{\ket{b_\beta}\}$ for which decoherence of a set $\qty{\ket{\psi_\alpha}\in\mathcal{H}_d}$ is not a $t$-design in the probability simplex $\Delta_d$. Then, $\qty{\ket{\psi_\alpha}\in\mathcal{H}_d}$ is not a complex projective $t$-design.
\end{cor}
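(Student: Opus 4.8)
The plan is to recognize that \cref{cor:nogoCPdes} is nothing more than the logical contrapositive of \cref{obs:nec_cond}, so that once the latter is available the former follows by a one-line argument by contradiction. I would therefore not attempt any fresh computation, but simply invoke the Observation on the specific basis singled out in the hypothesis.

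Concretely, I would argue as follows. Suppose, for contradiction, that $\{\ket{\psi_\alpha}\in\mathcal{H}_d\}$ \emph{is} a complex projective $t$-design. By \cref{obs:nec_cond}, applied to the orthonormal basis $B=\{\ket{b_\beta}\}$ appearing in the hypothesis, the decoherence $\vb{P}_B$ of this set with respect to $B$ must be a $t$-design in the probability simplex $\Delta_d$. This directly contradicts the assumption that the decoherence with respect to $\{\ket{b_\beta}\}$ is \emph{not} a simplex $t$-design. Hence $\{\ket{\psi_\alpha}\}$ cannot be a complex projective $t$-design, which is exactly the claim.

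I do not expect any genuine obstacle here, since all of the mathematical content has already been absorbed into \cref{obs:nec_cond}. The only point worth keeping in mind is where that content actually lives: the Observation rests on substituting $p_\alpha=\abs{\ip{\psi}{b_\alpha}}^2$ into an arbitrary polynomial of degree at most $t$ on the simplex, which turns it into a balanced polynomial of degree $t$ in $\ket{\psi}$ and $t$ in $\bra{\psi}$, together with the fact that the push-forward of the unitarily invariant measure under $\ket{\psi}\mapsto\qty(\abs{\ip{\psi}{b_\alpha}}^2)_\alpha$ is the flat Lebesgue measure on $\Delta_d$. Because these identifications match the paper's balanced convention for complex projective $t$-designs exactly, no factor-of-two mismatch between ``$t$'' on the Hilbert-space side and ``$t$'' on the simplex side arises, and the contrapositive goes through cleanly for every basis, in particular the one exhibited in the hypothesis.
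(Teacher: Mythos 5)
Your proof is correct and matches the paper's intent exactly: the paper states \cref{cor:nogoCPdes} without a separate proof precisely because it is the contrapositive of \cref{obs:nec_cond}, which is the one-line argument you give. Your closing remarks on why the Observation itself holds (the degree-$t$ correspondence under decoherence and the push-forward of the unitarily invariant measure to the flat measure on $\Delta_d$) are also accurate and consistent with the paper's conventions.
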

From now on, we will write $U=V\Lambda V^{\dag}$, where $\Lambda = \sum_{\alpha=1}^d \lambda_\alpha \op{\alpha}$  is a diagonal matrix containing the eigenvalues of $U$. Based on Corollary \ref{cor:nogoCPdes}, we can formulate the following result about cyclic designs.

\begin{thm}
 The set $\qty{U^i\ket{\beta},\,U\in \mathcal{U}(d)}_{i=0,\beta=1}^{k,d}$ is a cyclic $t$-design only if the set $\qty{\ket{v_\beta} = V^\dagger \ket{\beta}}_{\beta=1}^d$ provides, by decoherence, a $t$-design in the probability simplex $\Delta_d$.
\end{thm}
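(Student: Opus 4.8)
The plan is to reduce the statement to the contrapositive of Corollary~\ref{cor:nogoCPdes} (equivalently, to Observation~\ref{obs:nec_cond}) evaluated on a cleverly chosen reference basis. Assume the constellation $\qty{U^i\ket{\beta}}_{i=0,\beta=1}^{k,d}$ is a complex projective $t$-design. Then its decoherence with respect to \emph{every} orthonormal basis $B$ must be a $t$-design in $\Delta_d$. The trick is to take $B$ to be the eigenbasis of $U$, i.e.\ the columns of $V$, $\ket{b_\alpha}=V\ket{\alpha}$, so that the spectral structure of $U$ collapses the dependence on the power $i$.

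Concretely, I would compute the decoherence of a single design vector $U^i\ket{\beta}$ in this eigenbasis. Writing $U^i = V\Lambda^i V^\dagger$ and using $\bra{\alpha}\Lambda^i = \lambda_\alpha^i\bra{\alpha}$, one finds
\begin{equation}
 \abs{\bra{b_\alpha}U^i\ket{\beta}}^2 = \abs{\lambda_\alpha}^{2i}\,\abs{\bra{\alpha}V^\dagger\ket{\beta}}^2 = \abs{\braket{\alpha}{v_\beta}}^2 ,
\end{equation}
where the last equality is the crucial step: since $U$ is unitary its eigenvalues obey $\abs{\lambda_\alpha}=1$, so the factor $\abs{\lambda_\alpha}^{2i}$ is identically one and the resulting probability distribution is independent of $i$. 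Thus the decoherence of $U^i\ket{\beta}$ in the eigenbasis coincides, for every $i$, with the decoherence $\vb{p}_\beta = \qty{\abs{\braket{\alpha}{v_\beta}}^2}_{\alpha=1}^d$ of $\ket{v_\beta}=V^\dagger\ket{\beta}$ in the computational basis.

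It follows that the full decohered constellation $\vb{P}_B$ consists of the $d$ distributions $\qty{\vb{p}_\beta}_{\beta=1}^d$, each appearing with multiplicity $k+1$. Because the simplex $t$-design property is an averaging condition with uniform weights, replacing every point by $k+1$ identical copies leaves every polynomial average unchanged; hence $\vb{P}_B$ is a simplex $t$-design if and only if $\qty{\vb{p}_\beta}_{\beta=1}^d$ is. Invoking the assumed design property through Observation~\ref{obs:nec_cond} then forces $\vb{P}_B$, and therefore the decoherence of $\qty{\ket{v_\beta}}$, to be a $t$-design in $\Delta_d$, which is exactly the claim. I do not expect a genuine obstacle here: the only points requiring care are recognising that unimodularity of the eigenvalues is precisely what renders the decoherence insensitive to $i$, and that equal multiplicities are irrelevant to the design condition; the rest is a direct specialisation of the general no-go statement to the eigenbasis of $U$.
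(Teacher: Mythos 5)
Your proposal is correct and follows essentially the same route as the paper: both exploit the eigendecomposition $U^i = V\Lambda^i V^\dagger$ so that the diagonal unitary $\Lambda^i$ drops out of the decohered probabilities, collapsing the $(k+1)d$ design vectors to $d$ points with $(k+1)$-fold multiplicity, and both then invoke Observation~\ref{obs:nec_cond}/Corollary~\ref{cor:nogoCPdes}. The only cosmetic difference is that you keep the design fixed and decohere in the eigenbasis of $U$, while the paper rotates the design by $V^\dagger$ and decoheres in the computational basis — these are the same computation — and you additionally make explicit the (correct) point that uniform multiplicities do not affect the simplex design condition, which the paper leaves implicit.
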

\begin{proof}
%
 Let $V$ be the unitary matrix that diagonalizes $U$, i.e., $U=V\Lambda V^\dagger$, and apply $V^\dagger$ to the entire 2-design, such that we will be consider $V^\dagger U^i = \Lambda^i V^\dagger$. 
Thus, the rotated cyclic 2-design is given by the set $\qty{\Lambda^i\ket{v_\beta},\,U\in \mathcal{U}(d)}_{i=0,\beta=1}^{k,d}$. Since $\Lambda$ is a diagonal unitary matrix, for a fixed index $\beta$, all vectors $\Lambda^i \ket{v_\beta}$ yield the same probability distribution by decoherence; thus, all $(k+1)d$ vectors in the design generate $d$ points with $(k+1)$-fold degeneracy in the probability simplex $\Delta_d$. By Corollary \ref{cor:nogoCPdes} 
, these $d$ points have to be a $t$-design in the simplex.
\end{proof}
\noindent Due to the above property, cyclic designs are very limited in terms of the degree $t$, as they require existence of $d$-point $t$-designs in $\Delta_d$. We demonstrate the resulting limitations below.

\begin{thm} \label{2d_3-design}
 Any cyclic $2$-design in dimension $d=2$ is also a cyclic $3$-design. Moreover, cyclic $4$-designs do not exist for $d=2$.
\end{thm}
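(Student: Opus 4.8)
The plan is to pass to the Bloch-sphere picture and exploit one structural fact: in $d=2$ every orthonormal basis is a pair of antipodal points, so the full constellation of a cyclic design --- being the union of the column pairs of $\I,U,\dots,U^k$ --- is invariant under the Bloch inversion $\vec r\mapsto-\vec r$. I would phrase the design property through the frame potential $F_t=\frac{1}{N^2}\sum_{i,j}\abs{\ip{\psi_i}{\psi_j}}^{2t}$ and the Welch bound~\eqref{WB}, which in dimension $d=2$ reads $F_t\ge \frac{1}{t+1}$ with equality precisely for $t$-designs.

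For the first claim I would write the overlaps as $x_{ij}:=\abs{\ip{\psi_i}{\psi_j}}^2=\tfrac12(1+\vec r_i\cdot\vec r_j)$ and, for fixed $i$, pair each index $j$ with its antipodal partner $\bar{\jmath}$, for which $\vec r_{\bar{\jmath}}=-\vec r_j$ and hence $x_{i\bar{\jmath}}=1-x_{ij}$. Inversion symmetry also forces the linear moment to vanish, $\sum_{i,j}\vec r_i\cdot\vec r_j=\abs{\sum_i\vec r_i}^2=0$. Expanding $x^2+(1-x)^2$ and $x^3+(1-x)^3=3x^2-3x+1$ then collapses the cubic frame potential onto the quadratic one, yielding the identity $F_3=\tfrac32 F_2-\tfrac14$, valid for any inversion-symmetric constellation. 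A cyclic $2$-design saturates $F_2=\tfrac13$, so $F_3=\tfrac14$, which is exactly the $t=3$ Welch bound; hence the very same constellation is automatically a cyclic $3$-design.

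For the non-existence of cyclic $4$-designs I would invoke the necessary condition established just above: such a design would force the decoherence of the two eigenvectors $\ket{v_\beta}=V^\dagger\ket\beta$ of $U$ to be a $4$-design in $\Delta_2$. The columns of any $2\times2$ unitary decohere to the symmetric pair $\{(p,1-p),(1-p,p)\}$ with $p=\abs{V_{11}}^2$, so one would need just two points to form a $4$-design on the segment $\Delta_2$. This is impossible: by the Gaussian-quadrature bound $n$ nodes are exact only up to degree $2n-1$, so two nodes reach at most strength $3$. Concretely, the degree-$2$ condition $\tfrac12[p^2+(1-p)^2]=\tfrac13$ forces $p(1-p)=\tfrac16$, and substituting into the degree-$4$ condition $\tfrac12[p^4+(1-p)^4]=\tfrac15$ yields $\tfrac{7}{36}\ne\tfrac15$, a contradiction.

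The main obstacle is the sufficiency needed for the first claim: the preceding theorem supplies only a necessary condition on the simplex side, whereas here I must certify a genuine complex projective $3$-design. The frame-potential identity $F_3=\tfrac32 F_2-\tfrac14$ is the crux, and its only delicate ingredients are the antipodal bookkeeping and the vanishing of the linear moment $\sum_{i,j}\vec r_i\cdot\vec r_j$, both of which rely specifically on the $d=2$ antipodal structure; everything that remains is a short calculation or a direct appeal to the quadrature bound.
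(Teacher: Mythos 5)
Your proof is correct, and the first half takes a genuinely different route from the paper's. The paper proves the $2\Rightarrow 3$ implication in Appendix~\ref{app:proof_d2} by explicitly parametrizing cyclic constellations --- eigenbasis at angle $\theta$ to the rotation axis, eigenvalue ratio $e^{2\pi i/(k+1)}$ --- and evaluating the $t=3$ frame potential via roots-of-unity sums; the $2$-design condition pins down $\sin^2\theta=2/3$, which is then seen to saturate the $t=3$ Welch bound as well. You instead prove the structural identity $F_3=\tfrac{3}{2}F_2-\tfrac{1}{4}$ for any antipodally symmetric constellation, which checks out: pairing $j$ with $\bar\jmath$ gives $\sum_j x_{ij}^3=\tfrac{N}{2}-\tfrac{3}{2}\sum_j x_{ij}+\tfrac{3}{2}\sum_j x_{ij}^2$, and $\sum_{i,j}x_{ij}=N^2/2$ by vanishing of $\sum_i \vec r_i$, so $F_2=\tfrac{1}{3}$ forces $F_3=\tfrac{1}{4}=1/\binom{4}{3}$. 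This is shorter, needs no parametrization or roots-of-unity bookkeeping, and is strictly more general --- it shows that \emph{any} union of orthonormal bases in $d=2$ (cyclic or not) that is a $2$-design is automatically a $3$-design --- though it forgoes the paper's constructive by-product, namely the explicit value $\cos^2(\theta/2)=(3+\sqrt{3})/6$ characterizing which cyclic constellations actually are $2$-designs, which the paper needs anyway for its prism/antiprism classification of qubit cyclic designs. For the second half your argument coincides in structure with the paper's: both reduce, via the eigenbasis-decoherence necessary condition and Corollary~\ref{cor:nogoCPdes}, to the impossibility of a two-point $4$-design in $\Delta_2$; you merely make explicit what the paper only asserts, either through the quadrature bound ($n$ nodes are exact at most to degree $2n-1$) or through the concrete contradiction $\tfrac{1}{2}\qty[p^4+(1-p)^4]=\tfrac{7}{36}\neq\tfrac{1}{5}$ once $p(1-p)=\tfrac{1}{6}$ is imposed, and both of these verifications are valid.
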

\begin{proof}
The fact that a cyclic 2-design is also a cyclic 3-design in dimension $d=2$ is proven in Appendix~\ref{app:proof_d2}. Impossibility of generating a $4$-design in dimension $d\!\!=\!\!2$ is a direct consequence of Corollary~\ref{cor:nogoCPdes}. 
More precisely, consider a set of $N$ points in $\Delta_2$, which is equivalent to a set of numbers $\qty{0\leq x_i \leq 1}_{i=1}^N$. They need to satisfy a set of equalities for the averaged moments up to $t$ 
\begin{equation}\label{eq:2_d_moments}
    \frac{1}{N} \sum_{i=1}^N x_i^t = \frac{1}{t+1},
\end{equation}
with all the other moments linearly dependent. 
For $N = 2$, we find a pair of points $x_\pm = \frac{1}{2}\pm\frac{1}{\sqrt{12}}$ that satisfies equations (\ref{eq:2_d_moments}) for $t = 1, 2, 3$. However, a set of real solutions $\{x_i\}$ that satisfies these equations for all $t=1,2,3,4$ does not exist. 
Hence, there exist no two states in dimension $d=2$ that decohere to a simplex 4-design which, by Corollary \ref{cor:nogoCPdes}, completes the proof.
\end{proof}

Theorem \ref{2d_3-design} is directly generalizable to arbitrary dimension $d$, as we show below.
\begin{thm}\label{thm:deg_less_4}
    Cyclic $t$-designs do not exist for $t>3$, in any dimension $d$.
\end{thm}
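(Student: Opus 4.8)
The plan is to reduce the statement to a node-counting obstruction for cubature formulas on the probability simplex, building directly on the preceding theorem. First I would observe that it suffices to exclude cyclic $4$-designs: any $t$-design with $t>3$ is in particular a $4$-design, so ruling out $t=4$ immediately rules out every $t>3$. By the preceding theorem, the existence of a cyclic $t$-design forces the basis $\{\ket{v_\beta}=V^\dagger\ket{\beta}\}_{\beta=1}^d$ to give, by decoherence with respect to the computational basis, a $t$-design in $\Delta_d$. As its proof shows, this design is carried by exactly the $d$ points $\vb{p}_\beta = \operatorname{diag}(\op{v_\beta})$, each appearing with equal positive weight $1/d$. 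The question thus becomes purely classical: can a positive-weight cubature formula of degree $4$ for the flat (Lebesgue) measure on $\Delta_d$ be supported on only $d$ nodes?

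Next I would invoke the classical Gaussian-type lower bound on the number of nodes of a positive cubature formula (the simplex analogue of the fact that an $n$-point Gauss quadrature is exact only up to degree $2n-1$). Since $\Delta_d$ is full-dimensional inside its $(d-1)$-dimensional affine hull, the polynomials of degree at most $s$ restricted to $\Delta_d$ form a space $\Pi_s$ of dimension $\binom{d-1+s}{s}$. The standard argument runs as follows: if an exact-to-degree-$2s$ formula with positive weights had fewer than $\dim\Pi_s$ nodes, then vanishing at those $N<\dim\Pi_s$ nodes would impose fewer than $\dim\Pi_s$ linear conditions on $\Pi_s$, yielding a nonzero $q\in\Pi_s$ with $q(\vb{p}_\beta)=0$ at every node; applying exactness to $q^2$ (whose degree is at most $2s$) would then force $\int_{\Delta_d} q^2\,\dd{p} = \sum_\beta w_\beta\, q(\vb{p}_\beta)^2 = 0$, contradicting the strict positivity of the integral of a nonzero polynomial. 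Hence any positive-weight degree-$2s$ formula needs at least $\binom{d-1+s}{s}$ nodes.

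Applying this with $2s=4$, i.e. $s=2$, shows that a degree-$4$ design in $\Delta_d$ must use at least $\binom{d+1}{2}=\tfrac{d(d+1)}{2}$ nodes. A cyclic design supplies only $d$ of them, and $\tfrac{d(d+1)}{2}>d$ for every $d\ge 2$; the required $4$-design therefore cannot exist, and with it no cyclic $t$-design for $t>3$. I expect the one genuinely delicate step to be the node-count bound itself, specifically the claim that the auxiliary polynomial $q$ does not vanish identically on $\Delta_d$: this is safe because $\Delta_d$ has nonempty relative interior, so a polynomial that is nonzero as an element of $\Pi_s$ cannot vanish on a set of full measure. A harmless side issue is that some of the $d$ points $\vb{p}_\beta$ might coincide, but this only decreases the number of distinct nodes below $d$ and strengthens the contradiction. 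As a consistency check, the same bound applied to $t=3$ (where the largest usable even degree is $2s=2$, so $s=1$) gives merely $\binom{d}{1}=d$ nodes, exactly the number available; this is precisely why cyclic $3$-designs are not excluded, in agreement with \Cref{2d_3-design}.
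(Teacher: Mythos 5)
Your proof is correct, and it shares the paper's overall skeleton---reduce $t>3$ to $t=4$, invoke the decoherence theorem to obtain a $d$-point, equal-weight $4$-design in $\Delta_d$, and kill it with the node-count bound $\binom{d+1}{2}>d$---but the way you establish that bound is genuinely different. The paper forms the fourth-moment matrix $M^{\alpha\beta}_{\mu\nu}=\ev{p_\alpha p_\beta p_\mu p_\nu}$, observes that the design condition forces $M$ to equal the uniform-measure moment matrix while also being a sum of $N$ rank-one projectors onto the vectors $\vb{p}^{(i)\otimes 2}$ (hence $N\geq\operatorname{rank}(M)$), and then computes $\operatorname{rank}(M)=\binom{d+1}{2}$ by an explicit kernel calculation in Appendix \ref{app:rank_M_matrix}, using the Beta-function moment formula \eqref{eq:general_simplex_av} and a permutation-symmetry reduction to a $2\times 2$ linear system. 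You instead run the classical positive-weight cubature argument (Gauss/Radon type): fewer than $\dim\Pi_2=\binom{d+1}{2}$ nodes would leave a nonzero polynomial $q$ of degree at most $2$ vanishing at every node, and exactness of the design on $q^2$ (degree $\leq 4$) contradicts $\int_{\Delta_d}q^2\dd{p}>0$. The two arguments are dual formulations of the same obstruction, but yours is computation-free and, more importantly, strictly more general: it applies verbatim to every even $t=2s$ and yields $N\geq\binom{d+s-1}{s}$, which is precisely the lower bound that the paper leaves as an open conjecture at the end of Appendix \ref{app:rank_M_matrix}, where the authors state that their linear-system strategy does not visibly extend to arbitrary even $t$. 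The only delicate point in your route is the one you flag yourself---that a polynomial nonzero in $\Pi_s$ cannot vanish identically on $\Delta_d$---which holds because the simplex has nonempty relative interior in its affine hull; your handling of possibly coincident nodes and your $t=3$ consistency check (bound of $d$ nodes at $s=1$, explaining why cyclic $3$-designs survive) are likewise sound.
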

\begin{proof}
    Consider a set of probability vectors $\vb{P} = \qty{\vb{p}^{(i)}}_{i=1}^N$. The matrix 
    $M^{\alpha\beta}_{\mu\nu} = \ev{p_\alpha p_\beta p_\mu p_\nu}_{\vb{P}}$, with $\alpha,\beta,\mu,\nu=1,\dots,d$,
    can be interpreted as a sum of rescaled projectors onto double-copy states $\vb{p}^{(i)\otimes 2}$ which is related to the fact that the matrix $M$ can be represented as $\sum_{i=1}^N \vb{p}^{(i)\otimes 4} = \sum_{i=1}^N \qty(\vb{p}^{(i)\otimes 2})^{\otimes 2}$, where $\vb{p}^{(i)\otimes 4}$ contains all monomials of order $4$ in the components of $\vb{p}^{(i)}$. If we require $\ev{p_\alpha p_\beta p_\mu p_\nu}_{\vb{P}} = \ev{p_\alpha p_\beta p_\mu p_\nu}_{\Delta_d}$ with righthand side given by eq. \eqref{eq:general_simplex_av}, we find that $\operatorname{rank}(M) = \binom{d+1}{2} > d$, which is shown in Appendix \ref{app:rank_M_matrix}. Furthermore, it is straightforward that $N \geq \operatorname{rank}(M)$. Thus, there are no simplex 4-designs composed of $N=d$ points, which would be necessary for a cyclic $4$-design. 
\end{proof}

In addition, let us put forward the following conjecture.
\begin{conj}
    There are no $d$-point simplex 3-designs in $\Delta_d$ for $d\geq3$.
\end{conj}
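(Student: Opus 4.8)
The plan is to recast the statement as a (non-)existence question for a structured doubly stochastic matrix and then attack it with the theory of minimal cubature formulas. First I would collect the $d$ candidate points as the rows of a matrix $P\in\mathbb{R}^{d\times d}$, $P_{i\alpha}=p^{(i)}_\alpha$. The degree-$1$ design condition forces every column sum to equal $1$, while membership in $\Delta_d$ forces nonnegative entries and unit row sums; hence $P$ must be \emph{doubly stochastic}. Evaluating \eqref{eq:general_simplex_av} for $t=2$ turns the degree-$2$ conditions into $P^{T}P=\tfrac{1}{d+1}(\iden+J)$ with $J$ the all-ones matrix, which (using $P\mathbf 1=P^{T}\mathbf 1=\mathbf 1$) also gives $PP^{T}=\tfrac{1}{d+1}(\iden+J)$, so $P$ is normal and, on the complement of $\mathbf 1$, a rescaled isometry: $P=\tfrac1d J+\tfrac{1}{\sqrt{d+1}}\,\widetilde O$ with $\widetilde O$ a partial isometry annihilating $\mathbf 1$. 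The remaining content of a $3$-design is the degree-$3$ system $\sum_i p^{(i)}_\alpha p^{(i)}_\beta p^{(i)}_\gamma = c_{\alpha\beta\gamma}/[(d+1)(d+2)]$, where $c_{\alpha\beta\gamma}$ is the multiplicity factor ($1$, $2$, or $6$) coming from \eqref{eq:general_simplex_av}. The goal is to prove this degree-$3$ system is incompatible with the degree-$2$ normal form above for every $d\ge3$.

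The natural first attempt is to mimic \Cref{thm:deg_less_4}. There, the even degree $t=4$ lets one build the Gram matrix $M=\tfrac1N\sum_i(\mathbf p^{(i)\otimes2})(\mathbf p^{(i)\otimes2})^{T}$, whose rank is at most $N$ yet is forced to equal $\binom{d+1}{2}>d$. For $t=3$ this trick fails: the third moments assemble into $\widetilde M=\sum_i \mathbf p^{(i)}(\mathbf p^{(i)\otimes2})^{T}$, a non-square $d\times\binom{d+1}{2}$ object whose rank is automatically $\le d$, so no contradiction arises from flattening alone. This is exactly why the statement is only conjectured, and it signals that one must exploit the \emph{odd} strength through a sharper, Delsarte-type bound rather than a crude rank count.

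Accordingly, the main route I would pursue is the truncated-moment / minimal-cubature viewpoint. A $d$-point equal-weight $3$-design is a minimal quadrature for $\Delta_d$ exact to degree $3$, whose degree-$\le1$ moment matrix $M_1$ (indexed by $\{1,p_1,\dots,p_d\}$) is fixed by the $2$-design data and has rank exactly $d$ because of the relation $\sum_\alpha p_\alpha=1$. Crucially, the localizing matrices $M^{(\alpha)}_1$ with entries $\langle p_\alpha\, p_\beta\,p_\gamma\rangle$ are built solely from degree-$3$ moments, hence are also completely prescribed by the design conditions, and they satisfy $\sum_\alpha M^{(\alpha)}_1=M_1$ together with $M^{(\alpha)}_1\succeq0$ on the support. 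By the Curto--Fialkow / M\"oller theory of minimal cubature, the nodes of a quadrature with $N=\operatorname{rank}M_1$ must be the common zeros of the polynomials spanning the kernel determined by a flat (rank-preserving) extension, i.e.\ an orthogonal-polynomial ideal attached to $\Delta_d$. The plan is to show that the flat extension singled out by the degree-$2$ normal form produces degree-$3$ moments disagreeing with $c_{\alpha\beta\gamma}/[(d+1)(d+2)]$ whenever $d\ge3$, equivalently that the prescribed localizing matrices violate the positivity/flatness requirements. As a warm-up I would first verify this incompatibility by eliminating variables for $d=3$, where the symmetric circulant ansatz already forces $(a-b)(b-c)(c-a)=0$ and hence two coordinates to coincide, after which the second-moment constraint leaves only $a\in\{1/2,1/6\}$, both contradicting the required $abc=1/60$; isolating this obstruction should then be promoted to general $d$, using the $S_d$-invariance forced on all moments up to order $3$.

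The hard part will be the final step: with only degree-$3$ data in hand, the moment matrix can be extended flatly to degree $2$ in more than one way, and minimal-cubature theory does not by itself forbid a $d$-atomic solution until one shows the \emph{specific} extension compatible with nonnegativity on $\Delta_d$ is obstructed. Turning the $d=3$ elimination into a dimension-independent argument — controlling the interplay between normality of $P$, entrywise nonnegativity, and the rigid $S_d$-symmetric target tensor $c_{\alpha\beta\gamma}$ — is where I expect the genuine difficulty, and is presumably why the authors leave the statement as a conjecture rather than a theorem.
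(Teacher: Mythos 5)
Two things need to be said, one about scope and one about the part you actually prove.

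On scope: the statement is presented in the paper as a \emph{conjecture}, and the paper itself establishes it only for $d=3$ (Appendix~\ref{app:baladram_invalidation}), supporting the general case with parameter counting and a conjectured monotonicity of $N_*(d,t)$. Your proposal ends in the same place: the Curto--Fialkow / flat-extension program you outline for $d\geq 4$ is not executed, and, as you yourself concede, nothing you write obstructs the prescribed degree-3 moments from admitting a $d$-atomic representing measure. So for $d\geq 4$ you have a research plan, not a proof — a genuine gap, but precisely the gap the authors also leave open. Your preparatory observations there are correct and worthwhile: the reformulation $P^TP = PP^T = \frac{1}{d+1}\qty(\iden + J)$ for a doubly stochastic $P$ is right (the passage from $P^TP$ to $PP^T$ works because the eigenvalue $1$ of $P^TP$ is simple and $\mathbf{1}$ is a common eigenvector), and your diagnosis of why the flattening argument of Theorem~\ref{thm:deg_less_4} is blind to odd strength — the degree-3 tensor flattens to a $d\times\binom{d+1}{2}$ matrix of rank automatically $\leq d$ — matches the paper's own closing remark in Appendix~\ref{app:rank_M_matrix} that its method ``does not extend to odd degrees $t$.''

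On $d=3$: your warm-up is a complete and correct proof, and it takes a genuinely different route from the paper's. The paper uses its one-parameter classification of 3-point simplex 2-designs in $\Delta_3$ (Eqs.~\eqref{eq:amplitudes}) and shows the trigonometric defect function $F(\phi)$ is strictly positive for all $\phi$; you instead work with the circulant configuration $(a,b,c),(c,a,b),(b,c,a)$, observe that the 3-design conditions force both mixed orbit averages to equal $1/30$, so that the identity $a^2b+b^2c+c^2a-ab^2-bc^2-ca^2 = -(a-b)(b-c)(c-a)$ forces two coordinates to coincide, and then normalization plus the second-moment condition give $a\in\qty{1/2,\,1/6}$, yielding $abc=0$ or $abc=1/54$, both incompatible with the required $abc=1/60$. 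The arithmetic checks out, and this is more elementary than the paper's computation. One step, however, must not be skipped: you cannot simply \emph{impose} the circulant ansatz — you must show every 3-point 3-design has this form. Fortunately this follows from your own degree-2 normal form: for $d=3$, $PP^T=\frac14\qty(\iden+J)$ says the three points have equal norms and equal pairwise inner products with centroid at the barycenter, i.e.\ they form an equilateral triangle centered at the center of $\Delta_3$; such a triangle is invariant under the rotation by $2\pi/3$ about that center, and this rotation acts on the simplex exactly as a cyclic permutation of coordinates, so the three points constitute one cyclic orbit. With that sentence added, your $d=3$ argument is a clean alternative to Appendix~\ref{app:baladram_invalidation}; beyond $d=3$, both you and the paper remain at the level of a conjecture.
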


This conjecture is based on the fact that there is no evidence in the literature for existence of $d$-point simplex $3$-designs for $d\geq3$ 
. 
Even in the smallest case of $d=3$, contrary to the claims presented in \cite{baladram2018explicit}, rudimentary evaluation of a 3-point simplex 3-design put forward therein shows that the 3-point arrangement fails for averages of the form $\ev{p_\alpha^2 p_\beta}$ with $\alpha\neq \beta$; additionally, by considering a general form of 2-designs in $\Delta_3$, as given in \cite{czartowski2019isoentangled}, one can optimize over a single variable to show nonexistence of such structures -- proof is presented in Appendix~\ref{app:baladram_invalidation} 
. 

More generally, the following line of geometric reasoning can be put forward. Consider a set of $d$ vectors $\vb{q}^{(i)} \in \mathbb{R}^N$ 
for which $q_i^{(\alpha)} = p^{(i)}_\alpha$; this translates, roughly, to taking rows of a matrix as vectors, instead of columns. By considering requirements for simplex 1-design, we see that the vectors are all restricted to a plane $\sum_i q^{(\alpha)}_i = N/d$, making them effectively $(N-1)$-dimensional. The conditions on $\ev{p_\alpha^2}$ puts them onto a sphere of radius $R^2 = 2N/(d(d+1))$, thus fixing their freedom to $(N-2)$ parameters per vector, which can be related to the spherical coordinates on a hyperplane. Finally, the set of conditions $\ev{p_\alpha p_\beta}$ translates to equal angles between vectors, $\vb{q}^{(i)}\cdot\vb{q}^{(j)} = \cos\theta = \text{const.}$ for $i\neq j$. It is well known that this defines uniquely, up to rescaling and displacement, a $d$-point regular simplex embedded in $\mathbb{R}^N$,
 which due to confinement to an $(N-1)$-dimensional hyperplane has no more than $N$ points, thus showing that $N \geq d$ for 2-designs. In addition, such a simplex is restricted only to rotational degrees of freedom, giving exactly $\binom{N-1}{2}$ angles.
 Finally, the restrictions imposed by fixing $\ev{p_\alpha^3}$ restrict vectors further to $(N-3)$ free parameters per vector at most, and a total of $\binom{N-1}{2} - N$ free parameters for the entire set when taking into account the already fixed simplicial structure. The remaining conditions coming from $\ev{p_\alpha p_\beta^2}$ and $\ev{p_\alpha p_\beta p_\mu}$, which can be easily counted as $d(d-1) + \binom{d+2}{3}$, need to be satisfied simultaneously using the remaining freedoms. Explicit analytical proof, as presented in Appendix \ref{app:baladram_invalidation}, shows that they cannot be sa\-tisfied for $d = 3$. In addition, numerical experiments show that the best approximation of a $d$-point 3-design in $\Delta_d$ is found by setting the points to $\qty[a, (1-a)/(d-1), \hdots, (1-a)/(d-1)]$ with $a = \frac{d + \sqrt{d + 1} - 1}{d \sqrt{d + 1}}$ set to satisfy the 2-design condition exactly as a necessary condition for a 3-design. 

Finally, based on the evidence and our knowledge, it is reasonable to believe that the sequence of minimal numbers $N_*(d,t)$ of points in a simplex $t$-design as a function of dimension $d$ is strictly monotonic for all $t\geq 2$, including $t = 3$, thus leading to the conjecture put forward above. Note that it has been shown constructively shown that $N_*(d,2) \leq d$ and, by geometric discussion above, $N_*(d,2) \geq d$, thus leading to conclusion that $N_*(d,2) = d$, which is strictly monotonic. Additionally, as presented in Appendix \ref{app:rank_M_matrix}, $N_*(d,4)$ is lower-bounded by a strictly monotonic function $\binom{d+1}{2}$, with similar bounds conjectured for all $t \geq 2$. Additionally, we know that $N_*(3,3) > 3$. Thus, should strict monotonicity hold for $N_*(d,3)$, one would have $N_*(d,3)>d$ for all $d\geq 3$.
If proven true, it would lead to the following as a corollary.

\begin{conj} \label{always_2-design}
For $d \geq 3$, a cyclic $t$-design exists for $t=1$ or $t=2$.
\end{conj}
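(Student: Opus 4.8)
The plan is to obtain this statement as a short deductive corollary of two ingredients already in hand: Theorem~\ref{thm:deg_less_4}, which rules out cyclic $t$-designs for every $t>3$ in any dimension, and the necessary-condition theorem preceding Theorem~\ref{2d_3-design}, which forces the decoherence of the eigenvectors $\{\ket{v_\beta}=V^\dagger\ket{\beta}\}_{\beta=1}^d$ of the generating unitary to be a $d$-point $t$-design in the simplex $\Delta_d$. The entire content of the corollary is to close the one remaining gap, namely $t=3$, by invoking the preceding conjecture on the non-existence of $d$-point simplex $3$-designs.

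First I would settle the existence half. A cyclic $1$-design is immediate: taking $k=0$, the computational basis alone satisfies $\tfrac{1}{d}\sum_\beta \op{\beta} = \mathbb{I}/d$, matching the Haar average of $\op{\psi}$, so $t=1$ is realised in every dimension. For $t=2$ I would appeal to the explicit constructions developed in Section~\ref{res1} based on difference sets (for example, via the Mian--Chowla sequence, which supplies a difference set for every cardinality), which furnish cyclic $2$-designs in all dimensions. Hence both $t=1$ and $t=2$ are attained for every $d\geq 3$.

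The non-existence half is where the underlying conjecture does the work. By Theorem~\ref{thm:deg_less_4} the only order left to exclude for $d\geq3$ is $t=3$. Here I apply the necessary-condition theorem: any cyclic $3$-design requires its rotated form $\{\Lambda^i\ket{v_\beta}\}$ to collapse, under the diagonal action of $\Lambda^i$, to exactly $d$ distinct points in $\Delta_d$ forming a $3$-design. Granting the preceding conjecture that no $d$-point simplex $3$-design exists for $d\geq 3$, this necessary condition can never be met, so no cyclic $3$-design exists for $d\geq 3$. Combining the two halves pins the achievable orders down to exactly $t\in\{1,2\}$, which is the claimed statement.

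The main obstacle is plainly not this deductive chain, which is essentially immediate, but the conjecture it rests upon. The genuine difficulty is establishing rigorously, and uniformly in $d$, that $\Delta_d$ admits no $d$-point $3$-design for every $d\geq3$. As indicated in the surrounding discussion, this would proceed by a rank-and-counting argument analogous to the $t=4$ bound $\operatorname{rank}(M)=\binom{d+1}{2}$ of Appendix~\ref{app:rank_M_matrix}, combined with the expected strict monotonicity of the minimal design size $N_*(d,3)$ together with the base case $N_*(3,3)>3$; converting that monotonicity heuristic into a dimension-independent proof is the real bottleneck.
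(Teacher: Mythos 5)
Your non-existence argument is exactly the derivation the paper intends: Theorem~\ref{thm:deg_less_4} excludes $t>3$ in every dimension, and the necessary-condition theorem (the decoherence of the eigenbasis of the generating unitary must form a $d$-point $t$-design in $\Delta_d$), combined with the preceding conjecture on simplex $3$-designs, excludes $t=3$ for $d\geq 3$. This conditional chain is precisely why the paper labels the statement a conjecture rather than a theorem, and you correctly identify the unproven simplex conjecture as the sole bottleneck.

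However, your existence half contains a genuine overstatement. You claim that the difference-set construction ``furnishes cyclic $2$-designs in all dimensions,'' but Theorem~\ref{thm:cyclic_construction} needs two ingredients: a $(k+1,d,1)$-difference set (which the Mian--Chowla sequence does supply for every $d$) \emph{and} a unitary $V^\dagger$ whose decoherence gives a $d$-point $2$-design in $\Delta_d$. The paper proves the latter exists only in the infinite family of dimensions admitting robust Hadamard matrices (Theorem~\ref{thm:inf_dim}) and otherwise verifies it numerically up to $d=100$; its existence in every dimension is itself only Conjecture~\ref{conj:all_dim}. So the assertion that ``both $t=1$ and $t=2$ are attained for every $d\geq 3$'' is not available. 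Fortunately, this does not damage the statement you are proving: read as the paper intends it --- as a corollary of the simplex conjecture, i.e., that cyclic $t$-designs in dimension $d\geq 3$ are limited to $t\leq 2$ --- only your non-existence half is needed, and the trivial $t=1$ case (any union of orthonormal bases is a projective $1$-design) already covers the existential reading. You should either drop the $t=2$ existence claim or weaken it to ``an infinite family of dimensions, conjecturally all.''
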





\bl{}

Independently from the above, one can furthermore demonstrate that not only the full set of cyclic MUB in $d=3$ does not exist, but not even a single complex Hadamard matrix can be a part of a cyclic $t$-design in such a dimension. The proof of this fact is presented in Appendix~\ref{App:noMUBd3}.\medskip

 By using the notion of difference set introduced in Definition \ref{def:diff_set}, below we introduce the main result of the work -- a construction of cyclic $2$-design in dimension $d$ given a difference set and a basis proceeding from a simplex $2$-design, proved in Appendix \ref{app:cd_anyD_proof}.

\begin{thm} \label{thm:cyclic_construction}
 Consider a difference set \mbox{$D\! =\! \qty{N_\beta}$} with parameters $(k+1,d,1)$ and a basis forming a unitary matrix $V^\dagger = \qty{\ket{v_\beta}}_{\beta=1}^{d
 }$ which yields by decoherence a 2-design in the probability simplex~$\Delta_d$. Let $\lambda_\beta\! =\! \exp(i \frac{2\pi}{k+1} N_{\beta})$ and $\Lambda = \sum_\alpha \lambda_\alpha \op{\alpha}$. Then, the set $\qty{V \Lambda^j V^\dag}_{j=0}^k$ is a cyclic $2$-design with $U = V \Lambda V^\dag$.
\end{thm}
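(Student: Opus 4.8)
The plan is to show that the $(k+1)d$ vectors $\ket{\psi_{j,\beta}} = U^j\ket{\beta}$, with $j = 0,\dots,k$ and $\beta = 1,\dots,d$, saturate the Welch bound \eqref{WB} for $t=2$; since \eqref{WB} is saturated exactly when the constellation is a complex projective $2$-design, this proves the claim. First I would pass to the eigenbasis of $U$. Writing $U = V\Lambda V^\dagger$ with $\Lambda$ diagonal gives $\ip{\psi_{j,\beta}}{\psi_{j',\beta'}} = \bra{\beta}U^{\,j'-j}\ket{\beta'} = \bra{v_\beta}\Lambda^{m}\ket{v_{\beta'}} = \sum_{\alpha} \lambda_\alpha^{m}\, c^{(\beta\beta')}_\alpha$, where $m = j'-j$, $\ket{v_\beta} = V^\dagger\ket{\beta}$, and $c^{(\beta\beta')}_\alpha = \overline{\braket{\alpha}{v_\beta}}\braket{\alpha}{v_{\beta'}}$, so that $\abs{c^{(\beta\beta')}_\alpha}^2 = P_{\beta\alpha}P_{\beta'\alpha}$ with $P_{\beta\alpha} = \abs{\braket{\alpha}{v_\beta}}^2$ the decoherence probabilities. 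The essential structural input is that each $\lambda_\alpha = e^{2\pi i N_\alpha/(k+1)}$ is a $(k+1)$-th root of unity, whence $U^{k+1} = \mathbb{I}$; consequently the overlap depends on $(j,j')$ only through $m \bmod (k+1)$, and over $(j,j') \in \{0,\dots,k\}^2$ each residue is attained exactly $k+1$ times.

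Next I would assemble the frame potential $\Phi = \sum_{j,\beta,j',\beta'}\abs{\ip{\psi_{j,\beta}}{\psi_{j',\beta'}}}^4$, which by the previous remark equals $(k+1)\sum_{\beta,\beta'}\sum_{m=0}^{k}\abs{\sum_\alpha \lambda_\alpha^{m} c^{(\beta\beta')}_\alpha}^4$. Expanding the fourth power into a quadruple sum over indices $\alpha,\alpha',\gamma,\gamma'$ and performing the geometric sum over $m$ leaves a factor $(k+1)$ times the indicator of the modular constraint $N_\alpha - N_{\alpha'} + N_\gamma - N_{\gamma'}\equiv 0 \pmod{k+1}$. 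This is the step where the difference set enters: because the $N_\beta$ are distinct and, by the $\lambda = 1$ hypothesis, all differences $N_x - N_y$ ($x\neq y$) are distinct modulo $k+1$, the constraint $N_\alpha - N_{\alpha'}\equiv N_{\gamma'} - N_\gamma$ admits precisely two disjoint families of solutions: the diagonal one $\alpha=\alpha',\ \gamma=\gamma'$, and the swap $\gamma=\alpha',\ \gamma'=\alpha$ with $\alpha\neq\alpha'$. Both families retain only the moduli $\abs{c^{(\beta\beta')}_\alpha}^2$, yielding $\sum_{m}\abs{\cdots}^4 = (k+1)\bigl[2(\sum_\alpha P_{\beta\alpha}P_{\beta'\alpha})^2 - \sum_\alpha P_{\beta\alpha}^2 P_{\beta'\alpha}^2\bigr]$.

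It then remains to sum over $\beta,\beta'$, which is where the simplex $2$-design hypothesis is used. Rewriting $\sum_{\beta,\beta'}(\sum_\alpha P_{\beta\alpha}P_{\beta'\alpha})^2 = \sum_{\alpha,\gamma}(\sum_\beta P_{\beta\alpha}P_{\beta\gamma})^2$ and $\sum_{\beta,\beta'}\sum_\alpha P_{\beta\alpha}^2 P_{\beta'\alpha}^2 = \sum_\alpha(\sum_\beta P_{\beta\alpha}^2)^2$ exposes only second moments $\sum_\beta P_{\beta\alpha}P_{\beta\gamma}$ of the decohered points; by \eqref{eq:general_simplex_av} the $2$-design conditions fix these to $\tfrac{2}{d+1}$ for $\alpha=\gamma$ and $\tfrac{1}{d+1}$ for $\alpha\neq\gamma$. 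Substituting gives $\Phi = (k+1)^2\,\tfrac{2d}{d+1}$, which I would compare with the Welch target $N^2/\binom{d+1}{2} = (k+1)^2 d^2\cdot \tfrac{2}{d(d+1)} = (k+1)^2\,\tfrac{2d}{d+1}$. The two coincide, so \eqref{WB} is saturated and $\{V\Lambda^j V^\dagger\}_{j=0}^{k}$ is a cyclic $2$-design.

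The main obstacle is the index-matching of the middle paragraph: correctly reducing the quadruple-index modular equation to exactly the diagonal and swap families using only the $\lambda=1$ (Sidon) property, and checking that these families are disjoint and exhaustive so that no contribution is over- or under-counted. This is the unique point at which the number-theoretic structure of $D$ and the design structure of $\{\ket{v_\beta}\}$ genuinely interact; everything before it is a change of basis, and everything after it is bookkeeping with the doubly stochastic matrix $P$ and the degree-two simplex moments.
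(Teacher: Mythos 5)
Your proposal is correct and follows essentially the same route as the paper's proof in Appendix~\ref{app:cd_anyD_proof}: rotate into the eigenbasis of $U$, expand the $t=2$ frame potential into a quadruple index sum, let the geometric sum over powers impose the modular constraint $N_\alpha+N_\gamma-N_{\alpha'}-N_{\gamma'}\equiv 0 \pmod{k+1}$, use the $\lambda=1$ difference-set property to reduce the solutions to the diagonal and swap matchings, and finish with the simplex $2$-design moments to hit the Welch bound exactly. The only cosmetic differences are that you organize the surviving index matchings as two families via inclusion--exclusion where the paper lists three disjoint cases, and you omit the paper's (logically redundant) $t=1$ warm-up check.
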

\begin{obs}
 {\color{black}For the construction in Theorem \ref{thm:cyclic_construction} all the eigenvalues of $U$ are $(k+1)$-th roots of the unity and, thus, $U^{k+1} = \mathbb{I}$.}
\end{obs}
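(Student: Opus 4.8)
The plan is to read the eigenvalues of $U$ directly off its spectral decomposition and verify that each is a $(k+1)$-th root of unity, after which $U^{k+1}=\mathbb{I}$ drops out immediately. This is a short computation rather than a substantial argument, so the work consists mainly in tracking where the integrality of the difference-set elements enters.

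First I would observe that, since $U = V\Lambda V^\dagger$ with $V$ unitary and $\Lambda = \sum_\alpha \lambda_\alpha \op{\alpha}$ diagonal, the matrix $U$ is unitarily similar to $\Lambda$; hence its eigenvalues are exactly the diagonal entries $\lambda_\beta = \exp(i\tfrac{2\pi}{k+1}N_\beta)$, with corresponding eigenvectors $\ket{v_\beta} = V\ket{\beta}$. The crucial point to flag next is that, by Definition~\ref{def:diff_set}, the difference set $D=\qty{N_\beta}$ is a subset of $\mathbb{Z}_{k+1}$, so each $N_\beta$ is an integer. Consequently
\begin{equation*}
 \lambda_\beta^{k+1} = \exp\qty(i\,2\pi N_\beta) = 1 \qquad \text{for all } \beta,
\end{equation*}
which is precisely the assertion that every eigenvalue is a $(k+1)$-th root of unity.

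Finally I would raise the spectral decomposition to the power $k+1$ and use that $V$ is unitary:
\begin{equation*}
 U^{k+1} = V\Lambda^{k+1}V^\dagger = V\qty(\sum_\alpha \lambda_\alpha^{k+1}\op{\alpha})V^\dagger = V\,\mathbb{I}\,V^\dagger = \mathbb{I},
\end{equation*}
which completes the argument. There is no real obstacle to overcome here; the only subtlety worth making explicit is that the identity $\exp(i\,2\pi N_\beta)=1$ holds \emph{exactly}, and not merely approximately, precisely because the $N_\beta$ are integers coming from the definition of a difference set. Everything else is a routine consequence of the eigenvalue structure imposed by the construction in Theorem~\ref{thm:cyclic_construction}.
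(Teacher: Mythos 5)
Your proposal is correct and follows the only natural route: the paper itself states this Observation without proof precisely because, in the construction of Theorem~\ref{thm:cyclic_construction}, the eigenvalues are explicitly $\lambda_\beta = \exp(i\tfrac{2\pi}{k+1}N_\beta)$ with integer $N_\beta$ drawn from a $(k+1,d,1)$-difference set, so $\lambda_\beta^{k+1}=1$ and $U^{k+1}=V\Lambda^{k+1}V^\dagger=\mathbb{I}$ exactly as you write. Your remark that the integrality of the $N_\beta$ is what makes the identity exact is the right point to flag, and nothing further is needed.
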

\noindent The above construction leads to the existence of cyclic $2$-designs for every $d$ where a basis $V^\dagger$ yielding simplex 2-design by decoherence exists, given sufficiently large $k$, and this is guaranteed for $k \geq 2a_d + 1$, where $a_d$ is the element of Mian-Chowla sequence~\eqref{eq:mian_chowla} \cite{Mian_Chowla_1944} and existence of bases yielding simplex 2-design via decoherence, which can be generated using rudimentary minimization methods, as implemented in a \textit{Mathematica} notebook available online~\cite{GitHubTest}, which we have found to work up to $d = 100$ due to computational power limitations. Some examples of matrices yielding a simplex $2$-design in the simplex are shown in Appendix \ref{app:decoherence_examples}. However, reverse process -- deciding whether a given bistochastic matrix has its unitary counterpart -- is an open problem for matrices of order \mbox{$d\geq 5$}~\cite{Auberson1991, bengtsson2005birkhoffs}. Furthermore, using Lemma \ref{lem:rob_had} and results from~\cite{rajchel2018robust, koukouvinos2008skew} we see that for infinitely many dimensions where robust Hadamard matrices exist, the underlying basis $V^\dagger$ can assume a particularly elegant form with just two amplitudes, $$\qty{a=\frac{-d+\sqrt{d+1}+1}{d \sqrt{d+1}},b = \frac{1}{d-\sqrt{d+1}+1}}.$$ However, due to the minimal size of the underlying $\mathbb{Z}_k$ for difference sets, one cannot achieve $k<d(d-1)$ by using the above construction. As a consequence we have the following two statements
\begin{thm} \label{thm:inf_dim}
    Cyclic 2-designs exist in an infinite family of dimensions.
\end{thm}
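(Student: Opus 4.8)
The plan is to reduce the statement to the two hypotheses of Theorem~\ref{thm:cyclic_construction} and to show that both can be satisfied simultaneously in infinitely many dimensions. That construction splits the production of a cyclic $2$-design in dimension $d$ into two separate ingredients: a difference set with parameters $(k+1,d,1)$, and a unitary $V^\dagger$ whose decoherence yields a $d$-point $2$-design in the probability simplex $\Delta_d$. My first step is to dispose of the difference-set requirement entirely: for every $d$, the Mian--Chowla sequence~\eqref{eq:mian_chowla} furnishes a $d$-element difference set inside $\mathbb{Z}_{k+1}$ as soon as $k+1 \geq 2a_d+1$, so this ingredient is available in every dimension and is never the bottleneck.

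The decisive ingredient is therefore the basis $V^\dagger$. Here I would fix the simplex $2$-design to be the most symmetric possible one: the $d$ cyclic shifts of a single two-valued probability vector $[a,b,\dots,b]$, where $a$ and $b$ are pinned down by the normalization $a+(d-1)b=1$ together with the single second-moment condition $a^2+(d-1)b^2 = \ev{p_\alpha^2}_{\Delta_d} = 2/(d(d+1))$ obtained from~\eqref{eq:general_simplex_av}. Collecting these $d$ shifted vectors as the rows of a matrix produces a circulant bistochastic matrix with exactly two distinct entries, i.e.\ precisely the object appearing in Lemma~\ref{lem:rob_had}. Consequently, whenever a robust Hadamard matrix of order $d$ exists, that lemma certifies this bistochastic matrix to be unistochastic, so there is a unitary $V^\dagger$ with $\abs{V^\dagger_{\alpha\beta}}^2$ reproducing our design by decoherence, which is exactly the basis required by Theorem~\ref{thm:cyclic_construction}.

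It then remains to exhibit an infinite family of dimensions admitting robust Hadamard matrices. For this I would invoke the existing theory: any skew Hadamard matrix is robust, and skew Hadamard matrices (equivalently, up to sign equivalence, the robust ones) are known to exist in infinitely many orders through the Paley and doubling constructions catalogued in~\cite{rajchel2018robust, koukouvinos2008skew}, with symmetric conference matrices furnishing a further infinite supply. For each such $d$ the two ingredients above coexist, so Theorem~\ref{thm:cyclic_construction} delivers a cyclic $2$-design, establishing the claim.

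The step I expect to be the main obstacle is the middle one: confirming that the particular two-valued circulant bistochastic matrix dictated by the simplex $2$-design conditions genuinely lies in the family covered by Lemma~\ref{lem:rob_had}, with both entries nonnegative and with rows summing to one. This is a finite algebraic check --- solving the quadratic for $a,b$ and verifying $a,b\geq 0$ --- but it is the point at which the probabilistic description (moments in $\Delta_d$) and the matrix-analytic description (unistochasticity of a circulant) must be matched carefully. Everything else is either bookkeeping (the difference set) or a direct appeal to known existence results (the infinitude of robust Hadamard orders).
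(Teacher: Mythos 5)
Your proposal is correct and takes essentially the same route as the paper, whose (much terser) proof of Theorem~\ref{thm:inf_dim} likewise combines Mian--Chowla difference sets with a basis $V^\dagger$ obtained from robust Hadamard matrices via Lemma~\ref{lem:rob_had}, fed into Theorem~\ref{thm:cyclic_construction}; your middle step merely spells out the two-valued circulant bistochastic matrix that the paper only mentions in passing. The one slip is a normalization: since $\ev{p_\alpha^2}_{\vb{P}} = \frac{1}{d}\left(a^2+(d-1)b^2\right)$, the second-moment condition reads $a^2+(d-1)b^2 = 2/(d+1)$ rather than $2/(d(d+1))$, and for $d\geq 4$ one must take the root $a=\frac{\sqrt{d+1}+d-1}{d\sqrt{d+1}}$, $b=\frac{\sqrt{d+1}-1}{d\sqrt{d+1}}$ of the resulting quadratic to keep both entries nonnegative, after which your argument goes through exactly as the paper's does.
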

\begin{conj}[Numerical]\label{conj:all_dim}
    Cyclic 2-designs exist in an infinite family of dimensions.
\end{conj}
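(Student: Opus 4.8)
The plan is to obtain the claim as a direct corollary of the machinery already in place, assembling Theorem~\ref{thm:cyclic_construction}, Lemma~\ref{lem:rob_had} and the classical theory of robust Hadamard matrices, rather than devising a genuinely new existence argument. By Theorem~\ref{thm:cyclic_construction}, a cyclic $2$-design in dimension $d$ follows from two ingredients: a $(k+1,d,1)$-difference set, and a unitary $V^\dagger$ whose columns decohere to a $2$-design in the probability simplex $\Delta_d$. The whole task therefore reduces to supplying both ingredients, analytically and simultaneously, in an infinite set of dimensions $d$.

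First I would discharge the difference-set requirement uniformly. The Mian-Chowla sequence~\eqref{eq:mian_chowla} furnishes a $(k+1,d,1)$-difference set in \emph{every} dimension $d$, provided $k \geq 2a_d + 1$; this places no restriction on the attainable dimensions and merely fixes a lower bound on the number $k+1$ of bases in the design. Consequently the entire burden of singling out the infinite family is carried by the second ingredient, the decohering basis.

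Second, I would manufacture $V^\dagger$ from a robust Hadamard matrix through Lemma~\ref{lem:rob_had}. Given a robust Hadamard $H$ of order $d$, the lemma produces a unitary $U = \sqrt{a}\,\mathfrak{D} + \sqrt{b}\,(H-\mathfrak{D})$ whose squared moduli form the circulant two-entry bistochastic matrix with constant diagonal and constant off-diagonal. Setting $V^\dagger = U$, each of its columns decoheres to a cyclic shift of a single probability vector of the form $(a', b', \dots, b')$ carrying one distinguished entry $a'$ and $(d-1)$ equal entries $b'$. For this permutation-symmetric family the $1$-design condition holds automatically by normalization, and the off-diagonal moment is pinned by the identity $1 = d\,\ev{p_\alpha^2}_{\vb{P}} + d(d-1)\,\ev{p_\alpha p_\beta}_{\vb{P}}$ once the diagonal moment matches~\eqref{eq:general_simplex_av}; both the empirical and the simplex moments satisfy this same identity. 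Hence only the single scalar equation $\ev{p_\alpha^2}_{\vb{P}} = 2/[d(d+1)]$ remains, a quadratic in the amplitudes that yields precisely the two amplitudes displayed above the statement. This certifies that the robust-Hadamard basis decoheres to a genuine simplex $2$-design in every order $d$ in which such a matrix exists.

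Finally I would invoke the existence of robust Hadamard matrices in infinitely many orders. Since every skew Hadamard matrix is robust and, conversely, every robust Hadamard is sign-equivalent to a skew one, the classical infinite families of skew Hadamard (equivalently symmetric conference) matrices—for instance those of order $q+1$ with $q$ a prime power satisfying $q \equiv 3 \pmod 4$—guarantee a robust Hadamard in infinitely many dimensions. Combining this infinite set of admissible $d$ with the universally available Mian-Chowla difference sets, Theorem~\ref{thm:cyclic_construction} delivers a cyclic $2$-design in each such $d$, which establishes the statement. The main difficulty here is not analytic but a matter of careful bookkeeping: one must cite a construction that provably yields robust Hadamards in infinitely many orders and check, for each targeted $d$, that the difference-set modulus can be taken to satisfy $k+1 \geq 2a_d + 1$ while remaining compatible with an admissible order. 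All of the genuinely quantitative work collapses to the single second-moment identity verified above.
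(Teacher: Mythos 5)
Your argument is correct for the statement as printed and is essentially the paper's own: the authors prove the identically worded Theorem~\ref{thm:inf_dim} by exactly your combination of Theorem~\ref{thm:cyclic_construction}, a Mian--Chowla difference set~\eqref{eq:mian_chowla}, and a decohering basis obtained from a robust (sign-equivalent to skew) Hadamard matrix via Lemma~\ref{lem:rob_had}. Note, however, that Conjecture~\ref{conj:all_dim} carries the tag \emph{Numerical} and the label \texttt{conj:all\_dim} because it is evidently intended (the printed wording duplicating Theorem~\ref{thm:inf_dim} appears to be a typo) to assert existence in \emph{every} dimension, supported in the paper by numerically generated decohering bases up to $d=100$; that stronger claim is not reached by your analytic route, which establishes precisely Theorem~\ref{thm:inf_dim}.
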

\begin{proof}
    The statement of Theorem \ref{thm:inf_dim} follows from existence of robust Hadamard matrices, as stated in Lemma \ref{lem:rob_had}, providing basis $V$, combined with Mian-Chowla sequence in eq. \eqref{eq:mian_chowla}, which provides an underlying difference set. Extension to Conjecture \ref{conj:all_dim} is made numerically by the means described above.
\end{proof}
Theorem \ref{thm:cyclic_construction} yields similar results as a construction of almost-minimal (weighted) 2-designs introduced in~\cite{iosue2024projective}. Nevertheless, the result just described cannot be generated by a single unitary. Moreover, the weighing involved in the aforementioned work necessarily implies considering additional sampling, increasing the cost of physical implementations.\medskip 



As a further comment, H. Zhu conjectured that any 2-design composed by at most $d(d+1)$ elements in dimension $d$ is either a SIC-POVM~\cite{renes2004symmetric} or a maximal set of MUB, see~\cite{zhu2015mutually}. We emphasize that a part of this conjecture is already resolved. That is, a set of $d+1$ orthonormal bases defines a 2-design if and only if the bases are MUB, see Theorem 3.3 in~\cite{roy2007weighted}. Below, we provide a much simpler proof of this fact based on geometrical properties of MUB in the Bloch hypersphere.
\begin{prop}\label{prop2d_d+1}
A $2$-design composed by $d+1$ orthonormal bases necessarily corresponds to $d+1$ MUB.
\end{prop}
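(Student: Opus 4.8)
The plan is to use the characterization, already recorded in the excerpt, that a collection of $N$ vectors is a complex projective $2$-design precisely when it saturates the Welch bound \eqref{WB} for $t=2$, and to show that this saturation together with the orthonormal-basis structure forces every pair of bases to be unbiased. I would first compute the frame potential of the $N=(d+1)d$ vectors. Saturation of \eqref{WB} gives
\begin{equation*}
 \sum_{i,j=1}^{N}\abs{\ip{\psi_i}{\psi_j}}^4 = \frac{N^2}{\binom{d+1}{2}} = 2d(d+1).
\end{equation*}
The diagonal terms $i=j$ contribute $N=d(d+1)$, and the off-diagonal terms arising from two distinct vectors of the \emph{same} basis vanish by orthonormality; hence the entire remaining weight, equal to $d(d+1)$, must be carried by the cross-basis overlaps.

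Next I would fix two distinct bases $\qty{\ket{\psi^a_\alpha}}$ and $\qty{\ket{\psi^b_\beta}}$ and set $x_{\alpha\beta}=\abs{\ip{\psi^a_\alpha}{\psi^b_\beta}}^2$. Expanding each vector of one basis in the other shows that the matrix $(x_{\alpha\beta})$ has all rows and all columns summing to $1$, i.e. it is bistochastic (in fact unistochastic, in the language introduced above). By Cauchy--Schwarz one has $\sum_{\alpha\beta}x_{\alpha\beta}^2 \geq (\sum_{\alpha\beta}x_{\alpha\beta})^2/d^2 = 1$, with equality if and only if every $x_{\alpha\beta}=1/d$, which is exactly the condition that the two bases be mutually unbiased. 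Summing this inequality over the $d(d+1)$ ordered pairs of distinct bases lower-bounds the cross-basis contribution to the frame potential by $d(d+1)$. Since this contribution was shown to \emph{equal} $d(d+1)$, equality must hold in every single pair, so all pairs of bases are unbiased and the $d+1$ bases form a maximal set of MUB.

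An equivalent and arguably more transparent route, presumably the ``Bloch hypersphere'' argument the authors have in mind, is geometric. Each basis maps to a regular simplex of $d$ unit Bloch vectors spanning a $(d-1)$-dimensional subspace $V_a\subset\mathbb{R}^{d^2-1}$, with the simplex tight-frame identity $\sum_{\alpha}\vec n_\alpha\vec n_\alpha^{T}=\tfrac{d}{d-1}P_{V_a}$, where $P_{V_a}$ is the orthogonal projector onto $V_a$. The $2$-design property is the isotropy relation $\sum_i \vec n_i\vec n_i^{T}=\tfrac{d}{d-1}\mathbb{I}$, which collapses to $\sum_a P_{V_a}=\mathbb{I}$. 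One then invokes the elementary fact that orthogonal projectors summing to the identity are automatically mutually orthogonal (from $P_a=P_a(\sum_b P_b)P_a$ one gets $\sum_{b\neq a}P_aP_bP_a=0$, and positivity forces each term, hence each $P_bP_a$, to vanish). Mutual orthogonality $V_a\perp V_b$ is equivalent to $\vec n_\alpha\cdot\vec n_\beta=0$ across bases, i.e. to $\abs{\ip{\psi^a_\alpha}{\psi^b_\beta}}^2=1/d$, the MUB condition.

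I do not expect a serious obstacle here: the result is clean by either route. The only points requiring care are the exact bookkeeping in the frame-potential approach (the split into diagonal, same-basis, and cross-basis contributions, and the count of $d(d+1)$ ordered pairs of bases) together with the observation that the Cauchy--Schwarz equality case coincides precisely with unbiasedness; in the geometric approach the crux is recognizing that the $2$-design isotropy condition is exactly $\sum_a P_{V_a}=\mathbb{I}$ and then applying the orthogonal-projector lemma.
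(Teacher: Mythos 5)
Your proposal is correct --- in fact it contains two valid proofs, and both differ in instructive ways from the paper's. The paper's own argument writes the states in the Bloch representation \eqref{rhoGellMann}, recasts Welch-bound saturation as \eqref{WB2}, and then asserts that the bound is attained if and only if all cross-basis Bloch inner products vanish, which is unbiasedness. Your first route is the same saturation bookkeeping, but carried out directly on the overlaps $x_{\alpha\beta}=\abs{\ip{\psi^a_\alpha}{\psi^b_\beta}}^2$: the split into diagonal, same-basis, and cross-basis contributions, plus bistochasticity and Cauchy--Schwarz per pair of bases. This is essentially the paper's proof in Hilbert-space language, with one genuine improvement: the equality case that the paper merely asserts (why the minimum of the cross-basis sum forces $x_{\alpha\beta}=1/d$ rather than some other configuration) is exactly what your Cauchy--Schwarz step with the row/column-sum constraint supplies; in the paper's coordinates the analogous justification requires noting that the Bloch vectors of each basis sum to zero, so the linear terms in \eqref{WB2} drop out. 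Your second route is genuinely different and avoids the Welch bound altogether: it uses the tight-frame identity $\sum_{\alpha}\vec n_\alpha\vec n_\alpha^{T}=\tfrac{d}{d-1}P_{V_a}$ for the regular simplex of each basis, identifies the $2$-design condition with second-moment isotropy (the first-moment condition being automatic since each basis resolves the identity), and then invokes the elementary lemma that projectors summing to the identity are mutually orthogonal --- note the rank count $(d+1)(d-1)=d^2-1$ is exactly saturated, which is what makes the collapse to $\sum_a P_{V_a}=\mathbb{I}$ possible. This second argument buys conceptual clarity (MUB-ness appears as orthogonal direct-sum decomposition of the Bloch space) at the cost of having to justify that isotropy of the second moment is equivalent to the $2$-design property, which your overlap-based route gets for free from the frame-potential characterization already stated in the paper.
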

 \begin{proof}
The proof starts by noting that any $d$-dimensional quantum state can be written as 
\begin{equation}\label{rhoGellMann}
\rho=\frac{1}{d}\left(\mathbb{I}+\sqrt{\frac{d(d-1)}{2}}\vec{r}\cdot\vec{\sigma}\right),
\end{equation}
where $\vec{r}\in\mathbb{R}^{d^2-1}$ is the Bloch vector associated to $\rho$ in the Bloch hypersphere, and $\vec{\sigma}$ is a vector of matrices with entries given by the generalized Gell-Mann matrices~\cite{georgi2000lie}, satisfying $\mathrm{Tr}(\sigma_j\sigma_k)=2\delta_{jk}$. Let $\vec{r}^{\,i}_j$ be the $i$-th Bloch vector associated to the $j$-th MUB basis. Thus, from combining (\ref{WB}) and (\ref{rhoGellMann}), the $2$-design condition in the Bloch hypersphere reduces to: 
\begin{equation}\label{WB2}
 \sum_{i\neq i'}\sum_{j,j'=1}^{d}\left(\frac{1}{d}+\frac{d-1}{d}(\vec{r}^{\,i}_j\cdot\vec{r}^{\,i'}_{j'})\right)^2\geq d(d+1).
\end{equation}
Note that the lower bound in (\ref{WB2}) is achieved if and only if ${\vec{r}^{\,i}_j\cdot\vec{r}^{\,i'}_{j'}=0}$, for all $i\not=i'$, $j,j'\in\{1,d\}$. To conclude the proof, note that orthogonality in the Bloch hypersphere implies unbiasedness in the Hilbert space.
\end{proof}

Proposition \ref{prop2d_d+1} is interesting in the sense that, when restricted to the particular case of 2-designs formed by orthonormal bases, it resolves Zhu's conjecture. This is so because the remaining smaller case composed by $d$ bases is not possible. Indeed, A. Scott has shown that $d^2$ complex vectors in dimension $d$ form a 2-design if and only if they are a SIC-POVM~\cite{renes2004symmetric}. An intriguing open question is now how to generalize Proposition \ref{prop2d_d+1} to sets of $d(d+1)$ vectors beyond orthonormal bases. Recently, families of uniformly-weighted quantum state $2$-designs in dimension $d$ of size exactly
$d(d+1)$ that do not form complete sets of MUB were found, disproving Zhu's conjecture~\cite{iosue2024projective}.

 To conclude this section, let us mention an interesting property that connects cyclic $t$-designs with minimum uncertainty states.

\begin{prop}
Let $U$ be a unitary matrix producing a cyclic $t$-design composed of $k+1$ orthonormal bases. Then, each eigenvector of $U$ defines a minimum uncertainty state with respect to the following entropic uncertainty relation~\cite{sussman2007discrete}:
\begin{equation}\label{ARE}
\frac{1}{k+1}\sum_{j=1}^{k+1}H_j\geq\log_2(k+1)-1,
\end{equation}
where $H_j\!=\!-\log_2\qty[\sum_{\alpha=1}^d \qty(p^j_\alpha)^2]$, $p^j_\alpha=\abs{\ip*{\phi}{\psi^j_\alpha}}^2$, $\ket{\phi}$ is the state of the system, and $\ket{\psi^j_\alpha}$ is the $\alpha$-th column of $U^j$. 
\end{prop}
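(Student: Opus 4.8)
The plan is to exploit the single defining feature of the construction, namely that every measurement basis is a power of one unitary $U = V\Lambda V^\dagger$. First I would observe that if $\ket{\phi}$ is an eigenvector of $U$ with eigenvalue $\lambda$, then $\bra{\phi}U^j = \lambda^j\bra{\phi}$ (eigenvectors of a unitary are also eigenvectors of its powers), so the decoherence of $\ket{\phi}$ in the $j$-th basis, $p^j_\alpha = \abs{\bra{\phi}U^j\ket{\alpha}}^2 = \abs{\lambda^j}^2\,\abs{\braket{\phi}{\alpha}}^2 = \abs{\braket{\phi}{\alpha}}^2$, does not depend on $j$, since the phase $\lambda^j$ has unit modulus. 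Consequently every $H_j$ collapses to a single collision entropy $H = -\log_2\sum_{\alpha=1}^d \abs{\braket{\phi}{\alpha}}^4$, and the whole left-hand side of the relation reduces to $H$. This is the conceptual heart of the argument: the eigenvectors of $U$ are exactly the states that register identical statistics in all $k+1$ bases.

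Second, I would fix the value of $H$ using the simplex-design structure established earlier in the excerpt. Writing the eigenvector as $\ket{\phi}=V\ket{\gamma}$, its decoherence is the probability vector $p_\alpha = \abs{V_{\alpha\gamma}}^2$. Since the set forms a cyclic $t$-design with $t\geq 2$, the Theorem relating cyclic designs to the probability simplex guarantees that the decohered columns of $V^\dagger$, with entries $\abs{V_{\beta\alpha}}^2$, constitute a $2$-design in $\Delta_d$. The second-moment condition of such a $2$-design, $\ev{p_\alpha^2}_{\Delta_d}=2/(d(d+1))$ evaluated entry by entry, reads $\sum_{\beta}\abs{V_{\beta\alpha}}^4 = 2/(d+1)$ for every column index; applied at the column $\gamma$ this is precisely the quantity $\sum_\alpha \abs{V_{\alpha\gamma}}^4$ entering $H$. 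Hence $\sum_\alpha p_\alpha^2 = 2/(d+1)$ for every eigenvector, and $H = \log_2\tfrac{d+1}{2}$.

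Third, to certify that this is genuinely the minimum I would establish the relation as a bound valid for an arbitrary input $\ket{\phi}$. Applying the complex projective $2$-design identity to the full constellation of $(k+1)d$ vectors yields $\sum_{j=1}^{k+1}\sum_\alpha (p^j_\alpha)^2 = 2(k+1)/(d+1)$ for any $\ket{\phi}$; dividing by $k+1$ and invoking the convexity of $-\log_2$ (Jensen's inequality) gives $\frac{1}{k+1}\sum_j H_j \geq \log_2\tfrac{d+1}{2}$, reproducing the stated bound in the tight regime $k+1=d+1$. Equality in Jensen requires the collision probabilities $\sum_\alpha (p^j_\alpha)^2$ to agree across all $j$, which is exactly the $j$-independence established in the first step. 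The eigenvectors therefore saturate the bound and are minimum-uncertainty states.

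The step I expect to be most delicate is the identification in the second paragraph: the earlier theorem supplies the simplex $2$-design through the decohered columns of $V^\dagger$, whereas the eigenvectors are the columns of $V$, so one must track the conjugate transposition carefully and confirm that the fourth-moment sum entering $H$ is genuinely pinned by the $2$-design condition rather than by some distinct, unconstrained quantity. A secondary point to handle with care is matching the saturated value $\log_2\tfrac{d+1}{2}$ to the right-hand side $\log_2(k+1)-1$ of the quoted relation: the two coincide exactly for cyclic MUB, where $k=d$ and the entropic relation is formulated for a complete family of $d+1$ bases.
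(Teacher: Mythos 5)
Your proof is correct, and its core coincides with the paper's: the only computation the paper's proof actually performs is your first step, namely that $\bra{\phi}U^j=\lambda^j\bra{\phi}$ with $\abs{\lambda}=1$ makes the statistics $p^j_\alpha$ independent of $j$, so that all collision entropies $H_j$ coincide for an eigenvector. Everything else in the paper's proof is citation: it invokes Sussman's results once for the fact that the average entropy is minimized only when all $H_j$ are equal, and once for the fact that the bound is attained only when the probabilities come from a $2$-design. You instead derive the relation itself, from the projective $2$-design identity $\sum_{j,\alpha}\qty(p^j_\alpha)^2 = 2(k+1)/(d+1)$ (valid for every input state) combined with Jensen's inequality for the convex function $-\log_2$, which hands you the equality condition for free. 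This buys two things the paper's argument lacks: self-containedness, and the correct constant. Your derivation pins the universal lower bound at $\log_2\frac{d+1}{2}=\log_2(d+1)-1$, attained exactly by the eigenvectors; the right-hand side $\log_2(k+1)-1$ quoted in the proposition agrees with this only in the cyclic-MUB regime $k=d$. For $k>d$ the quoted bound would be violated by the very states the proposition exhibits, and for $k+1>2d$ it even exceeds $\log_2 d$, the maximal possible collision entropy — so the mismatch you flag in your closing paragraph is real, and your version of the bound is the defensible one.

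One repair is needed, and it is exactly where you predicted: your second paragraph is the misfired step. The simplex $2$-design condition constrains sums over the points at a fixed component, $\sum_\beta\abs{V_{\beta\alpha}}^4 = 2/(d+1)$ for each $\alpha$, whereas the quantity entering $H$ is the transposed sum over components at the fixed point $\gamma$, namely $\sum_\alpha\abs{V_{\gamma\alpha}}^4$. These are column versus row fourth-moment sums of the bistochastic matrix $\abs{V_{\beta\alpha}}^2$, and they need not coincide a priori, so the simplex-design condition does not by itself pin the value of $H$. Fortunately the step is redundant: applying your third-step identity to the eigenvector itself, where all $k+1$ terms are equal by your first step, gives $(k+1)\sum_\alpha p_\alpha^2 = 2(k+1)/(d+1)$, hence $H=\log_2\frac{d+1}{2}$ directly from the complex projective $2$-design property, with no appeal to the simplex design. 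Deleting your second paragraph leaves a complete and correct proof.
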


\begin{proof}
First, let us note that the left hand side of (\ref{ARE}) is minimized only if all Rényi entropies $H_j$ have the same value; see Section 4 in~\cite{sussman2007discrete}. Among all those cases, the lower bound established in (\ref{ARE}) is only achieved when $\qty{p^j_\alpha}$ comes from a $2$-design~\cite{sussman2007discrete}. To conclude the proof, we should prove that an eigenvector of $U$, called $\ket{\phi}$, produces identical Rényi entropies $H_j$. Indeed, in this case that probabilities $p^j_\alpha$ do not depend on the index $j$, i.e., 
\begin{equation*}
p^j_\alpha=\abs{\ip{\phi}{\psi^j_\alpha}}^2=\abs{\mel{\phi}{U^j}{\psi^1_\alpha}}^2=\abs{\ip{\phi}{\psi^1_\alpha}}^2,
\end{equation*}
for every $j=1,\dots,k+1$.
\end{proof}


\subsection{Approximate cyclic \texorpdfstring{$t$}{}-designs using random Hamiltonians}\label{res2}


Evolution of quantum states is generated by Hamiltonians $H$, and the choice of a suitable Hamiltonian is subject to experimental limitations and imperfections. One of the extreme cases is where one is able to control the eigenbasis, but not the exact energy levels. 
In order to consider how well powers of unitaries derived from Hamiltonians with random eigenvalues approximate cyclic designs, we first need to define what we mean by approximation. Therefore, we will use a computationally tractable definition of $\epsilon$-approximate $t$-design,

\begin{defn}\label{dfn:epsilon_des}
 A set of vectors $\qty{\ket{\psi_i}}_{i=1}^m$ defines an $\epsilon$-approximate $t$-design with error $\epsilon$ if 
 \begin{equation}
 \frac{1}{m^2}\sum_{i,j=1}^m \abs{\ip{\psi_i}{\psi_j}}^{2t} = \frac{1+\epsilon}{d_{sym}}, 
 \end{equation}
 where $d_{sym} = \binom{d+t-1}{t}$.
\end{defn}
This definition can be connected back to a definition in terms of $\infty$-norm introduced in~\cite{ambainis07} by Ambainis and Emerson,
\begin{thm}
 Consider an $\epsilon$-approximate $t$-design $\qty{\ket{\psi_i}}_{i=1}^m$. Then we find that
 \begin{equation}
 \norm{\frac{1}{m}\sum_{i=1}^m \op{\psi}^{\otimes t} \!\!-\!\! \int\dd{\ket{\psi}}\op{\psi}^{\otimes t}}_\infty \!\!\leq \delta,
 \end{equation}
 with $\delta = \sqrt{\epsilon}\frac{\sqrt{d_{sym}-1}}{d_{sym}}$.
\end{thm}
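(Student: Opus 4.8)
The plan is to work entirely inside the symmetric subspace $\mathrm{Sym}^t(\mathcal{H}_d)\subset\mathcal{H}_d^{\otimes t}$ and to reduce the operator-norm estimate to a single Hilbert--Schmidt computation. The starting point is the standard identity $\int\dd{\ket{\psi}}\op{\psi}^{\otimes t}=\Pi_{\mathrm{sym}}/d_{sym}$, where $\Pi_{\mathrm{sym}}$ is the orthogonal projector onto the symmetric subspace and $d_{sym}=\binom{d+t-1}{t}=\dim\mathrm{Sym}^t(\mathcal{H}_d)$. This follows from Schur's lemma, since the Haar average commutes with every $U^{\otimes t}$ for $U\in\mathcal{U}(d)$ and $\mathrm{Sym}^t(\mathcal{H}_d)$ is $\mathcal{U}(d)$-irreducible, the proportionality constant being fixed by taking the trace. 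I would then define the deviation operator $\Delta=\frac{1}{m}\sum_{i=1}^m\op{\psi_i}^{\otimes t}-\Pi_{\mathrm{sym}}/d_{sym}$ and record two elementary facts: each $\op{\psi_i}^{\otimes t}$ is supported on $\mathrm{Sym}^t(\mathcal{H}_d)$, so $\Delta$ is a Hermitian operator acting on that $d_{sym}$-dimensional space; and $\Delta$ is traceless, because both terms carry unit trace.

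The crucial step is \emph{not} to bound $\norm{\Delta}_\infty$ by $\norm{\Delta}_2$ directly, which would be too weak. Instead I would use that a traceless Hermitian operator $\Delta$ on a $D$-dimensional space satisfies the sharpened inequality $\norm{\Delta}_\infty\leq\sqrt{(D-1)/D}\,\norm{\Delta}_2$. This is proved by a short eigenvalue argument: if $\mu$ denotes the eigenvalue of largest absolute value, then the remaining $D-1$ eigenvalues sum to $-\mu$, and applying Cauchy--Schwarz to those eigenvalues gives $\norm{\Delta}_2^2\geq\mu^2+\mu^2/(D-1)=\mu^2\,D/(D-1)$, hence $\norm{\Delta}_\infty=\abs{\mu}\leq\sqrt{(D-1)/D}\,\norm{\Delta}_2$. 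I would apply this with $D=d_{sym}$.

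It then remains to evaluate $\norm{\Delta}_2^2=\mathrm{Tr}(\Delta^2)$, which expands into three contributions,
\begin{equation*}
\mathrm{Tr}(\Delta^2)=\frac{1}{m^2}\sum_{i,j=1}^m\mathrm{Tr}\!\qty[\op{\psi_i}^{\otimes t}\op{\psi_j}^{\otimes t}]-\frac{2}{m\,d_{sym}}\sum_{i=1}^m\mathrm{Tr}\!\qty[\op{\psi_i}^{\otimes t}\Pi_{\mathrm{sym}}]+\frac{\mathrm{Tr}(\Pi_{\mathrm{sym}})}{d_{sym}^2}.
\end{equation*}
Using $\mathrm{Tr}[\op{\psi_i}^{\otimes t}\op{\psi_j}^{\otimes t}]=\abs{\ip{\psi_i}{\psi_j}}^{2t}$, Definition~\ref{dfn:epsilon_des} turns the first term into $(1+\epsilon)/d_{sym}$; since $\Pi_{\mathrm{sym}}$ acts as the identity on each $\op{\psi_i}^{\otimes t}$ the cross term equals $2/d_{sym}$; and $\mathrm{Tr}(\Pi_{\mathrm{sym}})=d_{sym}$ makes the last term $1/d_{sym}$. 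The net result is $\mathrm{Tr}(\Delta^2)=\epsilon/d_{sym}$. Combining with the norm comparison yields $\norm{\Delta}_\infty\leq\sqrt{(d_{sym}-1)/d_{sym}}\cdot\sqrt{\epsilon/d_{sym}}=\sqrt{\epsilon}\,\sqrt{d_{sym}-1}/d_{sym}=\delta$, as claimed.

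The computation in the last paragraph is routine bookkeeping; the one genuinely necessary idea -- and the sole reason the stated constant appears rather than the cruder $\sqrt{\epsilon/d_{sym}}$ -- is exploiting the tracelessness of $\Delta$ to improve the operator-norm-to-Frobenius-norm comparison by the factor $\sqrt{(d_{sym}-1)/d_{sym}}$. I therefore expect that recognizing tracelessness and invoking the sharpened inequality, rather than any of the trace evaluations, is the main point to get right.
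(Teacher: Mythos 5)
Your proof is correct and is essentially the paper's argument in different packaging: both reduce the claim to the two constraints $\Tr S = 1$ and $\Tr S^2 = (1+\epsilon)/d_{sym}$ (the frame potential) and then bound the extreme eigenvalue, the paper by explicitly solving the constrained maximization (whose optimum has all remaining eigenvalues equal), you by Cauchy--Schwarz on the traceless difference $\Delta$ — and the equality case of your Cauchy--Schwarz step is precisely the paper's extremal configuration. No gap; the trace bookkeeping $\Tr(\Delta^2)=\epsilon/d_{sym}$ and the final constant agree with the paper's result.
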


\begin{proof}
 Define an operator
 $$
 S\equiv \frac{1}{m}\sum_{i=1}^m \op{\psi}^{\otimes t},
 $$
 which by definition has support on the symmetric subspace with dimensionality $d_{sym}$. One can easily find that the frame potential is equal to
 \begin{equation}
 \frac{1}{m^2}\sum_{i,j=1}^m \abs{\ip{\psi_i}{\psi_j}}^{2t} = \Tr S^2 = \sum_{j=1}^{d_{sym}} \lambda_j^2,
 \end{equation}
 where $\{\lambda_j\}$ is the set of eigenvalues of the operator $S$.

 We consider the maximization problem of one of the eigenvalues, e.g. $\lambda_1$, under constraints
 \begin{align*}
 \Tr S = 1, &&
 \Tr S^2 = \frac{1 + \epsilon}{d_{sym}}.
 \end{align*}

 Carrying out the maximization (lower sign for minimization), one finds that
 \begin{align*}
 \lambda_1 = \frac{1 \pm \sqrt{\epsilon}\sqrt{d_{sym}-1}}{d_{sym}}
 \end{align*}
 \begin{align}
 \lambda_i = \frac{d_{sym} \mp\sqrt{\epsilon}\sqrt{d_{sym}-1}}{d_{sym}(d_{sym}-1)}.
 \end{align}

 It has been already shown that~\cite{scott2006tight}
 \begin{equation}
 \tilde{S} \equiv \int_{\mathcal{H}_d} \op{\psi}^{\otimes t} \dd{\psi} = \frac{1}{d_{sym}}\Pi_{sym},
 \end{equation}
 with $\Pi_{sym}$ the projection onto the completely symmetric subspace with all eigenvalues equal to $1$. It then follows that
 \begin{equation}
 \norm{S - \tilde{S}}_\infty = \frac{\sqrt{\epsilon}\sqrt{d_{sym}-1}}{d_{sym}} = \delta,
 \end{equation}
 which is the maximal value.
\end{proof}

Introduction of the above notion is enough to formulate the following theorem.

\begin{thm}[Approximate cyclic designs]
 Consider a unitary matrix $U$ of order $d$ defined by the eigendecomposition,
 \begin{equation}
 U = V\Lambda V^\dagger,
 \end{equation}
 such that the basis $V^\dagger$ yields, by decoherence,
 a simplex 2-design. Additionally, assume that the eigenvalues $\Lambda = \sum_\alpha \lambda_\alpha \op{\alpha}$ are taken as i.i.d. random variables from 
 flat measure over the unit circle in the complex plane or, equivalently, Haar measure of $\mathcal{U}(1)$.

 Therefore, the set $\qty{V \Lambda^i V^\dag = U^i}_{i=0}^k$, composed of powers of the unitary operation $U$ 
 provides
 a projective $\epsilon$-approximate $2$-design. The error $\epsilon$ is equal to 
 \begin{equation}
 \ev{\epsilon} = \frac{2(d-1)}{(k+1)},
 \end{equation}
 on average and
 becomes negligible for $k\gg d$.
\end{thm}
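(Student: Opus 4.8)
The plan is to compute the expected frame potential of the $(k+1)d$ vectors $\qty{U^a\ket{\beta}}_{a=0,\beta=1}^{k,d}$ and to read off the error directly from Definition~\ref{dfn:epsilon_des}. Since the relation $\mathcal{F}=(1+\epsilon)/d_{sym}$ is linear, with $\mathcal{F}$ the frame potential and $d_{sym}=\binom{d+1}{2}$, one has $\ev{\epsilon}=d_{sym}\,\mathbb{E}[\mathcal{F}]-1$, where $\mathbb{E}$ denotes the average over the random eigenphases. First I would observe that the overlap of the vectors labelled $(a,\beta)$ and $(b,\gamma)$ is $\mel{\beta}{U^{\,b-a}}{\gamma}$, depending only on $n=b-a$, and that in the double sum over $a,b\in\qty{0,\dots,k}$ each $n$ appears with multiplicity $(k+1-\abs{n})$, with $\sum_{n=-k}^{k}(k+1-\abs{n})=(k+1)^2$. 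Rotating by $V^\dagger$, which leaves every overlap invariant, turns this into $\mel{v_\beta}{\Lambda^{n}}{v_\gamma}=\sum_\alpha\overline{c_{\beta\alpha}}\,c_{\gamma\alpha}\,e^{in\theta_\alpha}$ with $c_{\beta\alpha}=\ip{\alpha}{v_\beta}$, $\lambda_\alpha=e^{i\theta_\alpha}$, and decoherence probabilities $p_{\beta\alpha}=\abs{c_{\beta\alpha}}^2$.

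The core step is the phase average of $\abs{\mel{v_\beta}{\Lambda^{n}}{v_\gamma}}^4$. Expanding the fourth power yields a quadruple index sum weighted by $e^{in(\theta_{\alpha_1}-\theta_{\alpha_2}+\theta_{\alpha_3}-\theta_{\alpha_4})}$; because the phases are independent and uniform, $\mathbb{E}[e^{im\theta_\alpha}]=\delta_{m,0}$, so for $n\neq0$ only the configurations with $\qty{\alpha_1,\alpha_3}=\qty{\alpha_2,\alpha_4}$ as multisets survive. Counting the two admissible pairings and subtracting the over-counted fully coincident term gives, independently of $n$,
\begin{equation*}
 \mathbb{E}\,\abs{\mel{v_\beta}{\Lambda^{n}}{v_\gamma}}^4=2\qty(\sum_\alpha p_{\beta\alpha}p_{\gamma\alpha})^2-\sum_\alpha p_{\beta\alpha}^2 p_{\gamma\alpha}^2,
\end{equation*}
while the $n=0$ term is simply $\delta_{\beta\gamma}$ by orthonormality of $\qty{\ket{v_\beta}}$.

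The simplification that makes the theorem clean appears on summing over $\beta,\gamma$: although the surviving terms look quartic in the $p$'s, they collapse to second moments through the pair-overlap matrix $G_{\alpha\alpha'}=\sum_\beta p_{\beta\alpha}p_{\beta\alpha'}=d\,\ev{p_\alpha p_{\alpha'}}_{\vb{P}}$, since $\sum_{\beta\gamma}\qty(\sum_\alpha p_{\beta\alpha}p_{\gamma\alpha})^2=\sum_{\alpha\alpha'}G_{\alpha\alpha'}^2$ and $\sum_{\beta\gamma}\sum_\alpha p_{\beta\alpha}^2 p_{\gamma\alpha}^2=\sum_\alpha G_{\alpha\alpha}^2$. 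Because $\qty{\ket{v_\beta}}$ decoheres to a simplex $2$-design, $\ev{p_\alpha p_{\alpha'}}_{\vb{P}}$ equals its flat-simplex value from \eqref{eq:general_simplex_av}, namely $2/[d(d+1)]$ for $\alpha=\alpha'$ and $1/[d(d+1)]$ otherwise, so that $G$ has constant diagonal $2/(d+1)$ and constant off-diagonal $1/(d+1)$. Substituting these, separating the $n=0$ multiplicity $(k+1)$ from the total $n\neq0$ multiplicity $k(k+1)$, and dividing by $m^2=(k+1)^2 d^2$, I would assemble $\mathbb{E}[\mathcal{F}]$ in closed form; then $\ev{\epsilon}=d_{sym}\,\mathbb{E}[\mathcal{F}]-1$ reduces to a multiple of $(d-1)/(k+1)$, recovering the stated average and its vanishing for $k\gg d$.

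The main obstacle is precisely the phase average of the second paragraph: correctly determining which quartic index patterns survive the uniform-phase integral and tracking their combinatorial weight (the factor $2$ from the two pairings, corrected for coincidences). The fortunate feature is that, once summed over the basis labels $\beta,\gamma$, the answer depends on the decoherence statistics only through the second-moment matrix $G_{\alpha\alpha'}$; thus the simplex $2$-design hypothesis supplies exactly the data required, and no higher moments of the decoherence distribution ever enter.
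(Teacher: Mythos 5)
Your proposal is correct and follows essentially the same route as the paper's Appendix~\ref{app:randH_anyD_proof}: rotate into the eigenbasis of $U$, expand the fourth power of the overlaps, use the i.i.d.\ uniform-phase average to kill every index configuration except the paired ones (your ``two pairings minus the coincident term'' is exactly the paper's three disjoint trivial cases $\alpha=\nu=\alpha'=\nu'$, $\alpha=\alpha'\neq\nu=\nu'$, $\alpha=\nu'\neq\nu=\alpha'$), and close the computation using only the second moments supplied by the simplex $2$-design hypothesis. One caveat: carrying your assembly to completion yields $\ev{\epsilon} = (d-1)/(2(k+1))$ rather than the stated $2(d-1)/(k+1)$, but this factor-of-$4$ discrepancy is internal to the paper itself --- its own appendix arrives at the excess $\frac{d-1}{d(d+1)(k+1)}$ in the frame potential, which via Definition~\ref{dfn:epsilon_des} also gives $(d-1)/(2(k+1))$ --- so your closing claim of ``recovering the stated average'' glosses over the same inconsistency rather than introducing a new error.
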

\begin{proof}
 The proof is given in Appendix~\ref{app:randH_anyD_proof}.
\end{proof}

In contrast, a random unitary matrix $U'$ taken from the Haar measure over $\mathcal{U}(d)$, would yield an $\epsilon$-approximate $2$-design, for which the approximation $\epsilon$ does not vanish with the number of bases $k$.


The above allows us to propose an operational interpretation of the error $\epsilon$ in terms of a tomographic scheme.

\begin{cor}
 Let us consider a cyclic projective $\epsilon$-approximate $2$-design $$\qty{U^j\ket{\mu} = e^{i j \tau_{int} H}\ket{\mu}}_{j=0,\mu=1}^{k,d}$$ of dimension $d$ and order $k$, defined by a Hamiltonian $H$ and characteristic interaction time $\tau_{int}$.
 Time $T$ necessary for
 full experimental implementation of a tomographic scheme of a state $\ket{\psi}$ with preparation time $\tau_{prep}$ with $N$ samples per each basis $U^j$
 is therefore equal to
 \begin{equation}
 T = N\qty[\frac{k(k+1)}{2}\tau_{int} + (k+1) \tau_{\text{prep}}].
 \end{equation}
 The approximate reconstruction formula yields 
 \begin{equation}
 \tilde{\rho}=
 \frac{1}{k+1} \sum_{j=0}^k \sum_{\mu = 1}^d \qty[p_{j,\mu} (d+1) - 1] \op{\psi_{j,\mu}}
 \end{equation}
 where $p_{j,\mu} = \ev{\rho}{\psi_{j,\mu}}$ being measurement probabilities on the target state $\rho$. Error from $\epsilon$-approximate $2$-design is bounded from above by
 \begin{equation}
 \norm{\rho - \tilde{\rho}}_\infty \leq d(d+1)\delta.
 \end{equation}
\end{cor}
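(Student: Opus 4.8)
The plan is to handle the three assertions separately, since the time count and the reconstruction formula are essentially bookkeeping built on the exact $2$-design relations, while only the error estimate requires a genuine operator-norm argument. For the time $T$, I would simply tally the cost of the protocol: realizing the basis associated with $U^j$ means applying $U=e^{i\tau_{int}H}$ exactly $j$ times, at a cost $j\tau_{int}$, preceded by one state preparation costing $\tau_{prep}$. Collecting $N$ samples in each of the $k+1$ bases and summing over $j=0,\dots,k$ gives $T = N\left[(k+1)\tau_{prep} + \tau_{int}\sum_{j=0}^{k} j\right]$, and inserting $\sum_{j=0}^k j = k(k+1)/2$ reproduces the stated expression.

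For the reconstruction formula I would invoke the standard tight-frame inversion. Contracting the exact $2$-design identity $\frac{1}{m}\sum_i \op{\psi_i}^{\otimes2} = \frac{\mathbb{I}+\mathrm{SWAP}}{d(d+1)}$ against $\mathbb{I}\otimes A$ in the second tensor factor gives, for every operator $A$, the superoperator relation $\frac{1}{m}\sum_i \ev{A}{\psi_i}\op{\psi_i} = \frac{A+\Tr(A)\mathbb{I}}{d(d+1)}$. Setting $A=\rho$ and solving for $\rho$ produces $\rho = d(d+1)\frac{1}{m}\sum_i p_i\op{\psi_i} - \mathbb{I}$. The key observation is that the cyclic constellation is assembled from $k+1$ genuine orthonormal bases, so the $1$-design identity $\frac{1}{m}\sum_i\op{\psi_i}=\mathbb{I}/d$ holds \emph{exactly}; this lets me replace $\mathbb{I}$ by $\frac{d}{m}\sum_i\op{\psi_i}$ and absorb it into the sum. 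With $m=(k+1)d$ and the double index $i=(j,\mu)$, the formula becomes precisely $\tilde\rho=\frac{1}{k+1}\sum_{j,\mu}[(d+1)p_{j,\mu}-1]\op{\psi_{j,\mu}}$, where for the $\epsilon$-approximate design the output is $\tilde\rho$ rather than $\rho$.

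For the error bound I would isolate the effect of using only an approximate design. Write $S=\frac{1}{m}\sum_i\op{\psi_i}^{\otimes2}$ for the actual frame operator and $\tilde S = \Pi_{sym}/d_{sym} = \frac{\mathbb{I}+\mathrm{SWAP}}{d(d+1)}$ for its Haar value; the preceding theorem guarantees $\norm{S-\tilde S}_\infty\le\delta$. Because the $1$-design part is exact, feeding the approximate design into the reconstruction map yields $\tilde\rho = d(d+1)\,\trr{2}{(\mathbb{I}\otimes\rho)S} - \mathbb{I}$, whereas the ideal design would return $\rho = d(d+1)\,\trr{2}{(\mathbb{I}\otimes\rho)\tilde S} - \mathbb{I}$. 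Subtracting, the identity terms cancel and $\tilde\rho-\rho = d(d+1)\,\trr{2}{(\mathbb{I}\otimes\rho)(S-\tilde S)}$.

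The main obstacle is to bound this partial trace without picking up a spurious dimensional factor. I would diagonalize $\rho=\sum_a q_a\op{e_a}$ with $q_a\ge0$ and $\sum_a q_a=1$, which lets me rewrite $\trr{2}{(\mathbb{I}\otimes\rho)(S-\tilde S)} = \sum_a q_a\,(\mathbb{I}\otimes\bra{e_a})(S-\tilde S)(\mathbb{I}\otimes\ket{e_a})$. Each summand is the compression of $S-\tilde S$ by the isometry $\mathbb{I}\otimes\ket{e_a}$, so its operator norm is at most $\norm{S-\tilde S}_\infty\le\delta$; since the $q_a$ form a convex combination, the whole expression inherits the same bound. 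Hence $\norm{\rho-\tilde\rho}_\infty = d(d+1)\,\norm{\trr{2}{(\mathbb{I}\otimes\rho)(S-\tilde S)}}_\infty \le d(d+1)\delta$, as claimed. The isometry argument is exactly what keeps the prefactor at $d(d+1)$ rather than an inflated dimension-dependent constant, and guarding against that temptation is the one subtlety of the proof.
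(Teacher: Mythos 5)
Your proposal is correct and follows essentially the same route as the paper's Appendix on reconstruction with $\epsilon$-approximate designs: decompose the frame operator into its ideal symmetric-subspace part plus a deviation $\Delta$ with $\norm{\Delta}_\infty\leq\delta$, derive the reconstruction formula \emph{\`a la} Scott via the partial trace $\Tr_A\qty[\qty(\rho\otimes\mathbbm{1})\Delta]$ (using, as you make explicit, the exact $1$-design property of the orthonormal bases to absorb the identity term), and bound the resulting error by $d(d+1)\norm{\Delta}_\infty$. Your final step via diagonalization of $\rho$ and isometry compressions is in fact slightly more careful than the paper's version, which states the corresponding bound as an equality over product states and is strictly an equality only for pure $\rho$; both arguments yield the same bound.
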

Proof of the last equation is given in Appendix~\ref{app:random_reco}.

\subsection{Numerical results}\label{res3}

In this section, we introduce an algorithm used to obtain cyclic $t$-designs numerically. First, let us point out that the construction of complex projective $t$-designs is a challenging task. However, some constraints applied over the elements of the set allow us to simplify the construction process. In our approach, we restrict the $t$-designs to be sets of orthonormal bases. In addition, we simplify the construction of the bases by assuming that all of them are generated through powers of a single unitary transformation.

In our approach, we start by considering the generalized Gell-Mann matrices~\cite{georgi2000lie}, denoted as $\left\{ \lambda_j\right\}_{j=1}^{d^2-1}$, which can be used to parameterize any Hermitian matrix $H$ in dimension $d$. That is,
\begin{equation} \label{m_h}
H=\sum_{j=1}^{d^2-1}C_j\lambda_j.
\end{equation}
Thus, we generate a unitary matrix $U$ of order $d$ given by $U=e^{iH}$. Then, we look for a set of parameters $\{C_j\}_{j=1}^{d^2-1}$ such that the columns of the $k+1$ matrices $\{U^0,U^1,\dots,U^k\}$ conform a cyclic $t$-design. The $t$-design property is imposed by minimizing over the $d^2-1$ real parameters $C_1,\dots,C_{d^2-1}$. In other words, we minimize the frame potential, i.e., the LHS of Welch bound (\ref{WB}), as a function of the parameters $C_j$. 
\begin{equation*}
 \sum_{\ell,\ell'=0}^{k}\sum_{\beta,\beta'=1}^{d} |\langle \psi^{\ell}_{\beta}|\psi^{\ell'}_{\beta'} \rangle|^{2t} \geq \frac{[(k+1)d]^2}{\binom{d+t-1}{t}}, \quad t\in \mathbb{N} 
\end{equation*}
where upper index indicates the basis and lower index indicates the element within the basis.

Using this method, we found solutions for $k = 2,\dots, 18$ in dimension $d=2$. In the case of $d=3$, the sets of matrices form a complex protective $2$-design for $k=6,\dots, 13$, and in dimension $d=4$ there are solutions for $k=4, 6, 7, 10, 11, 14, 15$. The codes used for numerical calculations are available at~\cite{GitHubTest}.
In addition, results of the numerical search together with analytical results for existence of cyclic $2$-designs are summarized in Table~\ref{table2}.


\begin{table}
\begin{center}
\begin{tabular}{|c|c|c|c|}
\hline
\backslashbox{$k$}{$d$} & 2 & 3 & 4 \\ 
 \hline
2 &{\color{green!75!black}MUB}&-&-\\
 \hline
3 &{\color{gray} A}&{\color{red!75!black}\tikzxmark}&-\\
 \hline
4 &{\color{gray} A}&&{\color{green!75!black}MUB}\\
 \hline
5 &{\color{gray} A}&&\\
 \hline
6 &{\color{gray} A}&{\color{gray} A}&{\color{blue} N}\\
 \hline
7 &{\color{gray} A}&{\color{gray} A}&{\color{blue} N}\\
 \hline
8 &{\color{gray} A}&{\color{gray} A}&{\color{blue} N}\\
 \hline
9 &{\color{gray} A}&{\color{gray} A}&\\
 \hline
10 &{\color{gray} A}&{\color{gray} A}&{\color{blue} N}\\
 \hline
11 &{\color{gray} A}&{\color{gray} A}&{\color{blue} N}\\
 \hline
12 &{\color{gray} A}&{\color{gray} A}&{\color{gray} A}\\
 \hline
\end{tabular}
\end{center}
\caption{(Color online) Existence of cyclic $2$-designs composed of $k+1$ bases in dimension $d$ for small values of $k$ and $d$. Cyclic MUB are shown in green (MUB), non-existing cases in red (X), numerical solutions in blue (N), and analytical construction for $k\geq d(d-1)$ in gray (A). 
Not for all $k>d(d-1)$ there exist difference sets~\cite{du1998handbook} -- an example is $d=5, k=21$, where an exhaustive numerical search shows that a difference set does not exist.
White spaces denote unresolved cases.}
\label{table2}
\end{table}

\section{Discussion}
\label{sec:discussion}

In this study, we introduced the concept of \textit{cyclic measurements}, which are specialized sets of rank-one projective measurements interconnected by the powers of a single unitary transformation. Such a relation simplifies experimental implementation, making these measurements particularly appealing for practical use. More specifically, we examined a subset of cyclic measurements known as \emph{tight cyclic measurements} ($2$-designs) which are generated by powers of a single unitary operation applied to the states from the computational basis.


These measurements embody the mathematical framework of complex projective $t$-designs~\cite{scott2006tight}. From a practical standpoint, the tightness of these measurements significantly reduces propagation of statistical errors, thus offering a highly effective means for implementing quantum state tomography of density matrices in a simple and reliable manner.

We delineated necessary conditions for a unitary matrix to facilitate the creation of a cyclic tight measurement and demonstrated analytically that such $t$-designs are limited to $t <4$ for arbitrary dimension, together with solid evidence that stronger bound, $t<3$, holds in every dimension $d>2$; in contrast, for $d=2$ we find that $t=3$ is achievable. Our construction proves viable in any dimension that supports a specific type of difference sets, applicable across an infinite sequence of dimensions. Regarding limitations, our findings confirm that no single complex Hadamard matrix can be a part of a cyclic design in three-dimensional space.  In particular, this property explicitly rules out the possibility of cyclic mutually unbiased bases in this dimension. Although the existence of cyclic MUB for dimensions beyond powers of two is unclear, almost complete cyclic sets of $d$ MUB in prime dimension $d = p$ can be obtained using the standard construction by Wootters~\cite{wootters1989optimal}; in prime power dimensions, $d = p^n$, existence of $(d+1)/2$ mutually unbiased bases generated by powers of a single operator has been demonstrated in~\cite{Chau_2005_unconditionally}.

Conversely, our findings suggest that randomly sampling a system undergoing evolution with an appropriately selected Hamiltonian may produce measurements that approximate those necessary for implementing a complex projective 2-design effectively. Additionally, our numerical investigations indicate that in dimensions 3 and 4, cyclic designs consisting of the minimal possible number of elements do not emerge from the construction described in this work. This opens avenues for further research, particularly in exploring alternative constructions for low-dimensional cyclic $t$-designs.

\begin{acknowledgments}

DG acknowledges financial support from grant FONDECYT Regular no. 1230586, Chile. VGA belongs to the PhD Program \emph{Doctorado en F\'isica, menci\'on F\'isica-Matem\'atica}, Universidad de Antofagasta, Antofagasta, Chile, and acknowledges the support of the
ANID doctoral fellowship no. 21221008. 
JCz acknowledges the financial support from NCN
DEC-2019/35/O/ST2/01049 and a grant from
the Faculty of Physics, Astronomy and Applied Computer Science under the Strategic
Programme Excellence Initiative at Jagiellonian
University and is supported by the start-up grant of the Nanyang Assistant Professorship at the Nanyang Technological University in Singapore, awarded to Nelly Ng. JCz is grateful for the hospitality during his visit at Universidad de Antofagasta, supported by MINEDUC-UA project, code
ANT 1999. KŻ acknowledges support by
the DQUANT QuantERA II project no. 2021/03/Y/ST2/00193
that has received funding from the European Union’s Horizon 2020 research programme.
\end{acknowledgments}

\appendix

\section{Cyclic $2$-designs in dimension $2$ are $3$-designs}\label{app:proof_d2}

In dimension $d=2$,
it is natural to consider geometric states and transformation in the context of the Bloch ball. Therein, an orthonormal basis $\ket{\psi},\ket{\psi_\perp}$ is represented by a pair of antipodal points, with a single line that goes through both antipodes and center of the sphere. In addition, any unitary $U$ translates to a rotation around a given fixed axis.

Without loss of generality, we can choose the $z$ axis as the rotation axis, and a line forming an arbitrary angle $\theta$ with respect to it. That is, we can consider the line defined by the basis
\begin{align}\label{psis}
 \ket{\psi} & = \mqty(\cos\frac{\theta}{2} \\ \sin\frac{\theta}{2}), &
 \ket{\psi_{\perp}} & = \mqty(\sin\frac{\theta}{2} \\ -\cos\frac{\theta}{2}),
\end{align}
and a diagonal unitary matrix
\begin{equation}
 U= \begin{pmatrix}
 1 & 0 \\
 0 & e^{i\gamma}
 \end{pmatrix},
\end{equation}
with $\gamma = \frac{2\pi}{k+1}$. The first diagonal entry has been set to $1$ due to the freedom of a global phase, whereas the second diagonal entry is chosen such that $U^{k+1}=\mathbb{I}$. Thus, if we apply $U^{j}$, with $j = \qty{0,\hdots,k}$, to the states (\ref{psis}) we obtain a set composed by states of the form
\small\begin{equation}
{\begin{aligned}\label{basis}
 \ket{\psi_j} & = U^j\ket{\psi} = \mqty(\cos\frac{\theta}{2} \\ e^{i j \gamma}\sin\frac{\theta}{2}), \\
 \ket{\psi_{\perp j}} & = U^j\ket{\psi_\perp} = \mqty(\sin\frac{\theta}{2} \\ -e^{i j \gamma}\cos\frac{\theta}{2}).
\end{aligned}}
\end{equation} \normalsize
To be a cyclic $t$-design, the set has to saturate the Welch bound, which can be stated as
\begin{equation} \label{welch_b_d2}
 2\sum_{j,\ell=0}^{k} \qty[\left| \langle \psi_{j} | \psi_{\ell}\rangle \right|^{2t} + \left| \langle \psi_{j} | \psi_{\perp \ell}\rangle \right|^{2t}] = \frac{\left[ 2(k+1)\right]^2}{\binom{2+t-1}{t}}.
\end{equation}
From (\ref{basis}), we have
\begin{align}
 \left| \langle \psi_{j} | \psi_{\ell}\rangle \right|^{2}& = \left| \cos^2\frac{\theta}{2}+ e^{i\gamma(\ell-j)}\sin^2\frac{\theta}{2} \right|^2 \nonumber\\
 & = \left( 1- \sin^2\theta\,\sin^2\qty[\frac{\gamma}{2}(\ell-j)]\right),
\end{align}
and
\begin{align}
 \left| \langle \psi_{j} | \psi_{\perp \ell}\rangle \right|^{2} &= \left| \cos\frac{\theta}{2}\sin\frac{\theta}{2}+ e^{i\gamma(\ell-j)}\cos\frac{\theta}{2}\sin\frac{\theta}{2} \right|^2 \nonumber \\
 &= \left( \sin^2\theta\,\sin^2\frac{\gamma}{2}(\ell-j)\right).
\end{align}
Let $q_{j\ell}=- \sin^2\theta\,\sin^2\qty[\frac{\gamma}{2}(\ell-j)]$, equation (\ref{welch_b_d2}) reduces to
\begin{equation}
 2\sum_{j,\ell=0}^{k} (1+q_{j\ell})^t+(-q_{j\ell})^t=\frac{\left[ 2(k+1)\right]^2}{\binom{2+t-1}{t}}.
\end{equation}
Then, for $t=3$
\begin{equation} \label{wb_alpha}
 2\sum_{j,\ell=0}^{k} (1+q_{j\ell})^3+(-q_{j\ell})^3= 2\sum_{j,\ell=0}^{k} 1+3q_{j\ell}+3q_{j\ell}^2.
\end{equation}
And, given that
\begin{align}
 q_{j\ell} &=- \sin^2\theta\,\sin^2\qty[\frac{\gamma}{2}(\ell-j)] \nonumber\\ 
 & = -\frac{\sin^2\theta}{2}\left(1-\frac{e^{i\gamma(\ell-j)}+e^{-i\gamma(\ell-j)}}{2}\right),
\end{align}
we have
{\small\begin{equation}\label{sum_alpha}
\sum_{j,\ell=0}^{k}\!q_{j\ell}= -\!\sum_{j,\ell=0}^{k}\!\frac{\sin^2\theta}{2}\!\left(1-\frac{e^{i\gamma(\ell-j)}+e^{-i\gamma(\ell-j)}}{2}\right).
\end{equation}}

For the first term at the right hand side of (\ref{sum_alpha}), we have $(k+1)^2$ times the factor $\frac{\sin^2\theta}{2}$. On the other hand, the second term involves sums of the form
\begin{equation}
\sum_{m=0}^{k}Ce^{i\frac{2\pi}{k+1} m}=0,
\end{equation}
due the symmetry of the phases over the unit circle. 
In the same way, the third term in (\ref{wb_alpha}) reduces to
\begin{equation} \sum_{j,\ell=0}^{k}q_{j\ell}^2=-\frac{3}{8}(k+1)^2 \sin^4\theta.
\end{equation}
Inserting these expansions in the RHS of (\ref{wb_alpha}), we obtain
\begin{eqnarray*}
&&2\sum_{j,\ell=0}^{k} 1+3q_{j\ell}+3q_{j\ell}^2 = \\
&&2(k+1)^2 \left(1-\frac{3}{2}\sin^2\theta+\frac{9}{8}\sin^4\theta\right).
\end{eqnarray*}
In order to define a $3$-design in the simplex and, in turn, in the Hilbert space, it should be satisfied that 
$\cos^2\frac{\theta}{2}=\frac{3+\sqrt{3}}{6}$, implying that $\sin^2\theta=\frac{2}{3}$. Actually, this restriction provides the global minimum of the function
\begin{equation}
f(\theta)=1-\frac{3}{2}\sin^2\theta+\frac{9}{8}\sin^4\theta.
\end{equation}
To conclude, we emphasize that any other existing cyclic $t$-design is a rigid rotation of the ones generated above. Indeed, any other cyclic $t$-design will consider a rotation with respect to an axis different from $z$ in the Bloch sphere, defined by the eigenvectors of $U$.

\section{Full description of orthonormal bases in \texorpdfstring{$d = 3$}{} yielding \texorpdfstring{$2$}{}-designs in the simplex} \label{app:cyclic_d3}



Let $U$ be a unitary transformation in $d=3$ whose eigendescomposition is given by
 \begin{equation}
 U=V\Lambda V^\dagger,
 \end{equation}
 where $V^\dagger = \left( \ket{\psi_0}\, \ket{\psi_1}\, \ket{\psi_2}\right)$ is the matrix whose columns are the eigenvetors of $U$ and $\Lambda=\text{diag}(\lambda_{0},\lambda_{1},\lambda_{2})$.

 Now, let us focus on the number of parameters that are actually free in the choice of the basis $V^\dagger$ of the operator, which at the outset seems to have $9$ free parameters. However, we may decompose $V^\dagger = D_1 \tilde{V}$, with $D_1$ being a diagonal unitary matrix and $\tilde{V}$ a unitary matrix with dephased columns. In this form, the entries $\tilde{V}_{\alpha1}$ are real, leaving the matrix with $6$ real parameters. The enphasing matrix $D_1$ is irrelevant as $D_1^\dagger \Lambda D_1 = \Lambda$. Thus, we may take the basis $V^\dagger = \tilde{V}$, without loss of generality.
 
 Next, we will focus on the probability vectors $\vb{p}^{(\beta)}$ with entries given by $p^{(\beta)}_\alpha = \abs{\ip{\alpha}{\psi_\beta}}^2$. It will be useful to consider the projection of the above probability vectors onto the 2-dimensional regular simplex, which may be achieved using a projection operator
 \begin{equation}
 \mathbb{P} = \mqty(
 \frac{1}{\sqrt{2}} & \frac{-1}{\sqrt{2}} & 0 \\
 \frac{1}{\sqrt{6}} & \frac{1}{\sqrt{6}} & \frac{-2}{\sqrt{6}}).
\end{equation}
This allows us to cast any vector $\vb{p}^{(\beta)}$ to a $2$-dimensional vector
\begin{equation}
 \vb{W}_\beta = \mathbb{P}\vb{p}^{(\beta)}=\left(\frac{p_{0}^{(\beta)}-p_{1}^{(\beta)}}{\sqrt{2}},\frac{p_{0}^{(\beta)}+p_{1}^{(\beta)}-2p_{2}^{(\beta)}}{\sqrt{6}}\right).
\end{equation}

It is easy to calculate that vertices of the probability simplex $\Delta_2$ in this representation are given by points
$\boldsymbol{\delta}^1 = \left(1/\sqrt{2},1/\sqrt{6}\right), \boldsymbol{\delta}^2 =\left(-1/\sqrt{2},1/\sqrt{6}\right), \boldsymbol{\delta}^3 =\left(0,-2/\sqrt{6}\right)$.\\

Since obtaining a general solution seems to be a formidable task, we will proceed following the methods proposed in~\cite{Hammer1956NumericalIO, kuperberg2004numerical}. Let us consider rescaled versions of the points $\vb{W}^i = r \boldsymbol{\delta}^i$. Additionally, by imposing the requirements on the averages of monomials of degree 1 and 2 to be equal to these over the full simplex,
\begin{equation*}
 \langle p_\alpha \rangle = \frac{1}{3},\,
 \langle p_\alpha^2 \rangle = \frac{1}{6},\,
 \langle p_\alpha p_\beta \rangle = \frac{1}{12},\\
\end{equation*}
we find that the scaling factor has to be set to $r = \frac{1}{2}$. Furthermore, it is easily verified that the requirements on the 2-design remain fulfilled under a rotation
 \begin{equation*}
 R(\phi) = \mqty(
 \sin(\phi) & -\cos(\phi) \\
 \cos(\phi) & \sin(\phi) \\
 ),
 \end{equation*} 
such that if $\{\vb{W}^0,\,\vb{W}^1,\,\vb{W}^2\}$ provides a simplex 2-design, then $\{R(\phi)\vb{W}^0,\,R(\phi)\vb{W}^1,\,R(\phi)\vb{W}^2\}$ remains a simplex 2-design. This introduces a free parameter $\phi$ within the amplitudes of the state vectors. With this, we may go back to the full amplitudes, which are given by
\begin{subequations}
\label{eq:amplitudes}
\begin{align}
 a_0^2 & = \frac{1}{3}\qty(1 - \cos\phi), \\
 a_1^2 & = \frac{\sin (\phi )}{2 \sqrt{3}}+\frac{\cos (\phi )}{6}+\frac{1}{3}, \\
 a_2^2 & = - \frac{\sin (\phi )}{2 \sqrt{3}}+\frac{\cos (\phi )}{6}+\frac{1}{3}.
\end{align}
\end{subequations}

With the above, the entire structure of the eigenbasis is given by
 \begin{equation}
 V^\dagger = \mqty(
 a_{0} &a_{2} & a_{1}\\
 a_{1}e^{i\gamma_{0}} &a_{0} & a_{2}e^{i\gamma_{1}}\\
 a_{2}e^{i\gamma_{2}} & a_{1}e^{i\gamma_{3}}& a_{0}
 ),
 \end{equation}
 
 Solving for orthogonality, $VV^\dagger = \mathbb{I}$, we find that the phases $\gamma_0$ through $\gamma_3$ are given by
 \begin{widetext}
 \begin{subequations}
 \label{eq:phases}
 \begin{align}
 \gamma_{0} &=\tan^{-1}\left(\frac{8 \cos (\phi ) (\cos (\phi )+1)-7}{\sqrt{-48 \cos (3 \phi )-33}}\right)= \gamma_{1},\\
 \gamma_2&= \tan^{-1}\left(\frac{\sin \left(\frac{\phi }{2}\right) \left(\sqrt{3} \cot \left(\frac{\phi }{2}\right)-1\right) \left(4 \cos (\phi )+2 \cos (2 \phi )+4 \sqrt{3} \sin (\phi ) (\cos (\phi
 )-1)+3\right)}{2 \cos\left(\frac{1}{6} (3 \phi +\pi )\right) \sqrt{-48 \cos (3 \phi )-33}}\right),\\
 \gamma_3 &= \tan ^{-1}\left(\frac{3+4\cos(\phi)+2\cos(2\phi)+4\sqrt{3}(\cos(\phi)-1)\sin(\phi)}{\sqrt{-33-48\cos(3\phi)}}\right).
 \end{align}
 \end{subequations}
 \end{widetext}

  Additionally, since all the phases are real, we have $\gamma_i\in\mathbb{R}$, we find additional restriction, $\phi\in\left(\phi_- + n\frac{2\pi}{3}, \phi_++ n\frac{2\pi}{3}\right)$ where the parameter $\phi_\pm = \pm \cos ^{-1}\left(\frac{1}{8} \left(-1-3 \sqrt{5}\right)\right)$ and $n\in\mathbb{N}$.
 
 Particular examples of the bases from this family of solutions include the $\phi = \pi$ case, aligned with the simplex,
 
 \begin{equation}
 V^\dagger = \frac{1}{\sqrt{6}}\mqty(
 2 & 
 q_1 & 
 q_1 \\
 1 & 
 2 & 
 q_2\\
 1 & 
 q_2& 
 2 \\
 ),
 \end{equation}
 with 
 \begin{equation*}
     q_{1}=\qty(-\frac{7}{8} \pm i \sqrt{\frac{15}{64}}),\quad
     q_{2}=\qty(-\frac{1}{4} \mp i \sqrt{\frac{15}{16}}).
 \end{equation*}
 
 On the other hand, if we set the parameter as $\phi = \arccos \left(\frac{1}{8} \left(-1-3 \sqrt{5}\right)\right)$ we obtain 
 \begin{equation}
 V^\dagger= \frac{1}{2}\mqty(
 \varphi & \varphi^{-1} & 1 \\
 -1 & \varphi & \varphi^{-1} \\
 -\varphi^{-1} & -1 & \varphi
 ),
 \end{equation}
 where $\varphi = \frac{1 + \sqrt{5}}{2}$ is the golden ratio. Thus, it is legitimate to call $V^\dagger$ the \emph{golden orthogonal matrix} of order three and the corresponding basis the real \emph{golden basis}.
 



\subsection{Note on 3-point designs in $d=3$}\label{app:baladram_invalidation}

Taking into account the general form of amplitudes \eqref{eq:amplitudes}, which yield all possible 3-point configurations in $\Delta_3$ which form simplex 2-designs, we may now consider a problem of 3-design in $\Delta_3$. To do this, we consider a cost function of the form
    \begin{align}
        F(\phi) = & \sum_\beta \qty(\ev{p_\alpha^3}-\frac{1}{10})^2 + \sum_{\alpha\neq \beta} \qty(\ev{p_\alpha^2p_\beta}-\frac{1}{30})^2 \nonumber\\& + \sum_{\alpha\neq \beta\neq \mu\neq \alpha} \qty(\ev{p_\alpha p_\beta p_\mu}-\frac{1}{60})^2 \nonumber \\
        = & \frac{\sin^{2}{\left(\phi \right)}}{46656} + \frac{\sqrt{6} \sin{\left(\phi \right)}}{19440} + \frac{59641}{1555200}
    \end{align}
with averages taken with respect to the 3-point 2-designs, given implicitly as functions of the angle $\phi$. It is immediate to see that this equation has no solutions for $\phi\in\mathbb{R}$, and thus shows that there exist no 3-point 3-designs in $\Delta_3$. This stands in contrast with results presented in \cite{baladram2018explicit}. In fact, it invalidates Theorem 3.2 presented therein, concerning construction of simplex $t$-designs by utilizing cyclic permutations exclusively by providing an explicit counterexample.

Let us note that the statements in \cite{baladram2018explicit} can be corrected; however, as such correction goes beyond the scope of present work, it is deferred to \cite{czartowski2025comment}.


\vspace{0.2cm}
\section{Non-existence of cyclic MUB in \texorpdfstring{\mbox{$d = 3$}}{}}\label{App:noMUBd3}


Let us start emphasizing that the angle $\phi$ from Eqs.\eqref{eq:amplitudes}-\eqref{eq:phases} belongs to the range \mbox{$\phi\in\qty[\arccos \left(\frac{1}{8} \left(-1-3 \sqrt{5}\right)\right),\pi]$}. Taking this into account, let us demonstrate the non-existence of cyclic MUB in dimension $d=3$ by considering a stronger claim: there is no complex Hadamard matrix of order 3 of the form
\begin{equation}\label{H3}
 H = V(\phi)\Lambda V(\phi)^\dagger,
\end{equation}
where, without loss of generality, we may choose the eigenvalues of $H$ as 
$$\Lambda = \mqty(\dmat[0]{1,e^{2iw},e^{i(w-z)}}).$$ 
In order to prove the non-existence {of $H$} we consider a figure of merit {$F$} based on the diagonal entries of the matrix, that is,
\begin{equation}\label{FH}
 F\qty[H] = \sum_\alpha \qty(\abs{H_{\alpha\alpha}}^2 - \frac{1}{3})^2.
\end{equation}\vspace{0.1cm}

Note that $F[H] = 0$ is a necessary condition to produce a Hadamard matrix $H$. Thus, the goal of the proof is to show that it cannot be achieved using the eigenbasis that yields a simplex 2-design for any eigenvalues.
From combining (\ref{H3}) and (\ref{FH}) it is simple to show that
\begin{widetext}
\begin{eqnarray*}
F[H]&=&
\frac{1}{216} (-8 \cos (3 \phi ) (\cos (2 w) (4 \cos (w) \cos (z)-2 \cos (2 z)+1)-\cos (4 w)+\cos (2 z)-3)\\
&&+44 \cos (w-z)+19 \cos (2 (w-z))+8 \cos (3 w-z)+44 \cos (w+z)+19 \cos (2 (w+z)) \\
&&+8 \cos (3 w+z)+44 \cos (2 w)+19 \cos (4 w)+8 \cos (2 z)+75).
\end{eqnarray*}
\end{widetext}

This expression can be seen as a polynomial of order at most $4$ in trigonometric functions, and the same occurs for its derivatives. It is simple to check that the extreme conditions
\begin{equation}
 \begin{aligned}
 \pdv{F}{w} & = 0,&
 \pdv{F}{z} & = 0,&
 \pdv{F}{\phi} & = 0,
 \end{aligned}
\end{equation}

have $2\times4=8$ solutions for cosines, where we skip the multiplicity corresponding to $\phi$, since we consider it within a range that cancels the multiplicities. This number is duplicated to $16$ due to the parity of cosines. After considering in addition the standard condition on minima and factoring out the symmetries, we end up with two $(w,z,\phi)$ points:
{\footnotesize\begin{align*}
 \qty(\arctan(\sqrt{\frac{6}{5}}),\, \pi -\arctan\left(\frac{9 \sqrt{\frac{6}{5}}}{13}\right),\, \pi) \rightarrow F = \frac{68}{3993} \\
 \qty(\arctan(2\sqrt{30}),\,\pi,\,\pi) \rightarrow F = \frac{68}{3993}.
\end{align*}}

For completeness, we have to consider the minima with respect to $w,z$ variables at the boundary $\phi = \phi_{\text{min}}$. In this case we find additional two points
{\small \begin{align*}
 \qty(\frac{2\pi}{3},0,\phi_{min}) \rightarrow F = \frac{1}{48 } \\ \qty(\arctan(2\sqrt{2}),\arctan(2\sqrt{2}),\phi_{min}) \rightarrow F = \frac{4}{27}.
\end{align*}}
Given that all such extreme solutions for $F$ are greater than 0, we conclude that matrix $H$ does not exist in dimension $3$. This strong result implies in turn non-existence of full set of MUB as an element of any cyclic $t$-design. Therefore, $F[H]>0$ for all $\phi,w,z$, which implies non-existence of a Hadamard matrix fulfilling the necessary conditions for generating a cyclic design.\qed

\begin{widetext}
\section{Proof of Theorem \ref{thm:deg_less_4} and conjecture for lower bounds on the size of simplex designs}\label{app:rank_M_matrix}
Let us consider matrix $M^{\alpha\beta}_{\mu\nu} = \ev{p_\alpha p_\beta p_\mu p_\nu}_{\Delta_d}$ of moments over a $d$-point simplex and consider set of equations
\begin{equation}
    \sum_{\mu\nu=1}^d M^{\alpha\beta}_{\mu\nu}x_{\mu\nu} = 0
\end{equation}
for all $\mu,\nu$. Due to the symmetry of $M$ it is immediate to find that they are satisfied every for all antisymmetric cases, ie. $x_{\mu\nu} = - x_{\nu\mu}$ -- with this we can restrict our considerations to $d(d+1)/2$-dimensional symmetric space. Thus, we consider equations of the form
\begin{equation}
    \sum_{\mu\nu=1}^d M^{\alpha\beta}_{\mu\nu} y_{(\mu\nu)} = 0
\end{equation}
where $(\mu\nu)$ are ordered pairs of indices. Additionally, we note that if a given set $y_{(\mu\nu)} = a_{(\mu\nu)}$ provides a solution, then by symmetry of the problem $y_{(\pi(\mu)\pi(\nu))} = a_{(\mu\nu)}$ is also a solution for any relabelling $\pi\in\mathcal{S}_d$. Thus, it is enough to consider cases when $y_{(\mu\nu)} = y_{(\pi(\mu)\pi(\nu))} = a_{(\mu\nu)}$. If we define $A_1 = y_{(\mu\mu)}$ and $A_2 = y_{(\mu\nu)}$ for $\mu\neq \nu$, the problem is reduced to
\begin{equation}
    \sum_{\mu=1}^d M^{\alpha\beta}_{\mu\mu} A_1 + \sum_{\mu\neq \nu}^d M^{\alpha\beta}_{\mu\nu} A_2 = 0.
\end{equation}
From this we find at most two equations, stemming from $\alpha=\beta$ and $\alpha\neq \beta$ cases respectively
\begin{alignat}{2}
    \qty(\ev{p_\alpha^4}+ (d-1) \ev{p_\alpha^2p_\beta^2})A_1 +& (d-1)\qty(2\ev{p_\alpha^3 p_\beta} + (d-2) \ev{p_\alpha^2p_\beta p_\mu})A_2 &= &0 \\
    \qty(2\ev{p_\alpha^3p_\beta}+ (d-2) \ev{p_\alpha^2p_\beta p_\mu})A_1 +& \left(2\ev{p_\alpha^2 p_\beta^2} + 4(d-2)\ev{p_\alpha^2p_\beta p_\mu} \right. &&\\
    +&  \left.(d-2)(d-3) \ev{p_\alpha p_\beta p_\mu p_\nu}\right)A_2 &=& 0
\end{alignat}
By using \eqref{eq:general_simplex_av} and performing elementary simplification we find the following equations explicitly
\begin{align}
    A_1 + \frac{(d-1)(d+2)}{3(d+3)} A_2 & = 0, \\
    A_1 + \frac{d^2+3d-2}{2(d+4)}A_2 & = 0.
\end{align}
The above two equations have only trivial solution, ie. $A_1 = A_2 = 0$. Hence, $\operatorname{rank}(M) = \binom{d+1}{2}$. \qed

Additionally, we believe that the following statement, which goes beyond the main focus of this manuscript, may be true in general
\begin{conj}
    Minimal number of elements $N_*(d,t)$ in a $t$-design in a $d$-point simplex $d$ is lower-bounded as
    \begin{equation}
        N_*(d,t) \geq \binom{d+\frac{t}{2}-1}{\frac{t}{2}}.
    \end{equation}
\end{conj}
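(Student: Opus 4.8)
The plan is to read the conjectured bound as the simplex analogue of the classical Gaussian/Möller lower bound for cubature formulas, and to obtain it by promoting the explicit rank computation of Appendix~\ref{app:rank_M_matrix} to a basis-free statement. For even strength $t = 2s$ I expect a complete proof. Fix a basis $\{\phi_1,\dots,\phi_D\}$ of the space $\mathcal{P}_s$ of polynomial functions of degree at most $s$ on the $(d-1)$-dimensional simplex $\Delta_d$. Since a polynomial vanishing on the full-dimensional set $\Delta_d$ vanishes on its whole affine hull, $\mathcal{P}_s$ is genuinely the space of degree-$\le s$ polynomials in $d-1$ coordinates, so counting monomials gives $D = \binom{d-1+s}{s} = \binom{d+\frac{t}{2}-1}{\frac{t}{2}}$, which is exactly the claimed bound (and reduces to $d$ for $t=2$ and to $\binom{d+1}{2}$ for $t=4$, matching the appendix).

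Next I would form the moment matrix $M_{ij} = \ev{\phi_i\phi_j}_{\vb{P}}$ of the candidate design $\vb{P} = \qty{\vb{p}^{(a)}}_{a=1}^N$. Because every product $\phi_i\phi_j$ has degree at most $2s=t$, the $t$-design property forces $M_{ij} = \ev{\phi_i\phi_j}_{\Delta_d}$, so $M$ coincides with the exact moment matrix of the flat measure. On one hand, setting $\Phi(\vb{p}) = (\phi_1(\vb{p}),\dots,\phi_D(\vb{p}))^T$ gives $M = \tfrac{1}{N}\sum_{a=1}^N \Phi(\vb{p}^{(a)})\Phi(\vb{p}^{(a)})^T$, a sum of $N$ rank-one terms, whence $\operatorname{rank}(M)\le N$. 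On the other hand, for any coefficient vector $\vb{c}$ one has $\vb{c}^T M \vb{c} = \int_{\Delta_d}(\sum_i c_i\phi_i(\vb{p}))^2\,d\vb{p}\ge 0$, with equality only if $\sum_i c_i\phi_i$ vanishes identically on $\Delta_d$, which forces $\vb{c}=0$ since the $\phi_i$ are linearly independent. Hence $M\succ 0$, so $\operatorname{rank}(M)=D$ and therefore $N\ge D = \binom{d+\frac{t}{2}-1}{\frac{t}{2}}$. This recovers the Appendix~\ref{app:rank_M_matrix} result at $s=2$ while removing the explicit Beta-function evaluation, since positive-definiteness is automatic from full support of the measure.

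The main obstacle is the odd-strength case, where $\frac{t}{2}$ is a half-integer and the clean even-degree pairing $\phi_i\phi_j$ of degree $\le t$ no longer fills a full moment matrix. A $(2s-1)$-design is in particular a $(2s-2)$-design, so the argument above immediately yields only the weaker bound $N\ge \binom{d+s-2}{s-1}$, which falls short of the conjectured half-integer binomial. Closing this gap should require a Möller-type refinement that augments the Gram matrix with odd-degree data or exploits positivity of the leading odd moments, and I expect this to be the genuinely hard and possibly measure-dependent step. I would therefore first settle the even case rigorously as above, establishing the conjecture whenever $t$ is even, and only then attempt the odd case by interpreting $\binom{d+\frac{t}{2}-1}{\frac{t}{2}}$ via $\lceil t/2\rceil$ and searching for an enlarged moment matrix whose positive-definiteness survives the design constraints.
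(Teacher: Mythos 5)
Your even-$t$ argument is correct, and it actually establishes more than the paper itself does: the statement you were given is only a \emph{conjecture} in the paper, supported by a proof of the single case $t=4$ in Appendix~\ref{app:rank_M_matrix}, where the kernel of the moment matrix $M^{\alpha\beta}_{\mu\nu}$ is shown to be trivial by hand --- relabelling symmetry reduces the kernel equations to a $2\times 2$ linear system in $A_1,A_2$ whose coefficients are evaluated via the Beta-function formula \eqref{eq:general_simplex_av} --- after which the authors remark that they ``do not see a way to prove that the resulting system of equations is infeasible for every even $t$''. Your proposal supplies precisely that missing ingredient. Instead of exhibiting the kernel explicitly, you identify the degree-$t$ moment matrix as the Gram matrix of a basis of $\mathcal{P}_s$, $s=t/2$, with respect to the flat measure on $\Delta_d$; positive-definiteness is then automatic because $\Delta_d$ is full-dimensional in its affine hull, so $\vb{c}^T M \vb{c} = \int_{\Delta_d}\bigl(\sum_i c_i\phi_i\bigr)^2\dd{p} = 0$ forces the polynomial to vanish identically, hence $\vb{c}=0$. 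Combined with the rank-one decomposition of the same matrix over the design points, which gives $\operatorname{rank}(M)\leq N$, this yields $N_*(d,t)\geq\binom{d+t/2-1}{t/2}$ for \emph{every} even $t$, correctly recovering $d$ at $t=2$ and the paper's $\binom{d+1}{2}$ at $t=4$; it also extends to weighted designs at no extra cost. The trade-off is that the paper's computation is self-contained arithmetic tied to the specific measure, while your argument is basis-free, degree-independent, and makes clear that only full support of the measure matters. Your handling of odd $t$ is also consistent with the paper, which concedes its strategy ``does not extend to odd degrees $t$'': there, reducing a $(2s-1)$-design to a $(2s-2)$-design gives only $\binom{d+s-2}{s-1}$, and closing the gap to the conjectured half-integer binomial would indeed require a M\"oller-type refinement that neither you nor the authors provide.
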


Note that similar argument as for $t = 4$ can be put forward for arbitrary even $t$ by considering the rank of a matrix $M$ given as
\begin{equation}
    M^{\alpha_1\hdots \alpha_t}_{\beta_1\hdots \beta_t} = \ev{\prod_{k=1}^t p_{\alpha_k}p_{\beta_k}}.
\end{equation}
One could then use fully analogous symmetry as above to reduce the dimensionality of the problem from $d^t$ variables down $T$ variables, where $T$ is the number of ways in which the integer $t/2$ can be partitioned. Even though the proof strategy put above is fully clear, we do not see a way to prove that the resulting system of equations is infeasible for every even $t$, and it does not extend to odd degrees $t$.

\section{Cyclic designs based on \texorpdfstring{$\lambda = 1$}{} difference sets} \label{app:cd_anyD_proof}

Let us consider a set $\qty{U^j}_{j=0}^k$, where $U$ is a unitary matrix with eigendecomposition $U = V \Lambda V^\dagger$, and $\Lambda$ is the diagonal matrix of eigenvalues.
\begin{equation}
 \Lambda = \sum_\alpha \lambda_\alpha\op{\alpha} = \sum_\alpha e^{i\mu_\alpha}\op{\alpha},
\end{equation}
where we use an Ansatz $\mu_\alpha = \frac{2\pi}{k+1} N_\alpha$ with $N_\alpha \in \mathbb{N}$.
Furthermore, let us write the matrix $V^\dagger$ as set of column vectors,
\begin{equation}
 V^\dagger = \mqty(\ket{\psi_1} & \hdots & \ket{\psi_d}).
\end{equation}
Using $V^\dagger$, we can rotate the entire $2$-design such that $\qty{ U^iV^\dagger =  V^\dagger\Lambda^i}$. In this way, the problem is reduced to considering the vectors $\ket{\psi_\beta}$ and powers of the eigenvalues $\lambda_\beta$. 

Furthermore, let us assume that the vectors $\qty{\ket{\psi_\beta}}$ provide a simplex 2-design after projection onto the computational basis $\qty{\ket{\alpha}}$, which is a necessary but not sufficient condition to even consider the matrix $U$ to be capable of producing a cyclic 2-design, that is 

\begin{alignat}{5}
 \frac{1}{d} \sum_{\beta = 1}^d \abs{\braket{\psi_\beta}{\alpha}}^2 & = &
 \ev{p_\alpha}_\Delta & = \frac{1}{d} && = \binom{d}{1}^{-1}, \label{eq:simplex_design_1}\\
 \frac{1}{d} \sum_{\beta = 1}^d \abs{\braket{\psi_\beta}{\alpha}}^4 & = &
 \ev{p^2_\alpha}_\Delta & = \frac{2}{d(d+1)} && = \binom{d+1}{2}^{-1}, \label{eq:simplex_design_2}\\ 
 \frac{1}{d} \sum_{\beta = 1}^d \abs{\braket{\psi_\beta}{\alpha}\braket{\nu}{\psi_\beta}}^2 & = &
 \ev{p_\alpha p_\nu}_\Delta & = \frac{1}{d(d+1)} && = \frac{1}{2}\binom{d+1}{2}^{-1}. \label{eq:simplex_design_3}
\end{alignat}

Finally, let us assume a nontrivial relation that all the eigenvalues are different from each other, $N_\alpha \neq N_{\alpha'}$.
We start by considering the Welch bound for $t=1$,

\begin{align}
 \sum_{\beta,\beta'=1}^{d}\sum_{n,n'=0}^{k} \abs{\mel{\psi_{\beta'}}{\Lambda^{n-n'}}{\psi_\beta}}^2 
 & = \sum_{\beta,\beta'=1}^{d}\sum_{n,n'=0}^{k} \abs{\sum_{a = 1}^de^{i(n-n')\mu_{\alpha}}\braket{\psi_{\beta'}}{\alpha}\braket{\alpha}{\psi_\beta}}^2 \\
 & = \sum_{\beta,\beta'=1}^{d}\sum_{n,n'=0}^{k} \sum_{\alpha,\alpha' = 1}^d e^{i(n-n')\qty(\mu_{\alpha} - \mu_{\alpha'})}\braket{\psi_{\beta'}}{\alpha}\braket{\alpha}{\psi_\beta}\braket{\psi_{\beta}}{\alpha  '}\braket{\alpha'}{\psi_{\beta'}} \\
 & = \sum_{\substack{\beta,\beta'=1,\\ \alpha,\alpha' = 1}}^{d}\underbrace{\qty(\sum_{n,n'=0}^{k} e^{i(n-n')\qty(\mu_{\alpha} - \mu_{\alpha'})})}_{(*)}\braket{\psi_{\beta'}}{\alpha}\braket{\alpha}{\psi_\beta}\braket{\psi_{\beta}}{\alpha'}\braket{\alpha'}{\psi_{\beta'}}.
\end{align}
We focus on the underbraced term $(*)$ and leverage the Ansatz,

\begin{equation}
 (*) = \sum_{n,n'=0}^{k} e^{i\frac{2\pi}{k+1}(n-n')(N_\alpha - N_{\alpha'})} = (k+1) \sum_{m=0}^{k} e^{i\frac{2\pi}{k+1}m(N_\alpha - N_{\alpha'})} = (k+1)^2 \delta_{\alpha\,\alpha'}.
\end{equation}
Using this, we obtain
\begin{equation}
 \sum_{\beta,\beta'=1}^{d}\sum_{n,n'=0}^{k} \abs{\mel{\psi_{\beta'}}{\Lambda^{n-n'}}{\psi_\beta}}^2 = (k+1)^2\sum_{\beta,\beta'=1}^{d}\sum_{\alpha = 1}^d\abs{\braket{\psi_{\beta'}}{\alpha}\braket{\alpha}{\psi_\beta}}^2 = (k+1)^2 d,
\end{equation}
where in the last step we have used the fact that vectors $V^\dagger$ form a $2$-design in the probability simplex. Notice that the value $(k+1)^2d$ saturates the Welch bound~\eqref{welch_b_d2}, thus proving that the configuration is a complex projective $1$-design. Now, we shift to calculation of Welch bound for $t=2$. That is,

\begin{align*}
 & \sum_{\beta,\beta'=1}^{d}\sum_{n,n'=0}^{k} \abs{\mel{\psi_{\beta'}}{\Lambda^{n-n'}}{\psi_\beta}}^4 
 = \sum_{\beta,\beta'=1}^{d}\sum_{n,n'=0}^{k} \abs{\sum_{\alpha = 1}^d e^{i(n-n')\mu_{\alpha}}\braket{\psi_{\beta'}}{\alpha}\braket{\alpha}{\psi_\beta}}^4 = \\
  & \sum_{\substack{\beta,\beta'=1,\\ \alpha,\alpha',\nu,\nu' = 1}}^{d}\underbrace{\qty(\sum_{n,n'=0}^{k} e^{i(n-n')(\mu_{\alpha} + \mu_{\nu} - \mu_{\alpha'} - \mu_{\nu'})})}_{(**)}
 \braket{\psi_{\beta'}}{\alpha}\braket{\alpha}{\psi_\beta}\braket{\psi_{\beta}}{\alpha'}\braket{\alpha'}{\psi_{\beta'}} \braket{\psi_{\beta'}}{\nu}\braket{\nu}{\psi_\beta}\braket{\psi_{\beta}}{\nu'}\braket{\nu'}{\psi_{\beta'}},
\end{align*}
where we again look closer at the underbraced expression $(**)$,

\begin{align}
 \sum_{n,n'=0}^{k} e^{i\frac{2\pi}{k+1}(n-n')(N_\alpha + N_\nu - N_{\alpha'} - N_{\nu'})} & = (k+1) \sum_{m=0}^{k} e^{i\frac{2\pi}{k+1}m(N_\alpha + N_\nu - N_{\alpha'} - N_{\nu'})} \\
 & = \begin{cases}
 (k+1)^2 & N_\alpha + N_\nu - N_{\alpha'} - N_{\nu'} = 0\,\text{mod}\,(k+1) \\
 0 & \text{otherwise}.
 \end{cases}
\end{align}

There are three trivial cases, satisfied independently from the values $N_\alpha$: $\alpha = \nu = \alpha' = \nu'$, $\alpha = \alpha'\neq \nu = \nu'$ and $\alpha = \nu' \neq \alpha' = \nu$. We start by considering the first case
\begin{align}
 & \sum_{\beta,\beta'=1}^{d}\sum_{\alpha = 1}^d \abs{\braket{\psi_{\beta'}}{\alpha}\braket{\alpha}{\psi_\beta}}^4 = d^2\sum_{\alpha = 1}^d \qty(\frac{1}{d}\sum_{\beta=1}^{d}\abs{\braket{\psi_{\beta}}{\alpha}}^4)^2 = \frac{4d}{(d+1)^2},
\end{align}
where $(k+1)^2$ factor has been skipped for simplicity and we have used equation~\eqref{eq:simplex_design_2}. 
We consider the two remaining cases together, as they reduce to the same equation,
\begin{align}
 \sum_{\beta,\beta'=1}^{d}\,\sum_{\substack{\alpha,\nu=1\\ (\alpha\not= \nu)}}^d \abs{\braket{\psi_{\beta'}}{\alpha}\braket{\alpha}{\psi_\beta}}^2\abs{\braket{\psi_{\beta'}}{\nu}\braket{\nu}{\psi_\beta}}^2 = 
 d^2\sum_{\substack{\alpha,\nu=1\\ (\alpha\not= \nu)}}^d \qty(\frac{1}{d}\sum_{\beta=1}^d \abs{\braket{\psi_{\beta}}{\alpha}\braket{\nu}{\psi_\beta}}^2)^2 = \frac{d(d - 1)}{(d+1)^2},
\end{align}
where we again skipped $(k+1)^2$ term for simplicity. 
Collecting the three cases together, we have

\begin{equation}
 (k+1)^2d\frac{2d - 2 + 4}{(d+1)^2} = \frac{2(k+1)^2d}{d+1} = \frac{\qty[d (k+1)]^2}{\binom{d+1}{2}},
\end{equation}
which exactly saturates the Welch bound~\eqref{welch_b_d2} for $t=2$. Thus, the only condition for the set of the eigenvalues $\qty{\mu_\eta = \frac{2\pi}{k+1}N_\eta}$ to generate a cyclic 2-design from a basis $V^\dagger$ providing 2-design in the probability simplex is that there the differences $N_\alpha + N_\nu - N_{\alpha'} - N_{\nu'} = 0\;\text{mod} (k+1)$ are zero only for trivial sets of indices. This leads to difference sets with signature $\qty(k+1,\,d,\,1)$ being of interest. Apart from the greedy construction of the Mian-Chowla sequence, for any $d$ one can construct directly the set $\qty{2^i}_{i=0}^{d-1}$, which provides a difference set $(2^d, d, 1)$ for every $d$; it is easily proved as $2^i - 2^j = 2^{i'} - 2^{j'} \text{mod}\,2^d$ only for $i=i'$ and $j = j'$.

\section{Cyclic designs based on random Hamiltonians} \label{app:randH_anyD_proof}
Let us revisit a set $\qty{U^j}_{j=0}^k$ with $U = V \Lambda V^\dag$ 
where $\Lambda = \text{diag}\qty(\lambda_1,\hdots,\lambda_d) = \text{diag}\qty(e^{i\mu_1},\hdots,e^{i\mu_d})$
    . However, this time, let us take the numbers $\mu_\alpha$ from a uniform distribution over the interval $[0,2\pi)$. We will continue working in the basis using $V$ as rotation, $\qty{ U^i V^\dagger= V^\dagger\Lambda^i }$, so that the problem is reduced to considering the eigenvectors $\ket{\psi_\beta}$ and powers of the eigenvalues $\lambda_\beta$. Moreover, we will keep using the assumption that $V^\dagger$ provides by decoherence a 2-design in the simplex, fulfilling equations~(\ref{eq:simplex_design_1} - \ref{eq:simplex_design_3}).
    
Let us focus now on calculating the Welch bound~\eqref{welch_b_d2} for $t=2$. 

\begin{align}
 & \sum_{\beta,\beta'=1}^{d}\sum_{n,n'=0}^{k} \abs{\mel{\psi_{\beta'}}{\Lambda^{n-n'}}{\psi_\beta}}^4 
 = \sum_{\beta,\beta'=1}^{d}\sum_{n,n'=0}^{k} \abs{\sum_{\alpha = 1}^de^{i(n-n')\mu_{\alpha}}\braket{\psi_{\beta'}}{\alpha}\braket{\alpha}{\psi_\beta}}^4 = \\
 = & \sum_{\beta,\beta'=1}^{d}\sum_{\substack{\alpha,\alpha'=1,\\\nu,\nu' = 1}}^d \sum_{n,n'=0}^{k} e^{i(n-n')(\mu_{\alpha} + \mu_{\nu} - \mu_{\alpha'} - \mu_{\alpha'})}
 T_{\alpha\alpha'\nu\nu'}^{\beta\beta'} \\ 
 = & \sum_{\beta,\beta'=1}^{d}\sum_{\substack{\alpha,\alpha'=1,\\\nu,\nu' = 1}}^d \qty(\underbrace{(k+1)}_{\sum_{n=n'}\hdots} + \sum_{n\neq n'}^{k} e^{i(n-n')(\mu_{a} + \mu_{b} - \mu_{a'} - \mu_{b'})})
 T_{\alpha\alpha'\nu\nu'}^{\beta\beta'},
\end{align}
where for convenience we defined
$$
 T_{\alpha\alpha'\nu\nu'}^{\beta\beta'} = \braket{\psi_{\beta'}}{\alpha}\braket{\alpha}{\psi_\beta}\braket{\psi_{\beta}}{\alpha'}\braket{\alpha'}{\psi_{\beta'}} \braket{\psi_{\beta'}}{\nu}\braket{\nu}{\psi_\beta}\braket{\psi_{\beta}}{\nu'}\braket{\nu}{\psi_{\beta'}}.
$$

Separating the $n=n'$ case, we see that each sum over $\alpha,\alpha',\nu,\nu'$ acts only on the projectors on the basis vectors, leading to expressions

\begin{equation}
 \sum_\alpha \braket{\psi_\beta}{\alpha}\braket{\alpha}{\psi_{\beta'}} = \delta_{\beta\beta'},
\end{equation}
and thus
\begin{equation}
 (k+1)\sum_{\beta,\beta'=1}^{d}\sum_{\substack{\alpha,\alpha'=1,\\ \beta,\beta' = 1}}^d 
 T_{\alpha\alpha'\nu\nu'}^{\beta\beta'} = (k+1)d.
\end{equation}
On the other hand, expression of the form
$
 e^{i(n-n')(\mu_{\alpha} + \mu_{\nu} - \mu_{\alpha'} - \mu_{\nu'})}
$
has been already considered in Appendix~\ref{app:cd_anyD_proof}, where the values for trivial zeros of the exponent, $\alpha=\nu=\alpha'=\nu'$, $\alpha=\alpha',\nu=\nu'$ and $\alpha=\nu',\nu=\alpha'$, have been calculated. The remaining terms will be killed by averaging with respect to the phases $\qty{\mu_\eta}$. Thus, we find that 
\begin{equation}
 \ev{\sum_{\beta,\beta'=1}^{d}\sum_{\substack{\alpha,\alpha'=1,\\ \nu,\nu' = 1}}^d \sum_{n\neq n'}^{k} e^{i(n-n')(\mu_{\alpha} + \mu_{\nu} - \mu_{\alpha'} - \mu_{\nu'})}
 T_{\alpha\alpha'\nu\nu'}^{\beta\beta'}} = k(k+1) \frac{d^2}{\binom{d+1}{2}},
\end{equation}
where for brevity we define the average over the random eigenvalues,
$$
\ev{F} = \qty(\frac{1}{2\pi})^d\int_{0}^{2\pi}\dd{\mu_1}\hdots\dd{\mu_d} F.
$$

Putting the two expressions together allows us to conclude the calculation,

\begin{equation}
 \ev{\sum_{\beta,\beta'=1}^{d}\sum_{\substack{\alpha,\alpha'=1,\\ \nu,\nu' = 1}}^d \sum_{n, n' = 1}^{k} e^{i(n-n')(\mu_{\alpha} + \mu_{\nu} - \mu_{\alpha'} - \mu_{\nu'})}
 T_{\alpha\alpha'\nu\nu'}^{\beta\beta'}} = \frac{\qty[(k+1)d]^2}{\binom{d+1}{2}} + \frac{(d-1) d (k+1)}{d+1}.
\end{equation}
This already shows that for large orders $k$ unitary operations with properly chosen eigenbasis and random eigenvalues taken independently from the uniform distribution over $[0,2\pi)$ interval, acting on the computational basis, converge to cyclic 2-designs. It is more readily visible after transforming the result to the form congruent with~\eqref{WB}
\begin{align}
 \frac{1}{\qty[d(k+1)]^2}\ev{\sum_{\beta,\beta'=1}^{d}\sum_{\substack{\alpha,\alpha'=1,\\\nu\nu' = 1}}^d \sum_{n, n' = 1}^{k} e^{i(n-n')(\mu_{\alpha} + \mu_{\nu} - \mu_{\alpha'} - \mu_{\nu'})}
 T_{\alpha\alpha'\nu\nu'}^{\beta\beta'}} & = \frac{1}{\binom{d+1}{2}} + \frac{(d-1)}{ d (k+1)(d+1)} \\
 & = \frac{1}{\binom{d+1}{2}} + O\qty(\frac{1}{k+1}),
\end{align}
where the last term vanishes as $k\rightarrow\infty$. The average value $\ev{\epsilon} = 2(d-1)/(k_1)$ is retrieved automatically.

\subsection{Reconstruction of states using \texorpdfstring{$\epsilon$}{}-2-designs} \label{app:random_reco}
Consider $\epsilon$-2-design $\qty{\ket{\psi_j}}$ as given in Definition~\ref{dfn:epsilon_des} and define
\begin{equation}
 \tilde{S} = \frac{1}{m}\sum_{j=1}^m \op{\psi_j}^{\otimes 2} = \underbrace{\frac{1}{d(d+1)}\sum_{\alpha\beta}\op{\alpha\beta}+\op{\alpha\beta}{\beta\alpha}}_{=S} + \Delta,
\end{equation}
where $S = \frac{2}{d(d+1)}\Pi_{\text{sym}}$ is the projection operator on the symmetric subspace and the correction term $\Delta$ has bounded $\infty$-norm, $\norm{\Delta}_{\infty} \leq \delta$. It is simple to show, following Scott~\cite{scott2006tight}, that
\begin{equation}
 \Tr_A\qty[\qty(\rho\otimes\mathbbm{1}) \tilde{S}] = \frac{1}{m}\sum_{j=1}^m p_j  \op{\psi_j} = \frac{\rho + \mathbbm{1}}{d(d+1)} + \Tr_A\qty[\qty(\rho\otimes\mathbbm{1}) \Delta],
\end{equation}
with $p_j = \ev{\rho}{\psi_j}$. From the above we can get an approximate reconstruction formula
\begin{equation}
 \tilde{\rho} = \rho + d(d+1)\Tr_A\qty[\qty(\rho\otimes\mathbbm{1}) \Delta] = \frac{d}{m} \sum_{j=1}^m (p_j (d+1) - 1) \op{\psi_j}.
\end{equation}

We may calculate the quality of this approximation using the infinity norm

\begin{equation}
 \norm{\tilde{\rho}-\rho}_\infty = d(d+1)\norm{\Tr_A\qty[\qty(\rho\otimes\mathbbm{1}) \Delta]}_\infty \leq d(d+1)\norm{\Delta}_\infty,
\end{equation}
where the last inequality follows from a simple chain of equalities,

\begin{align}
 \norm{\Tr_A\qty[\qty(\rho\otimes\mathbbm{1}) \Delta]}_\infty = \max_{\ket{\psi_A}\in\mathcal{H}_{A},\ket{\psi_B}\in\mathcal{H}_B}\abs{\ev{\Delta}{\psi_A\otimes\psi_B}} \leq \max_{\ket{\psi}\in\mathcal{H}_{AB}} \abs{\ev{\Delta}{\psi}} = \norm{\Delta}_\infty,
\end{align}
which finishes the proof.

 \section{Examples of bases producing simplex $2$-designs}
 \label{app:decoherence_examples}
 In this section, we show exemplary numerical bases $V_d$ for dimensions $d=5,6,7$ which produce simplex $2$-designs by decoherence:
\begin{itemize}
    \item Dimension $d=5$:
    {\scriptsize
    \begin{equation*}
    V_5=    \begin{pmatrix}
-0.204 - 0.197i & 0.316 + 0.243i & -0.156 + 0.376i & 0.017 - 0.278i & -0.309 - 0.650i \\
-0.282 + 0.666i & -0.327 + 0.036i & -0.239 - 0.277i & 0.267 + 0.103i & -0.291 - 0.261i \\
0.045 - 0.363i & -0.724 - 0.021i & 0.327 + 0.119i & 0.262 - 0.269i & -0.283 - 0.018i \\
0.331 - 0.045i & 0.258 - 0.205i & 0.207 - 0.690i & 0.068 - 0.422i & -0.105 - 0.259i \\
0.373 - 0.112i & -0.107 + 0.298i & 0.193 - 0.151i & -0.330 + 0.639i & -0.279 - 0.306i
\end{pmatrix}
    \end{equation*}
    }
The error, given by 1-normed based distance from a exact simplex $2$-design for this matrix is $\varepsilon_5=8.73\times10^{-13}$.
    \item Dimension $d=6$:
    {\scriptsize
    \begin{equation*}
        V_6=\begin{pmatrix}
-0.384 + 0.108i & 0.130 + 0.437i & -0.623 - 0.243i & -0.319 + 0.0929i & 0.0556 - 0.0702i & 0.187 - 0.179i \\
-0.299 - 0.595i & -0.0731 - 0.0112i & -0.256 + 0.160i & 0.374 + 0.278i & -0.312 - 0.263i & -0.0976 + 0.258i \\
-0.330 - 0.320i & 0.265 - 0.599i & -0.0245 - 0.133i & -0.0324 + 0.0956i & 0.260 + 0.358i & 0.0802 - 0.359i \\
0.207 + 0.105i & -0.128 - 0.143i & -0.369 - 0.313i & 0.136 - 0.176i & -0.546 + 0.366i & -0.403 - 0.178i \\
-0.168 - 0.315i & 0.0912 + 0.315i & 0.235 + 0.237i & -0.195 - 0.367i & -0.0180 - 0.128i & -0.503 - 0.460i \\
-0.0305 - 0.0636i & -0.414 - 0.205i & 0.0692 + 0.304i & -0.634 + 0.195i & -0.424 + 0.0369i & 0.228 - 0.106i
\end{pmatrix}
    \end{equation*}
    }
 With an error of $\varepsilon_6=9.87\times10^{-13}$.

\item Dimension $d=7$:
    {\scriptsize
    \begin{equation*}
        V_7=\begin{pmatrix}
-0.34 + 0.12i & 0.63 + 0.16i & 0.27 + 0.25i & -0.05 - 0.11i & -0.34 - 0.01i & -0.03 + 0.36i & -0.17 + 0.12i \\
0.15 + 0.15i & -0.07 - 0.21i & 0.36 + 0.27i & 0.19 - 0.17i & -0.16 - 0.23i & -0.43 - 0.48i & 0.01 + 0.39i \\
0.23 - 0.35i & 0.39 - 0.03i & -0.08 - 0.05i & -0.58 - 0.25i & -0.15 - 0.13i & 0.08 - 0.41i & 0.13 - 0.17i \\
0.15 - 0.14i & 0.31 + 0.08i & -0.01 - 0.16i & 0.31 - 0.21i & 0.22 + 0.61i & -0.29 - 0.19i & -0.35 - 0.13i \\
0.33 - 0.14i & 0.00 + 0.26i & 0.63 + 0.11i & -0.22 + 0.38i & 0.40 - 0.05i & 0.09 + 0.06i & -0.18 + 0.00i \\
-0.14 - 0.18i & -0.21 - 0.35i & 0.37 + 0.05i & 0.17 - 0.36i & -0.05 - 0.21i & 0.14 + 0.08i & -0.26 - 0.59i \\
-0.12 - 0.64i & -0.21 - 0.03i & 0.08 + 0.28i & 0.11 - 0.10i & -0.16 + 0.34i & 0.34 + 0.02i & 0.09 + 0.41i
\end{pmatrix}
    \end{equation*}
    }
In this case the error is $\varepsilon_7=9.63\times10^{-13}$.

 Note that the matrices presented here have been rounded, with the actual achievable precision on the order of $10^{-13}$. The error is calculated by the 1-norm of the difference of the averages obtained by this algorithm and the objective values. The full matrices and the code used for generating this solutions is available at \cite{GitHubTest}. 
\end{itemize}

\section{Examples of cyclic designs with \texorpdfstring{$k < d(d-1)$}{}}\label{app:num}



In this section, we will show the numerical solutions for $d=4$ that we have found but do not belong to the analytical solutions shown in previous sections, \textit{i.e.}, solutions that do not depend on the same difference sets with $\lambda=1$. First, we present solutions for which we have applied restrictions on the parameters $C_j$ in (\ref{m_h}), we obtained a unitary matrix and show its eigenvalues for certain number of bases.
\begin{itemize}
 \item $7$ bases, $k=6$
 {\scriptsize \begin{equation}
 U_7= \left(
 \begin{array}{cccc}
 0.1565 & 0.2616 & 0.3880 & 0.1939 \\
 -0.1560-0.1336 i & -0.1451-0.2438 i & 0.1643+0.2125 i & 0.1368+0.1649 i \\
 -0.0198+0.2456 i & 0.0573-0.2178 i & 0.0198-0.2456 i & -0.0573+0.2178 i \\
 0.1176-0.1236 i & -0.2055+0.1403 i & 0.2659-0.1841 i & -0.1780+0.1674 i \\
 \end{array}
 \right).
 \end{equation} }
 
 Dephased eigenvalues $\sigma_7=\left(\exp\left(\frac{2\pi i}{7}\times\{0, 1,3,4\}\right)\right).$

 \item $9$ bases, $k=8$
 {\scriptsize \begin{equation}
 U_9= \left(
 \begin{array}{cccc}
 0.1380 & 0.4553 & 0.2527 & 0.1540 \\
 -0.1886+0.0371 i & 0.4377-0.0542 i & -0.1522-0.1099 i & -0.0968+0.1269 i \\
 0.1386-0.1618 i & 0.0500+0.1124 i & -0.3348-0.0381 i & 0.1462+0.0875 i \\
 0.1854-0.0469 i & 0.1448-0.1319 i & -0.0552+0.1925 i & -0.2750-0.0137 i \\
 \end{array}
 \right).
 \end{equation}}
 
 Dephased eigenvalues $\sigma_9=\left(\exp\left(\frac{2\pi i}{9}\times\{0, 1,3,4\}\right)\right).$
 
 \item $11$ bases, $k=10$
 {\scriptsize\begin{equation}
 U_{11}= \left(
 \begin{array}{cccc}
 0.3249 & 0.2358 & 0.2121 & 0.2272 \\
 -0.2657-0.0847 i & 0.2392+0.0810 i & 0.2427+0.0288 i & -0.2162-0.0250 i \\
 0.1402-0.2193 i & -0.0998+0.2376 i & 0.0451-0.2352 i & -0.0855+0.2169 i \\
 -0.1391+0.2334 i & -0.0807+0.2086 i & 0.0807-0.2086 i & 0.1391-0.2334 i \\
 \end{array}
 \right).
 \end{equation}}
 
 Dephased eigenvalues $\sigma_{11}=\left(\exp\left(\frac{2\pi i}{11}\times\{0, 2,4,6\}\right)\right).$
\end{itemize}

We also Searched for solutions without any restrictions on parameters $C_j$. In the following we show these solutions:
\begin{itemize}
 \item $5$ bases, $k=4$ (Cyclic MUB)
 {\scriptsize\begin{equation}
 \tilde{U}_{5}= \left(
 \begin{array}{cccc}
 0.2500 & 0.2500 & 0.2500 & 0.2500 \\
 -0.2395+0.0718 i & -0.2395+0.0718 i & 0.2395-0.0718 i & 0.2395-0.0718 i \\
 0.2001+0.1499 i & -0.2001-0.1499 i & -0.2001-0.1499 i & 0.2001+0.1499 i \\
 0.2410+0.0664 i & -0.2410-0.0664 i & 0.2410+0.0664 i & -0.2410-0.0664 i \\
 \end{array}
 \right).
 \end{equation}}
 

 
 Dephased eigenvalues $\Tilde{\sigma}_5=\left(\exp\left(\frac{2\pi i}{5}\times\{0,2,3,4 \}\right)\right). $ 
 
 \item $7$ bases, $k=6$
 {\scriptsize\begin{equation}
 \tilde{U}_7= \left(
 \begin{array}{cccc}
 0.3918 & 0.1304 & 0.1894 & 0.2884 \\
 -0.0308-0.2224 i & 0.2120-0.1483 i & -0.0792+0.1593 i & -0.1019+0.2115 i \\
 -0.1537-0.2263 i & 0.0023+0.1476 i & 0.1534+0.2714 i & -0.0019-0.1927 i \\
 -0.3214-0.0984 i & 0.0988+0.1222 i & -0.0988-0.1222 i & 0.3214+0.0984 i \\
 \end{array}
 \right).
 \end{equation}}
 
 
 
 Dephased eigenvalues $\Tilde{\sigma}_7=\left(\exp\left(\frac{2\pi i}{7}\times\{0,1,2,3 \}\right)\right). $ 
 
 \item $8$ bases, $k=7$
 {\scriptsize \begin{equation}
 \tilde{U}_8=\left(
 \begin{array}{cccc}
 0.1953 & 0.2632 & 0.3040 & 0.2374 \\
 -0.1895-0.1245 i & 0.1895+0.1245 i & -0.2497-0.0990 i & 0.2497+0.0990 i \\
 -0.1005+0.2205 i & -0.0552+0.2456 i & 0.1165-0.2903 i & 0.0392-0.1759 i \\
 0.2028+0.0767 i & -0.2780-0.0435 i & -0.2013-0.0324 i & 0.2764-0.0008 i \\
 \end{array}
 \right).
 \end{equation}}
 
 
 
 Dephased eigenvalues $\Tilde{\sigma}_8=\left(\exp\left(\frac{2\pi i}{8}\times\{0,1,3,5 \}\right)\right). $ 
 
 \item $11$ bases, $k=10$
 {\scriptsize \begin{equation}
 \tilde{U}_{11}= \left(
 \begin{array}{cccc}
 0.0989 & 0.2591 & 0.3738 & 0.2681 \\
 0.1056+0.1481 i & 0.2093-0.2065 i & -0.3020-0.0793 i & -0.0129+0.1376 i \\
 0.1710+0.0568 i & -0.2501+0.1659 i & 0.0636-0.3213 i & 0.0154+0.0986 i \\
 0.0286-0.1509 i & 0.0849-0.0910 i & 0.1591-0.0628 i & -0.2727+0.3047 i \\
 \end{array}
 \right).
 \end{equation}}
 
 
 
 Dephased eigenvalues $\Tilde{\sigma}_{11}=\left(\exp\left(\frac{2\pi i}{11}\times\{0,3,4,6 \}\right)\right). $ 
 
 \item $12$ bases, $k=11$
 {\scriptsize \begin{equation}
 \tilde{U}_{12}= \left(
 \begin{array}{cccc}
 0.6423 & 0.0139 & 0.0272 & 0.3167 \\
 0.3076+0.0853 i & 0.0267+0.0366 i & -0.0770+0.0650 i & -0.2573-0.1870 i \\
 -0.2470-0.0170 i & 0.0544-0.0414 i & -0.0797+0.0521 i & 0.2723+0.0063 i \\
 0.0775-0.2460 i & -0.0832+0.0042 i & -0.0668+0.0526 i & 0.0725+0.1892 i \\
 \end{array}
 \right).
 \end{equation}}
 
 
 
 Dephased eigenvalues $\Tilde{\sigma}_{12}=\left(\exp\left(\frac{2\pi i}{12}\times\{0,1,5,10 \}\right)\right). $ 
\end{itemize}

All matrices are available in an online repository~\cite{data}
\end{widetext}

{\color{white}a}
\break
\bibliographystyle{quantum_abbr}
\bibliography{references}

\end{document}